\newcommand{\citet}[1]{\citeauthor*{#1}~\cite{#1}}
\newcommand{\citep}[1]{\parencite[#1]}
\newtheorem{theorem}{Theorem}
\newtheorem{assumption}{Assumption}
\newtheorem{lemma}{Lemma}
\newtheorem{corollary}{Corollary}
\newtheorem{example}{Example}
\crefname{algocf}{alg.}{algs.}
\Crefname{algocf}{Algorithm}{Algorithms}
\newcommand{\argmax}{\textrm{argmax}}
\newcommand{\set}[1]{\left\{ #1 \right\}}
\newcommand{\abs}[1]{\left| #1 \right|}
\newcommand{\Prs}{\mathrm{Pr}_s}
\newcommand{\Pro}{\mathrm{Pr}_o}
\newcommand{\Prg}{\mathrm{Pr}_g}
\title{The Surprising Effectiveness of SP Voting with Partial Preferences}
\begin{document}

\author{Hadi Hosseini\\  Penn State University, USA\\ \texttt{hadi@psu.edu}
\and Debmalya Mandal\\University of Warwick, UK\\ \texttt{Debmalya.Mandal@warwick.ac.uk}
\and Amrit Puhan\footnote{Authors are ordered in alphabetical order.}\\ Penn State University, USA\\ \texttt{avp6267@psu.edu}
}

\maketitle
\begin{abstract}
  We consider the problem of recovering the ground truth ordering (ranking, top-$k$, or others) over a large number of alternatives. 
The wisdom of crowd is a heuristic approach based on Condorcet's Jury theorem to address this problem through collective opinions.
This approach fails to recover the ground truth when the majority of the crowd is misinformed.
The \emph{surprisingly popular} (SP) algorithm~\parencite{prelec2017solution} is an alternative approach that is able to recover the ground truth even when experts are in minority. The SP algorithm requires the voters to predict other voters' report in the form of a full probability distribution over all rankings of alternatives. However, when the number of alternatives, $m$, is large, eliciting the prediction report or even the vote over $m$ alternatives might be too costly.
In this paper, we design a scalable alternative of the SP algorithm which only requires eliciting partial preferences from the voters, and propose new variants of the SP algorithm. In particular, we propose two versions---\emph{Aggregated-SP} and \emph{Partial-SP}---that ask voters to report vote and prediction on a subset of size $k$ ($\ll m$) in terms of top alternative, partial rank, or an approval set. Through a large-scale crowdsourcing experiment on MTurk, we show that both of our approaches outperform conventional preference aggregation algorithms for the recovery of ground truth rankings, when measured in terms of Kendall-Tau distance and Spearman's $\rho$. 
We further analyze the collected data and demonstrate that voters' behavior in the experiment, including the minority of the experts, and the SP phenomenon, can be correctly simulated by a  concentric mixtures of Mallows model.
Finally, we provide theoretical bounds on the sample complexity of SP algorithms with partial rankings to demonstrate the theoretical guarantees of the proposed methods.
\end{abstract}

\doparttoc
\faketableofcontents

\section{Introduction} \label{sec:intro}

The wisdom of the crowds is a systematic approach to statistically combine the opinions of a diverse group of (non-expert) individuals to achieve a final collective truth.
It dates back to Sir Francis Galton's observation---based on Aristotle's hypothesis---that the point estimation of a continuous value using the noisy individual opinions can recover its true value with high accuracy \parencite{galton1949vox}.
In the modern era, the wisdom of the crowd has been the foundation of legal, political, and social systems with the premise that one can recover the truth by collecting the opinion of a large number of diverse individuals (e.g. trial by jury, election polling, and Q\&A platforms such as Reddit/Quora).

The formal arguments of this phenomenon---rooted in social choice theory---is provided by  \textit{Condorcet's Jury Theorem} \parencite{de2014essai}, which states that under the condition of independent opinions and each individual having more than a $50\%$ chance of selecting the correct answer, the probability of the majority decision being correct increases as the size of the crowd increases.
However, this approach may fail when the majority of the crowd are misinformed (less than 50\% chance of selecting the correct answer) \parencite{de2017efficiently} or are systematically biased \parencite{simmons2011intuitive}. In other words, when the experts are in minority, simply aggregating individuals' opinions (regardless of the aggregation method) cannot recover the truth.

To overcome this challenge, \citet{prelec2017solution} proposed a simple, yet effective, method called the \textit{Surprisingly Popular} (SP) algorithm, which is able to uncover the ground truth even when the majority opinion is wrong.
The approach works by asking each individual about their opinion (the \textit{vote}) along with an additional \textit{meta-question} to predict the majority opinion of other individuals (the \textit{prediction}).
% fraction of other individuals agree with her vote (the \textit{prediction}).
The surprisingly popular algorithm then selects an answer whose actual frequency in the votes is greater than its average
predicted frequency, and it will provably recover the correct answer with probability $1$, as the number of individuals grows in the limit, even when experts are in minority.

While the SP algorithm is effective in estimating a continuous value (e.g. the value of a painting) or a binary vote (e.g. ``Is São Paulo the capital of Brazil?''), it cannot be directly applied to recover true ordinal rankings over a set of $m$ alternatives due to the large number of votes ($m!$), and more importantly, eliciting predictions over a complete rankings.
\citet{hosseini2021surprisingly} extended this approach to rankings by proposing an algorithm, called \textit{Surprisingly Popular Voting}, that can accurately recover the ground-truth ranking over multiple alternatives by eliciting a complete ranking as a vote and only a single majority prediction (as opposed to full probability distributions over $m!$ rankings).\footnote{This paper also explored various elicitation techniques combining vote and prediction questions based on only top choice and complete rankings.}
Despite its success in finding the ground-truth ranking over a small number of alternatives, it remains unclear how to adapt it to settings with large number of alternatives where only partial preferences (e.g. pairwise comparisons or partial ranks) can be elicited.
Thus, it raises the following questions:
% Despite its effectiveness in estimating a continues value (e.g. the value of a painting) or a binary vote (e.g. ``Is Sao Paolo the capital of Brazil?''), the SP algorithm cannot be directly applied to recover true rankings over a set of alternatives due to the large number of votes ($m!$), and more importantly, eliciting predictions over a complete rankings.
%
\begin{quote}
    \textit{How can we design scalable algorithms based on the surprisingly popular method that recovers the ground truth only by eliciting partial preferences from voters?
    What elicitation formats and aggregation algorithms are more effective in recovering the full ranking over all alternatives?}
\end{quote}

\textbf{Our Contributions.} We focus on developing methods, based on the surprisingly popular approach, that \textit{only} elicit partial vote and prediction information to find the full ranking.
Given a set of $m$ alternatives, we ask individuals to provide their rank-ordered vote and predictions on a subset of size  $k$ ($\ll m$) of alternatives. Informally, we ask them to identify the most preferred alternative among the $k$ choices (\texttt{Top}), select the $t < k$ most preferred alternatives with no order (\texttt{Approval($t$)}), or provide a rank-ordered list of all $k$ alternatives (\texttt{Rank}).
The precise formulation is provided in \Cref{elicitationformats}.

Given that the SP algorithm \parencite{prelec2017solution} and its extension to rankings \parencite{hosseini2021surprisingly} do not generalize to partial preferences with large number of alternatives, we design two novel aggregation methods, namely Partial-SP and Aggregated-SP algorithms.
On a high level, these algorithms use a carefully crafted method to select subsets of size $k \ll m$ for vote and prediction elicitation, and apply the SP method either independently on each subset (Partial-SP) or on the aggregated (potentially partial) votes and ranks (Aggregated-SP).

We conduct a human-subject study with 432 participants recruited from Amazon's Mechanical Turk (MTurk) to empirically evaluate the performance of our SP algorithms using metrics such as the \textit{Kendall-Tau distance} from the full ground truth ranking and \textit{Spearman's rank correlation} coefficient.
We consider several classical vote aggregation methods (e.g. Borda, Copeland, Maximin, Schulze) as benchmarks---rooted in the computational social choice theory---that operate solely on votes (and not prediction information).
Our results show that the SP voting algorithms perform significantly better than the classical methods when the vote and prediction information only contain partial rankings.
We also observe that SP voting algorithms are effective even when  restricted to approval votes. %\HH{Anything else to empahsize here?}

Moreover, we demonstrate that voters' behavior in the experiment, including the minority of the experts can be correctly simulated by a concentric mixtures of Mallows model~\parencite{mallows1957non, CI21}.
Finally, we provide theoretical bounds on the sample complexity of the SP algorithms with partial preferences to further demonstrate the theoretical guarantees of the proposed methods. We show that the sample complexity only depends on the size of the subset $k$ which is significantly smaller than $m$. %\HH{Deb: Add about sample complexity depending on $k$}

\subsection{Related Work}

\textbf{Information Elicitation}. Various information elicitation schemes \parencite{prelec2004bayesian, prelec2017solution, witkowski2012robust, dasgupta2013crowdsourced} attempt to incentivize people to reveal useful information by designing reward mechanisms, often through the investment of efforts. Our work is primarily related to the \emph{surprisingly popular algorithm}~\parencite{prelec2017solution} which is a a novel second-order information based elicitation scheme. This framework has since been used to incentivize truthful behaviour in agents \parencite{prelec2004bayesian, schoenebeck2021wisdom, schoenebeck2023two}, mitigate biases in academic peer review \parencite{lu2024calibrating}, elicit expert knowledge \parencite{kong2018eliciting}, forecast geopolitical events \parencite{MRP20}, and aggregate information \parencite{chen2023wisdom}. Our study builds upon this literature, specifically addressing the challenges in rank recovery. Originally, the SP algorithm by \citet{prelec2017solution} required data on all $m!$ potential rankings for $m$ alternatives , a requirement that becomes impractical as $m$ increases. \citet{hosseini2021surprisingly} addressed this by developing a Surprisingly Popular Voting algorithm that leverages pairwise preference data across $\binom{m}{2}$ alternatives. This approach doesn't scale when $m$ is large, and our contribution lies in advancing this methodology by proposing a scalable generalization of the Surprisingly Popular Voting method with partial preferences.

\textbf{Preference Aggregation.} Information Aggregation by eliciting votes from voters is a well-studied problem in social choice theory. The aggregation rules proposed by \citet{de2014essai}, \citet{borda1781m}, \citet{copeland1951reasonable}, \citet{young1977extending}, and many others focus on information aggregation by eliciting votes. These rules can be adapted to elicit ranked information from voters and then aggregate them into a single ranking representing the collective opinion of the crowd. \parencite{boehmer2023rank}. Information Aggregation has also been examined from a statistical perspective, where the aggregated ranking is viewed as the maximum likelihood estimate of the population's rankings \parencite{de2014essai, conitzer2009preference, xia2010aggregating, conitzer2012common}. Within this framework, individual votes are considered outcomes of probabilistic models such as the Thurstonian model, Bradley-Terry model, Mallows' model, or Plackett-Luce model \parencite{marden1996analyzing}.

\textbf{Aggregation of Partial Preferences.}  In situations where it is difficult or not necessary to elicit complete rankings from voters, partial preferences are used. Partial vote aggregation has different solution concepts \parencite{brandt2016handbook}. Partial preferences can be used to conclude which alternatives are necessary and possible winners based on the preference profiles \parencite{konczak2005voting, conitzer2007elections, walsh2007uncertainty, baumeister2012taking, betzler2010towards, xia2011determining}. Alternatively, a regret based approach can be used to assess the quality of a winning alternative where the optimal alternative is the one that minimizes regret \parencite{lu2011robust}. Apart from these epistemic notions, a lot of work has been done on probabilistic analysis of winners from partial profiles \parencite{bachrach2010probabilistic, hazon2012evaluation, lu2011learning, lu2011vote}. To minimize the information elicited from a population, it is crucial to understand the methods used for eliciting partial preferences. It can either be by minimizing the amount of information communicated by each voter in their answer \parencite{conitzer2005communication, travis2012communication, sato2009informational} or by reducing the number of queries that each voter needs to answer \parencite{procaccia2008note, conitzer2007elections}. In either of these cases, the objective is still to determine the winner accurately. Cognitive complexity also plays an important part in this, as all these notions are highly correlated. Finally, the recovery of complete rankings from partial preferences is another solution concept that is studied \parencite{dwork2001rank, schalekamp2009rank}. Due to the combinatorial nature of the rankings, winner determination, communication complexity, query complexity, and cognitive complexity are all relevant here. This is where our research work contributes.

\section{Model}
\label{Model}

 In this section, we formally define the model for Surprisingly Popular Voting in the context of partial preferences. Let $A$ = \{$a_1, a_2, ..., a_m$\} denote the set of $m$ possible alternatives. The set $\mathcal L(A)$ represents all possible complete rankings over the alternatives. Let $\sigma \in \mathcal L(A)$ represent a complete ranking of the $m$ possible alternatives. 
 We denote the ground truth ranking by $\pi^\star \in \mathcal L(A)$; which is assumed to be drawn from a prior $P(\cdot)$ over $\mathcal{L}(A)$. Voter $i$ observes a ranking $\pi_i$ that is assumed to be a noisy version of the ground truth ranking $\pi^\star$. We will write $\Prs(\pi_i \mid \pi^\star)$ to denote the probability that the voter $i$ observes her ranking $\pi_i$ given the ground truth $\pi^\star$.

 Given voter $i$'s ranking $\pi_i$ and the prior $P(\cdot)$, voter $i$ can compute the posterior distribution over the ground truth using the Bayes rule.
 \begin{equation}\label{defn:full-posterior}
    \textstyle
        \Prg(\pi^\star \mid \pi_i) = \frac{\Prs(\pi_i \mid \pi^\star) \cdot  P(\pi^\star)}{\sum_{\pi' \in \mathcal L(A) }{\Prs(\pi_i|\pi') \cdot P(\pi')}}
    \end{equation}
Using the posterior over the ground truth, voter $i$ can also compute a distribution over the rankings observed by another voter.
   \begin{equation}
    \textstyle
         \Pro(\pi_j \mid \pi_i) = \sum_{\pi' \in \mathcal L(A)}{\Prs(\pi_j \mid \pi') \cdot \Prg(\pi' \mid \pi_i)}   
    \end{equation}
 The \emph{surprisingly popular algorithm}~\parencite{prelec2017solution} asks voters to report their votes, and posterior over others' votes. For each ranking $\pi'$, it then computes the frequency $f(\pi') = \frac{1}{n} \sum_{i} \mathbf{1}[\pi = \pi']$, and posterior $g(\pi \mid \pi') = \frac{1}{\abs{\set{i: \pi_i = \pi'}} } \sum_{i: \pi_i = \pi'} \Pro(\pi \mid \pi_i)$, and finally picks the ranking with highest \emph{prediction normalized votes}.\footnote{This is the direct application of SP algorithm~\parencite{prelec2017solution} by considering $m!$ possible ground truths.}
  \begin{equation}
    \textstyle
        \widehat{\pi} \in \textrm{argmax}_\pi \overline{V}(\pi) = f(\pi) \cdot \sum_{\pi'\in\Pi } \frac{g(\pi'\mid \pi)}{ g(\pi\mid \pi')}
    \label{eq:prediction_normalized_vote}
    \end{equation}
 %all other rankings $ \pi \in \{\mathcal L(A)\backslash \pi^*\}$ are assumed to be a noisy version of $\sigma^*$. 

As observed by \citet{hosseini2021surprisingly}, eliciting full posterior and even the vote might be prohibitive if the number of alternatives $m$ is huge. In this work, we are concerned about eliciting partial rankings over subsets of size $k \ll m$. Let us fix a subset $T \subseteq A$ of size $k$. Then the probability of a partial ranking $\sigma_i$ is given as
\begin{equation}\label{defn:prs-sigma-to-pi}
\textstyle
\Prs(\sigma_i \mid \pi^\star) = \sum_{\pi: \pi \triangleright \sigma_i} \Prs(\pi \mid \pi^\star)
\end{equation}
Here we use the notation $\pi \triangleright \sigma_i$ to indicate that the ranking $\pi$ when restricted to the set $T$ is $\sigma_i$.

We can also naturally extend definition~\ref{defn:full-posterior} to define the posterior distribution over partial preferences given a partial preference $\sigma_i$. In order to do so, let us first define the posterior over full ground truth $\pi^\star$ given $\sigma_i$ as,
\begin{align*}
\textstyle
    \Prg(\pi^\star \mid \sigma_i) = \frac{\Prs(\sigma_i \mid \pi^\star) P(\pi^\star)}{\sum_{\tilde{\pi}} \Prs(\sigma_i \mid \tilde{\pi}) P(\tilde{\pi})}
\end{align*}
where one can use definition~\eqref{defn:prs-sigma-to-pi} to compute $\Prs(\sigma \mid \pi^\star)$. Now we can write down the posterior probability over the partial ground truth $\tilde{\sigma}$ as follows.
\begin{equation}\label{defn:partial-posterior}
\textstyle
\Prg(\tilde{\sigma} \mid \sigma_i) = \sum_{\pi: \pi \triangleright \tilde{\sigma_i}} \Prg(\pi \mid \sigma_i) 
\end{equation}
Finally, we can write the posterior over another partial ranking $\sigma'$ over the subset $T$.
\begin{equation}\label{defn:pro-partial}
\textstyle
\Pro(\sigma' \mid \sigma_i) = \sum_{\tilde{\sigma}} \Prg(\tilde{\sigma} \mid \sigma_i) \Prs(\sigma' \mid \tilde{\sigma})
\end{equation}

In this work, our aim is to propose several versions of surprising popular algorithm that work with partial preferences. As shown in definition~\ref{eq:prediction_normalized_vote}, it requires eliciting information regarding voters partial preferences, and posterior over others' partial preferences (as defined in \cref{defn:pro-partial}). Next, we discuss various ways of eliciting such information from the voters.

\subsection{Elicitation Formats}
\label{elicitationformats}

% It is important to note that to get complete information about the ranking of $m$ alternatives, we need information about $m!$ rankings. Given the factorial growth, this is neither cognitively possible nor practically feasible. Consequently, our research aims to gather only a subset of the total possible information from participants as described in Section \ref{Model}. This study emphasizes the collection of ordinal data over numerical data from respondents due to the complexities involved in obtaining direct numerical evaluations of alternatives, as supported by existing literature \parencite{camerer2011behavioral}.

Given a subset of size $k \ll m$ alternatives, voter $i$'s prediction $\Pro(\cdot \mid \sigma_i)$ is a distribution over $k!$ rankings.
% Thus, eliciting numerical predictions 
In practice, this renders elicitation of full prediction information difficult, if not impossible, due to its cognitive overload. Thus, we focus on simple, and more explainable, elicitation methods that rely on ordinal information either by identifying the most preferred alternative (\texttt{Top}), selecting the most preferred $t$ alternatives (\texttt{Approval($t$)}), or a complete ranking of the partial set (\texttt{Rank}). The formal definitions can be found in \Cref{appendix:elicitation_formalism}.

Given these elicitation methods, we study different combinations of formats for votes and predictions where the first component indicates the vote format and the second component denotes the prediction format. These give rise to nine formats: \texttt{Top-None}, \texttt{Top-Top}, \texttt{Top-Approval($t$)}, \texttt{Top-Rank}, \texttt{Approval($t$)-Rank}, \texttt{Approval($t$)-Approval($t$)}, \texttt{Rank-None}, \texttt{Rank-Top}, and \texttt{Rank-Rank}. For \texttt{Approval($t$)}, the approval set of size $t \in \{1,2,3\}$ is selected. Note that \texttt{Approval($1$)} $\equiv$ \texttt{Top}.

\section{Aggregation Algorithms for Partial Preferences}

%Aggregating partial preferences even when experts are in majority poses a variety of computational challenges (see, for example, \citet{conitzer2002vote,rothe2003exact}). 

The surprisingly popular method (as we discussed in \cref{sec:intro}) cannot be applied directly to find a full ranking using only partial preferences. Thus, we develop two vote aggregation algorithms that \textit{only} rely on partial ordinal preferences both for votes and predictions.
On the high level, the two algorithms differ on how and when they implement the SP method, whether independently on each subset (Partial-SP) or on the aggregated (potentially partial) votes and ranks (Aggregated-SP).
Here we provide a high-level description for each of the algorithms; additional details and exact pseudo-codes are relegated to \Cref{appendix:algorithms}.

% In this section, we propose two methods - Partial-SP and Aggregated-SP,  for aggregating voter preferences by eliciting partial information and estimating the ground-truth ranking. 
% The major difference between the two approaches lies in the sequence in which we apply the SP-Algorithm and Conventional Vote Aggregation rule.

\textbf{Partial-SP.} The key element of this algorithm is utilizing SP voting on the partial rankings obtained at each step. It takes a set of potentially overlapping subsets of alternatives and a voting rule as input and proceeds as follows: 
For each subset $S_j$ of alternatives, collect votes and predictions from voters on this subset according to one of the elicitation formats detailed in \Cref{elicitationformats}.
Compute the ground truth partial ranking on the subset $S_j$ using the SP algorithm.
Aggregate all partial rankings using a voting rule (e.g. Condorcet) to find a full ranking over all alternatives (breaking ties at random).

\textbf{Aggregated-SP.} This variation utilizes SP voting on the final rankings over votes and predictions. It takes a set of potentially overlapping subsets of alternatives and a voting rule as input and proceeds as follows: 
For each subset $S_j$ of alternatives, collect votes and predictions from voters on this subset according to one of the elicitation formats detailed in \Cref{elicitationformats}.
Aggregate all votes (partial rankings) using a voting rule (e.g., Condorcet) to find the aggregated vote over all alternatives, breaking ties at random. Predictions are not aggregated to preserve conditional prediction information crucial for SP voting. Apply SP algorithm pairwise across all alternatives where for each pair $(a,b)$, the vote information is derived from the scores of $a$ and $b$ based on the aggregation rule used, and the prediction information is used to find the conditional probabilities, $P(a|b)$ and $P(b|a)$. Breaking ties at random throughout this process results in a full ranking over all alternatives.
% Aggregate all votes (partial rankings) using a voting rule (e.g. Borda or Condorcet) to find the aggregated vote over all alternatives (breaking ties at random). Similarly, aggregate the predictions (partial rankings) to find aggregated predictions.
% Compute the a ground truth full ranking using SP voting on the aggregated votes and aggregated predictions.

\textbf{Subset Selection.}
The algorithms described in this section rely on partial rankings on the subset of alternatives. 
Given $m$ alternatives, we carefully select subsets of size $k$ with an inter-alternative pairwise distance of $s$ between elements from the ground-truth ranking.
Formally, a subset $S_j$ of size $k$ is generated as follows:
\begin{align}
\textstyle
S_j = \{a_{1+j}, a_{1+j+s}, \dots, a_{1+j+(k-1)s}\},
\end{align}
where $j \geq 0$ and $j + (k-1)s < m$, ensuring elements are within the range of $m$ alternatives. We get a total of $m-(k-1)s$ subsets.
Note that we use overlapping subsets so as to introduce transitivity among different subsets enabling us to compare alternatives across different subsets. This leads to an improvement in the accuracy of our algorithms as we discuss in \Cref{sec:results}.
\section{Experimental Design}
\label{Experimental_Design}

This section describes the experimental design of the Amazon Mechanical Turk (MTurk) study to assess the comparative efficacy of Partial-SP and Aggregated-SP against other voting rules for partial preferences. Participants in this study were asked to answer a series of questions, wherein they were required to express their preferences by voting on a range of alternatives. In addition to casting their own votes, participants were asked to predict the collective preference of others for the same set of alternatives.
Data was collected from $432$ respondents. Each participant was given a 20-minute window to complete a series of 18 questions (see details below).\footnote{This study received IRB approval from the ethics board, which is available upon request.} 

\textbf{Datasets.} The survey encompassed three distinct domains: 
    (i) The \emph{geography} dataset contains 36 countries with their population estimates, according to the United Nations,
    (ii) The \emph{movies} dataset contains 36 movies with their lifetime box-office gross earnings, and 
    (iii) The \emph{paintings} dataset contains 36 paintings with their latest auction prices.

% \begin{enumerate}
%     \item The \emph{geography} dataset contains 36 countries with their population estimates, according to the United Nations.
%     \item The \emph{movies} dataset contains 36 movies with their lifetime box-office gross earnings.
%     \item The \emph{paintings} dataset contains 36 paintings with their latest auction prices.
% \end{enumerate}
\begin{figure}[!t]

%\begin{figure}[!h]
    \centering
        \includegraphics[scale=1.2]{./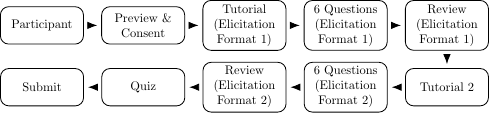} 
    \caption{Workflow of a participant}
    \label{fig:TurkerWorkflow}

\end{figure}

\textbf{Questions.}
We explored 36 alternatives per domain, aiming to gather partial preferences from voters. 
Each question featured a subset of alternatives, with the size of each subset maintained uniformly throughout the experiment. 
Each participant was presented with a subset of $5$ alternatives, selected based on an inter-alternative gap of $6$ positions within the ground-truth ranking. This strategy was designed to balance the cognitive load against the quality of the responses. We tested subset sizes of $4$ to $6$ and inter-alternative gaps of 3 to 8, finding that larger sizes and wider gaps generally enhanced ground-truth recovery. However, larger subset sizes increase cognitive load for participants, and wider gaps reduce overlap between subsets when limited to 36 alternatives. %\HH{mention here that we did other type of alternative selection using simulated data, but there was not a significant difference... details can be found in the Appendix.} 

For each combination of $12$ subsets, $9$ elicitation formats, and $3$ domains, each question received $16$ responses.
%This strategy was designed to reduce cognitive load, as it is easier to decide on alternatives that are not closely ranked. For each combination of $12$ subsets, $9$ elicitation formats, and $3$ domains, each question received $16$ responses
%
The survey was structured for each participant to answer two questions from each of the three domains and two elicitation formats, totaling 12 questions per participant. 
Figure \ref{fig:TurkerWorkflow} shows the workflow for each participant; each participant was assigned 18 questions to answer. %Refer to \Cref{sec:additional_experimental_design} for details on the tutorials, participant qualifications for the MTurk study, and review questions regarding the perceived difficulty and expressiveness of the study.

% \textbf{Turker Assignment.} Figure \ref{fig:TurkerWorkflow} shows the workflow for each Turker; each turker was assigned 18 questions to answer.

\textbf{Elicitation Formats.} 
% In Section \ref{elicitationformats}, we meticulously outlined the nine elicitation formats used in our survey. 
% To accurately gather voting and prediction data in alignment with our experimental model, we utilized straightforward and intuitive language, ensuring that the Turkers could easily understand and respond as intended.
We use various elicitation formats (as described in  \cref{elicitationformats}). For example, consider a question that requires participants to rank five movies: a) Rogue One: A Star Wars Story, b) Titanic, c) Toy Story 3, d) The Dark Knight Rises, and e) Jumanji: Welcome to the Jungle—based on their lifetime gross earnings. Under the \texttt{Approval(3)-Rank} elicitation format, the structure of the vote and prediction questions would be framed as follows:
\begin{itemize}[noitemsep,nolistsep]
\item \textbf{Part A (vote):} \textit{"Which among the following movies are the top three in terms of highest grossing income of all time?"}
\item \textbf{Part B (prediction):} \textit{"Considering that other participants will also respond to Part A, in what order do you predict the following movies will be ranked, from the most common response (top) to the least common (bottom)?"}
\end{itemize}

\textbf{Tutorial.} 
Prior to engaging with each set of 6 questions within a specific elicitation format, participants completed a tutorial designed to evaluate their understanding of the voting process and prediction tasks. 
%This tutorial was customized to the elicitation format and included a question with predefined beliefs about the correct answer and potential alternatives. 
To proceed, participants had to accurately apply these beliefs within the voting and prediction framework, ensuring they were adequately prepared. 
%This step was mandatory to ensure consistent engagement with the survey questions according to the experimental design's requirements.

%Additionally, tutorial and review questions were included to assess familiarity, difficulty, and attention. This approach facilitated comprehensive data collection and ensured a broad representation of voter preferences across various scenarios.

\textbf{Reviews.} Following the completion of each set of 6 questions, participants were asked to evaluate the preceding questions' elicitation format in terms of difficulty (ranging from "Very Easy" to "Very Difficult") and expressiveness (from "Very Little" to "Very Significant"). Although question complexity was standardized within each domain, the domains themselves varied considerably in difficulty. To mitigate potential bias from implicit comparisons between the two elicitation formats assigned to each participant, the sequence of domains in the first set of questions was mirrored in the subsequent set. This methodological approach ensured consistency and fairness in the evaluation of the elicitation formats, thereby enhancing the reliability of participants' feedback.

\section{Results and Analysis}\label{sec:results}

% In this section, we present outcomes across three distinct domains and nine elicitation formats. Our focus is to analyse the performance of our proposed methods for the following: 

% \begin{itemize}
%     \item Predicting the Ground Truth Ranking
%     \item Comparing Partial-SP and Aggregated-SP
% \end{itemize}

In this section, we present the results of this study averaged across all three domains. 
We measure the accuracy of the proposed SP algorithms (Partial-SP and Aggregated-SP) in predicting the full ground-truth ranking, in comparison with common vote aggregation methods  (e.g. Borda, Copeland, Maximin, Schulze). The details of these aggregation methods is provided in \Cref{appendix:common_voting_rules}.

Additionally, we compare the elicitation formats (described in \Cref{elicitationformats}) with respect to cognitive effort (measured by response time and difficulty) and expressiveness (measured directly by survey questions).  They are provided in the Appendix.
%\HH{Are we not including any of these results in the main body?}

\subsection{Accuracy Metrics}
To capture the error in predicting the full ground-truth ranking, we use three different metrics: 
(i) the \textit{Kendall-Tau} correlation, which measures the distance between ordinal rankings,
(ii) \textit{Spearman's $\rho$} correlation, which measures the statistical dependence between ordinal rankings,
(iii) \textit{Pairwise hit rate}, which measures the fraction of pairs at distance $d$ that are correctly ranked with respect to the ground-truth ranking, and
(iv) \textit{Top-$t$ hit rate}, which measures the fraction of alternatives that are predicted correctly (in no order) in most preferred $t$ compared to the ground-truth ranking.
The formal definitions can be found in \Cref{Metrics}.

For example, consider the ground-truth ranking $a \succ b \succ c \succ d$. The predicted ranking $b \succ a \succ d \succ c$ has a pairwise hit rate of $1/3$ at distance 1, $1$ at distances 2 and 3. Its Top-1 hit rate is $0$, Top-2 is $1$, Top-3 is $2/3$, and Top-4 is $1$.

\subsection{Predicting the Ground Truth Ranking}
\label{Predicting_the_Ground_Truth_Ranking}

Figure \ref{fig:comparison_correlations} illustrates the performance of SP algorithms measured by Kendall-Tau and Spearman's correlations. We fix Copeland as the aggregation rule used in both variations of SP voting and compare the accuracy with applying Copeland on votes alone (without the use of prediction information).
% For comparison, the results for Copeland aggregation rule are provided, as a common aggregation rule. Note that the common aggregation rules do not use prediction information.
%

\textbf{Statistical correlations and elicitation.} SP voting produces rankings with a significantly higher correlation with the ground truth ranking, and this effect improves as the information provided as votes and prediction becomes more expressive. In particular, \texttt{Rank-Rank} and \texttt{Approval($3$)-Rank} outperform all other elicitation formats.
We note that Aggregated-SP seem to be more reliant on the vote information, compared to the predictions, as it can seen in \texttt{Top-Rank} vs. \texttt{Rank-Top}. 
%This contradicts the observation by \citet{hosseini2021surprisingly} under full preferences.
%\HH{{Please double check this. In fact, we see only opposite result for Aggregated SP.}}
%
In contrast, Partial-SP does not exhibit any significant favor for vote vs. prediction information as both \texttt{Top-Rank} and \texttt{Rank-Top} improve by additional information. However, the difference between them is not statistically significant.

% It is observed that eliciting more information through prediction rather than votes enhances ground-truth recovery, which is consistent with \citet{hosseini2021surprisingly}'s observation. 

Interestingly, eliciting unordered information for both  votes and predictions (e.g. \texttt{Approval($2$)-Approval($2$)}) seem to be sufficient in recovering the ground truth---raising the question of whether pairwise comparisons are necessary in designing SP algorithms.

\textbf{Hit rates.} With respect to pairwise hit rate and the Top-$t$ hit rate, the noisy prediction information significantly improves the performance of the Partial-SP algorithm (with lower variance) as shown in \Cref{fig:comparison_correct_hits}. The results for Aggregated-SP are qualitatively similar and are presented in \Cref{appendix:missingfig_aggregatedsp}.
The slight dip in pairwise hit rate, can be explained by the survey's design choice of an inter-alternative distance of 6, leading to fewer comparisons being available for  these pairs.

 % Figure \ref{fig:comparison_correlations} show that Partial-SP is better than existing vote aggregation rules across all elicitation formats and Aggregated-SP outperforms the conventional aggregation rules for \texttt{Rank-Rank} elicitation format.

\begin{figure}[!t]
    \centering
        \includegraphics[width=\textwidth]{./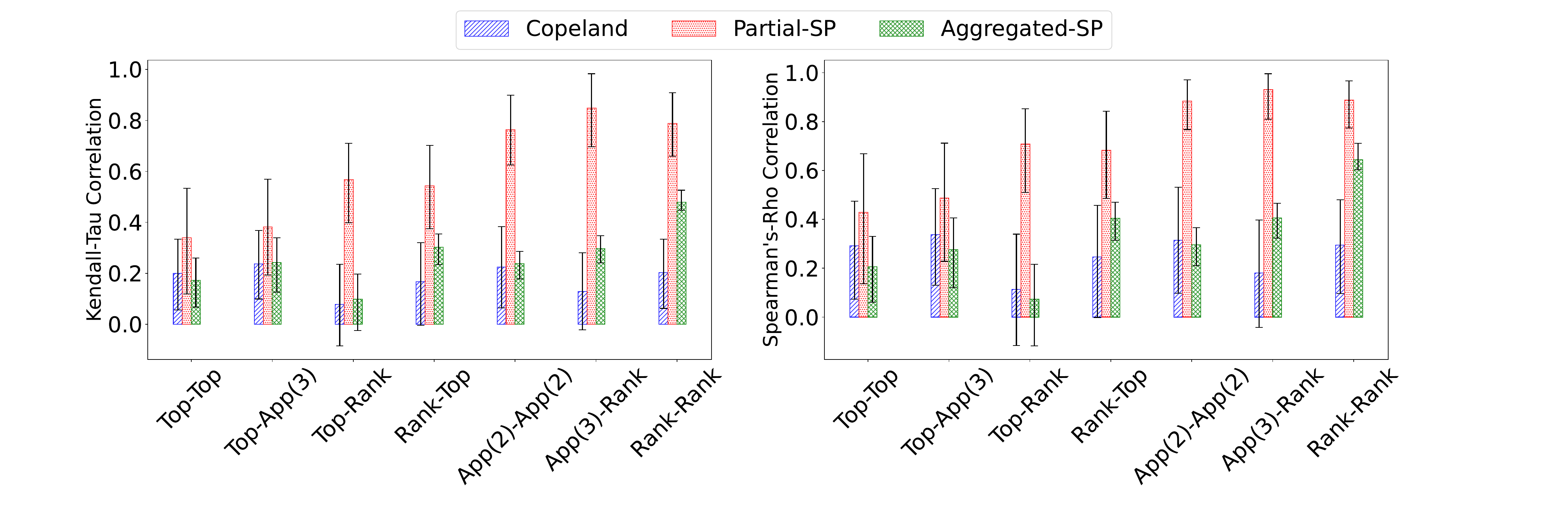} 
    \caption{Comparing the predicted and ground-truth rankings for different elicitation formats using Kendall-Tau and Spearman's $\rho$ correlations (higher is better). 
    All results use Copeland as their aggregation rule. %\HH{why is it that the result of Copeland alone in top-top top-rank or top-app is different? I am not convinced.}
    % In these figures, we show a comparison of Kendall-Tau and Spearman's $\rho$ Correlations between predicted and ground-truth rankings for different elicitation formats. \texttt{Approval(2)-Approval(2)}, \texttt{Approval(3)-Rank}, and \texttt{Rank-Rank} elicitation formats perform the best as compared to other elicitation formats for Geography domain and Copeland Aggregation.
    }
    \label{fig:comparison_correlations}
\end{figure}

% In Figure \ref{fig:1_geography_Copeland_correct_hits}, using the metric from Section , we compare the performance of Partial-SP and Conventional Vote Aggregation as the difference value between pairs increases for the Geography domain with Copeland Aggregation. 
% We observe a dip in correct hits for difference values of 0 and 5, due to the survey's design choice of an inter-alternative distance of 6, leading to fewer comparisons of these pairs. Additionally, Partial-SP exhibits lower uncertainty around the mean, which decreases with increasing difference value, unlike conventional rules where uncertainty increases. In Figure \ref{fig:1_geography_Copeland_correct_hits_Topk}, using the metric from Section XXX, we compare performance as the K-value increases. While performance is similar for the top few K-values, Partial-SP shows a significant improvement for higher K-values, with consistently lower uncertainty across all K-values compared to conventional aggregation rules.

\begin{figure}[!h]
    \centering
    \begin{subfigure}[b]{0.49\textwidth} % Adjusted width for better fit
        \centering
        \includegraphics[width=\textwidth]{./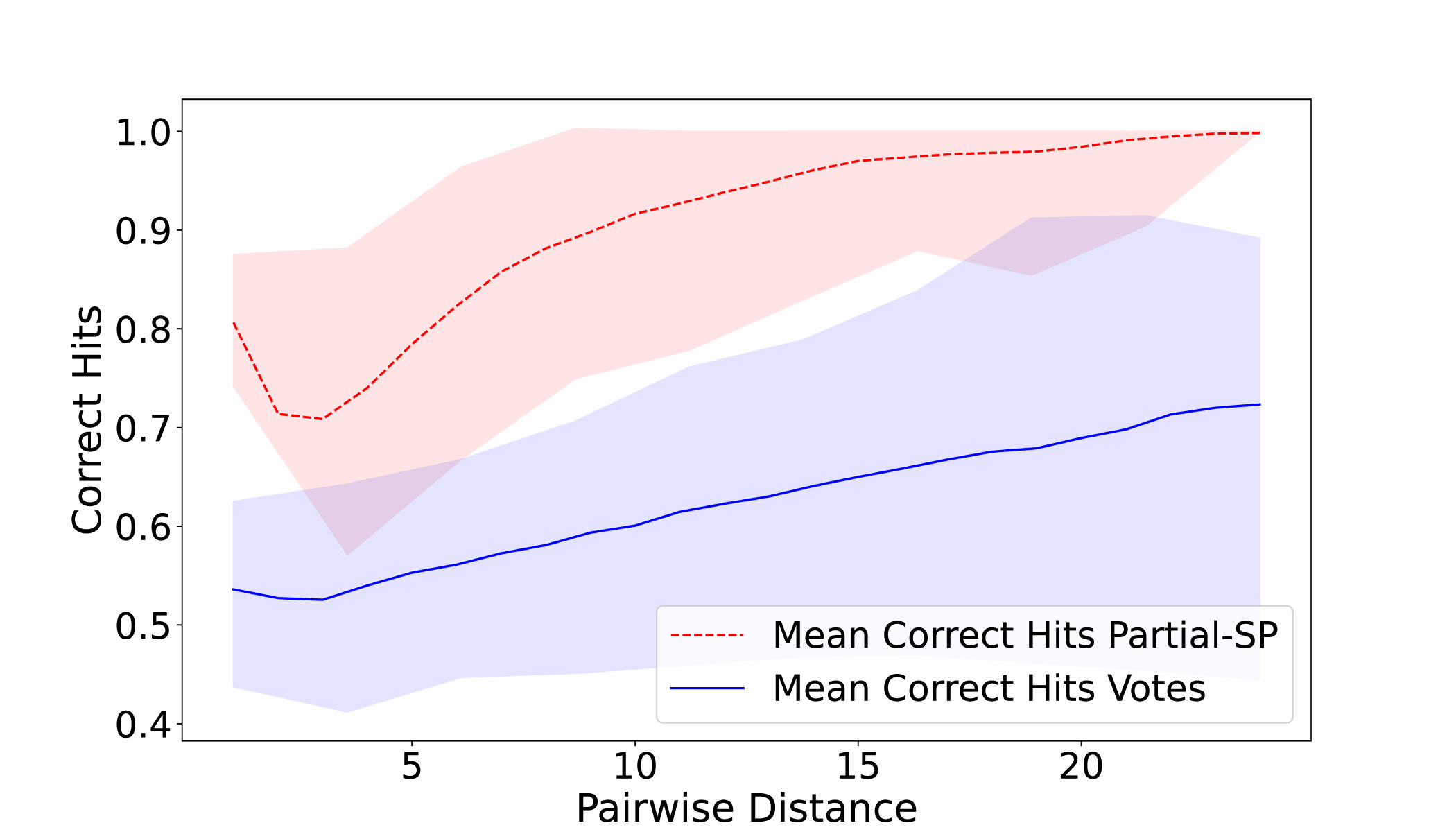} % Adjusted to fill the subfigure
        \caption{Pairwise hit rate at distance $d$}
        \label{fig:1_geography_Copeland_correct_hits}
    \end{subfigure}
    \hfill
    \begin{subfigure}[b]{0.49\textwidth} % Adjusted width for better fit
        \centering
        \includegraphics[width=\textwidth]{./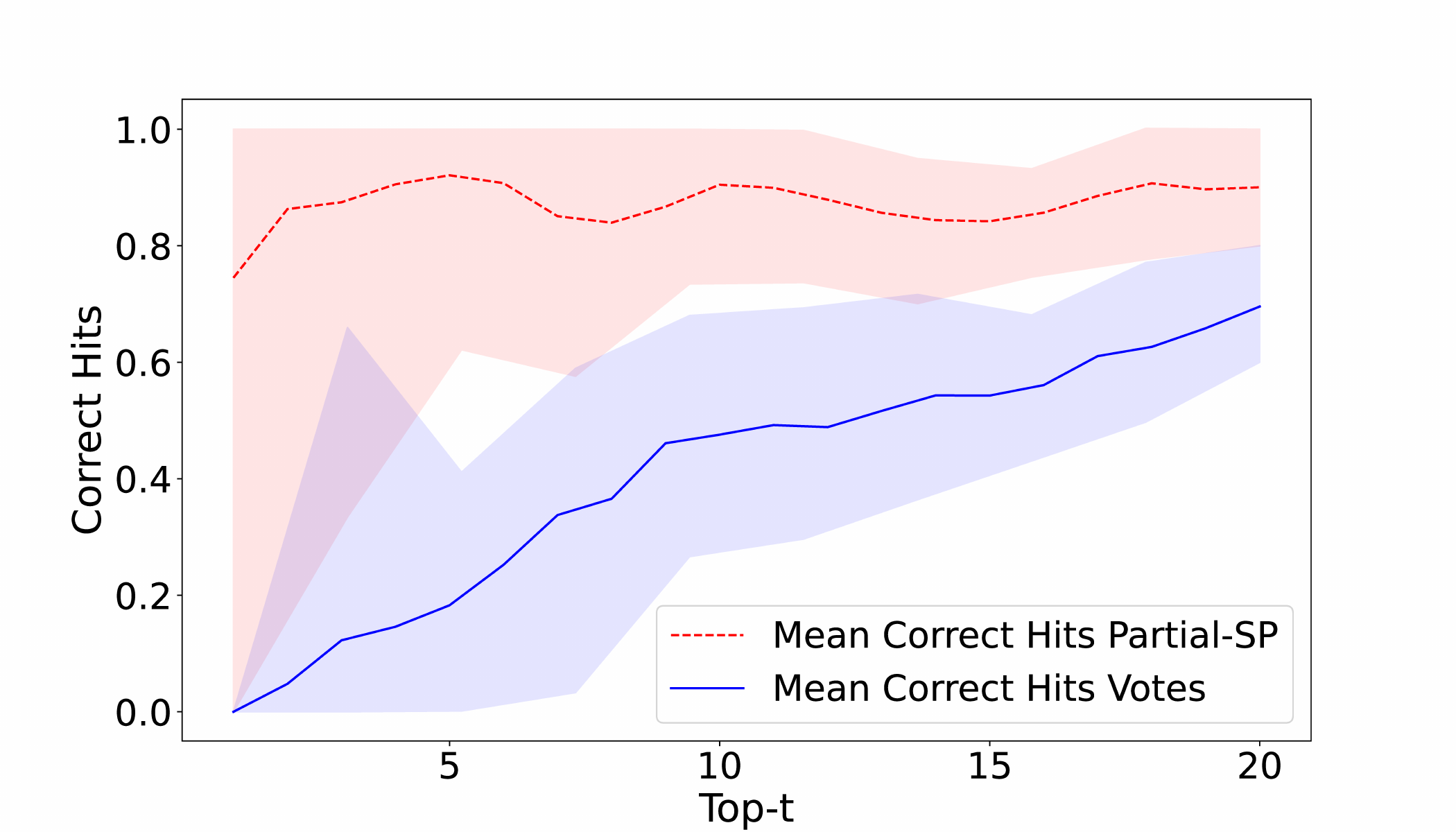} % Adjusted to fill the subfigure
        \caption{Top-$t$ hit rate}
        \label{fig:1_geography_Copeland_correct_hits_Topt}
    \end{subfigure}
    \caption{
    Comparing the Partial-SP algorithm with Copeland (no prediction information) measured by pairwise and Top-$t$ hit rates. The elicitation format is \texttt{Approval(2)-Approval(2)}.
    % In this figure, we show the variation of Fraction of Correct-Hits with an increase in the difference between pair of alternatives. K-values for \texttt{Approval(2)-Approval(2)} elicitation format and Copeland Aggregation. In these plots, we compare Copeland-aggregated Partial-SP and Copeland rule.
    }
    \label{fig:comparison_correct_hits}
\end{figure}

% \HH{Hadi is here ------------}

\textbf{Partial-SP vs. Aggregated-SP.} While both variants of the SP algorithm significantly outperform common voting rules by utilizing (noisy) prediction information, the Partial-SP algorithm significantly outperforms the Aggregated-SP algorithm (see \Cref{fig:comparison_correlations} and \Cref{fig:partial_aggregated_kendalltau}).
This could be explained by the importance of `correcting' noisy votes on the subsets of alternatives because the prediction information of the Partial-SP algorithm helps identify experts early on in predicting partial rankings of these alternatives.

% Figure \ref{fig:comparison_correlations} demonstrates that Partial-SP surpasses existing vote aggregation rules across all elicitation formats, and Aggregated-SP outperforms conventional aggregation rules for the \texttt{Rank-Rank} elicitation format. These figures also highlight how different elicitation formats aid in recovering ground-truth rankings better than conventional aggregation rules. It is observed that eliciting more information through prediction rather than votes enhances ground-truth recovery, which is consistent with \citet{hosseini2021surprisingly}'s observation. 

For Partial-SP, the plots indicate that there is no statistical significance between \texttt{Approval(2)-Approval(2)}, \texttt{Approval(3)-Rank}, and \texttt{Rank-Rank} elicitation formats, suggesting they perform as well as \texttt{Rank-Rank}. An interesting ramification here is demonstrating that approval sets not only perform well in predicting the ground truth, but also pose less cognitive burden on voters compared to those elicitation formats that ask for rankings (see \cref{app:responsetime}).

%Thus, it is unnecessary to ask voters for the entire ranking of alternatives to approximate the ground-truth accurately.

\textbf{Domain impacts.} Performance of Partial-SP and Aggregated-SP is robust across domains. They outperform common voting rules with the sole exception of the Schulze method, which matches the performance of Aggregated-SP (see  \Cref{fig:partial_aggregated_kendalltau}). The difference in performance is notably high for Paintings domain, where specialized expertise is required to predict painting prices. Here, Partial-SP significantly outperforms common aggregation rules, showcasing its effectiveness in leveraging expert knowledge and correcting misinformation. For further details see \Cref{appendix:missingfig_predict_gt}.
%\HH{explain if there is anything different between the domains and why.}

\begin{figure}
    \centering \includegraphics[width=\textwidth]{./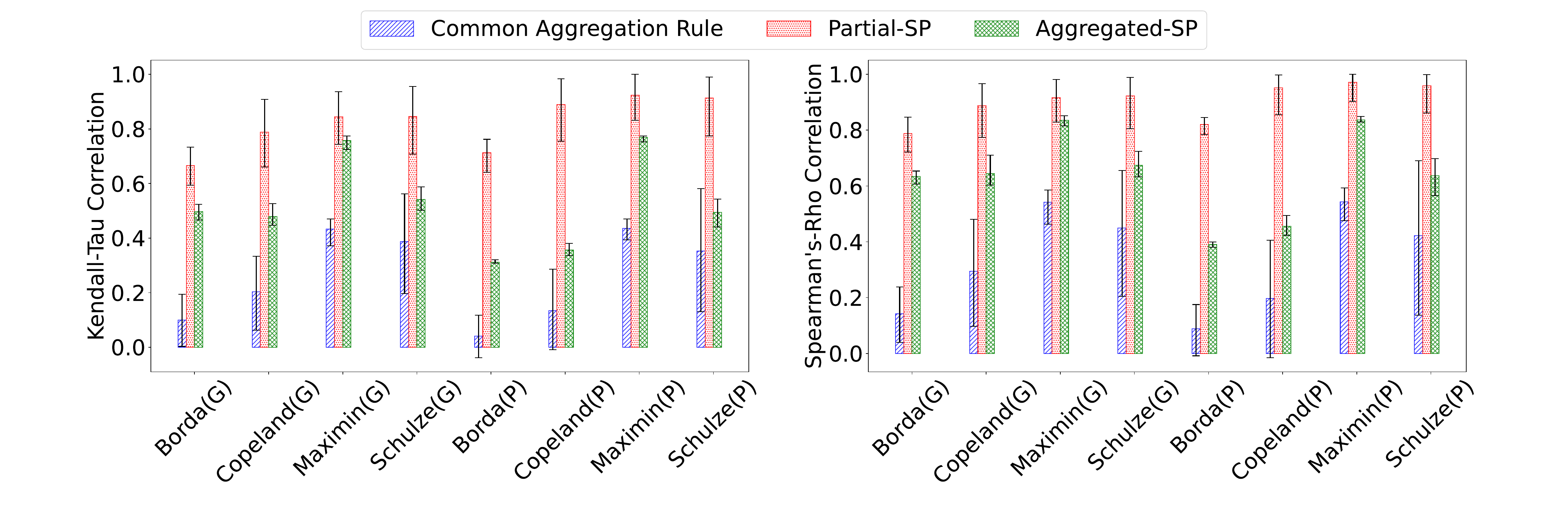} 
    \caption{
    Comparing the predicted and ground-truth rankings for different aggregation rules using Kendall-Tau and Spearman's $\rho$ correlations (higher is better). The elicitation format is \texttt{Rank-Rank}; each comparison uses the same aggregation rule in the SP algorithm.
    % The figures provide Kendall-Tau and Spearman's $\rho$ correlation Vs Aggregation Rules (higher is better) for \texttt{Rank-Rank} elicitation format. Each label in X-axis is represented as Aggregation-Rule(Domain). We have used the same Aggregation-Rule as the voting rule in Partial-SP and Aggregated-SP. For example, Borda(G) implies that we have used Borda for the Geography Domain to compare the performance of Borda-Aggregated Partial-SP and Aggregated-SP with the standard Borda rule. 
    }
    \label{fig:partial_aggregated_kendalltau}
\end{figure}

\section{Simulated Model of Voter Behavior}
\label{Synthetic_Data}

 %The behavior of participants both in terms of their vote and predictions could shed light on understanding the effectiveness of SP algorithms when applied to partial preferences. 
 In this section, we investigate whether there is any  underlying probabilistic model that can explain the voters' behaviours when measured in terms of the vote and predictions. If successful, such a model will also enable us to theoretically analyze the sample complexity of SP algorithms (as we present in \Cref{Theory}). In particular, we posit that a \emph{concentric mixtures of Mallows model} can explain the users' reports (both vote and prediction) in the dataset. The concentric mixtures of Mallows model is a type of mixture models where there is one ground truth, but different groups of users have different dispersion parameters, and hence different distribution over observed preferences. %In this paper, we would be concerned with two groups of voters -- \texttt{experts}, and \texttt{non-experts}.

 % such an analysis inevitably depends on the underlying probabilistic model of preference generation and the first important step is choosing the right model. In this section, we first show how the Mallows mixture model~\citep{mallows1957non, LB11} can explain the users' reports (both vote and prediction) in the dataset.

% Now our goal is to build a theoretical framework of the surprisingly popular voting method so that we can understand when and why  partial preferences help to recover ground truth. However, such an analysis inevitably depends on the underlying probabilistic model of preference generation and the first important step is choosing the right model. In this section, we first show how the Mallows mixture model~\cite{mallows1957non, LB11} can explain the users' reports (both vote and prediction) in the dataset.

\textbf{Concentric mixtures of Mallows model.} We assume that each voter is likely to be an \texttt{expert} with probability $p (\ll 1)$ and a \texttt{non-expert} with probability $1-p$. Given a ground truth ranking $\pi^\star$, an \texttt{expert} voter observes a ranking that is distributed according to a Mallows model with center $\pi^\star$ and dispersion parameter $\phi_E$. On the other hand, a \texttt{non-expert} voter observes a ranking that is again distributed according to a Mallows model with center $\pi^\star$, but with a larger dispersion parameter $\phi_{NE}$. In particular, the ranking observed by voter $i$ is distributed as
\begin{equation}
\label{mallows-mixture-vote}
   \Prs(\pi_i \mid \pi^\star) = p \cdot \Prs(\pi_i \mid \pi^\star, \phi_{E}) + (1 - p) \cdot \Prs(\pi_i \mid \pi^\star, \phi_{NE}) 
\end{equation}
where $\Prs(\pi \mid \pi^\star, \phi)$ is the standard Mallows model with dispersion $\phi$ i.e. $\Prs(\pi \mid \pi^\star, \phi) = \frac{\phi^{d(\pi, \pi^\star)}}{Z(\phi, m)}$. The term $Z(\phi,m)$ is the normalization constant, and is defined as $Z(\phi, m) = \sum_{\pi} \phi^{d(\pi, \pi^\star)}$. With a slight abuse of notation, we will write $Z(\phi)$ as $Z(\phi, m)$ since the number of alternatives in the ground truth is assumed to be fixed. 

Note that, \Cref{mallows-mixture-vote} defines a distribution over complete preferences, but given a subset of size $k$ we can naturally extend this definition to define a distribution over partial preferences e.g. $\Prs(\sigma_i \mid \pi^\star)$ (\cref{defn:prs-sigma-to-pi}), and posterior over partial preferences of other voters e.g. $\Pro(\sigma \mid \sigma_i)$ (\cref{defn:pro-partial}). 

        \begin{figure}[!h]
        \centering
        \includegraphics[width=\textwidth]{./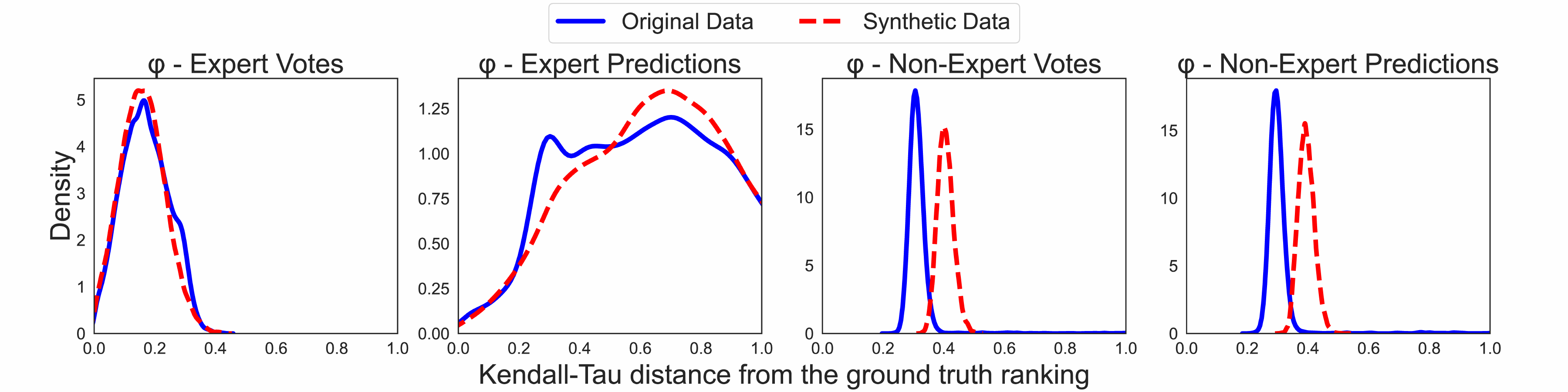}
        \caption{Comparison of inferred parameters of the \emph{Concentric mixtures of Mallows model} for real and synthetic data. The experts vote closer to and predict farther from the ground-truth. The non-experts vote and predict far from the ground truth. The proportion of experts in both datasets was found to be less than 20\%.} 
        \label{fig:bayesian_inference}
    \end{figure}

We fit the mixture model~\cref{mallows-mixture-vote} on the real datasets and estimate the following parameters -- proportion of experts ($p$), dispersion parameters of experts' votes ($\phi_{E\text{-}votes}$) and predictions ($\phi_{E\text{-}predictions}$), as well as the dispersion parameters of non-experts' votes ($\phi_{NE\text{-}votes}$) and predictions ($\phi_{NE\text{-}predictions}$). The parameters were inferred using Bayesian inference~\cite{hoffman2014no}. We also generated synthetic data using the concentric mixtures of Mallows model, and again used Bayesian inference to estimate the parameters. The details of  estimation and  data generation are provided in the appendix~\ref{sec:simulation_details}. As seen from the posterior distributions for the dispersion parameters in Figure \ref{fig:bayesian_inference} synthetic data generation process accurately replicates real data characteristics, and highlights that the concentric mixtures of Mallows accurately model voters' behaviours on MTurk. Similarity in results between real and simulated data is described in \Cref{appendix:real_simulated}.

% \begin{table}[t]
% \centering
% \begin{tabular}{@{}|l|c|c|@{}}
% \hline
%  & \textbf{Real Data} & \textbf{Simulated Data} \\ 
% \hline
% \textbf{proportion of experts} & $0.07$ $(1.8\text{e-3}, 0.26)$ & $0.05$ $(1.5\text{e-3}, 0.18)$ \\
% \hline
% \textbf{$\phi_{\text{expert votes}}$} & $0.16$ $(0.03, 0.30)$ & $0.10$ $(0.02, 0.20)$ \\
% \hline
% \textbf{$\phi_{\text{expert predictions}}$} & $0.80$ $(0.61, 0.99)$ & $1.00$ $(0.80, 1.19)$ \\
% \hline
% \textbf{$\phi_{\text{non-expert votes}}$} & $0.33$ $(0.29, 0.38)$ & $0.44$ $(0.38, 0.51)$ \\
% \hline
% \textbf{$\phi_{\text{non-expert predictions}}$} & $0.32$ $(0.28, 0.36)$ & $0.44$ $(0.38, 0.51)$ \\
% \hline
% \noalign{\smallskip}
% \end{tabular}

% \caption{Mean and 95\% C.I statistics for the parameters on Real and Simulated Data}
% \end{table}
\section{Analysis of Sample Complexity}
\label{Theory}

In this section, we use a \textit{concentric mixture of Mallows} models, and provide upper bound on the sample complexity of the surprisingly popular voting method with partial preferences. We start with a simple problem. Given a subset $T$ of size $k$, suppose our goal is to recover the true partial ranking over the alternatives in $T$, then how many samples does SP algorithm require?

We will analyze the following simplified version of the SP algorithm: Voter $i$ reports vote $\sigma_i$ over the subset $T$, which is used to build an estimate of $f(\sigma)$ for all $\sigma \in \Pi_s$.  For each $\sigma$, the posterior report by the voter is a partial ranking drawn from the distribution $g(\cdot \mid \sigma)$. These reports are used to build an estimate $\widehat{g}(\sigma' \mid \sigma)$ for all $\sigma', \sigma$. Select partial ranking $\widehat{\sigma} \in \argmax_{\sigma} \widehat{V}(\sigma) = \widehat{f}(\sigma) \cdot \sum_{\sigma \in \Pi_s} \frac{\widehat{g}(\sigma' \mid \sigma)}{\widehat{g}(\sigma \mid \sigma')}$.
% \begin{enumerate}[itemsep=0.05cm]
%     \item Voter $i$ reports vote $\sigma_i$ over the subset $T$, which is used to build an estimate of $f(\sigma)$ for all $\sigma \in \Pi_s$.
%     \item For each $\sigma$, the posterior report by the voter is a partial ranking drawn from the distribution $g(\cdot \mid \sigma)$. These reports are used to build an estimate $\widehat{g}(\sigma' \mid \sigma)$ for all $\sigma', \sigma$.
%     \item Select partial ranking $\widehat{\sigma} \in \argmax_{\sigma} \widehat{V}(\sigma) = \widehat{f}(\sigma) \cdot \sum_{\sigma \in \Pi_s} \frac{\widehat{g}(\sigma' \mid \sigma)}{\widehat{g}(\sigma \mid \sigma')}$.
% \end{enumerate}

It is impossible to recover the partial ground truth ranking if the fraction of the experts $p$ can be very small, or the dispersion parameter of the non-experts $\phi_{NE}$ can be very large. In order to ensure recovery of the true partial ranking we will make the following assumption.
\begin{assumption}\label{asn:identifiability}
    The dispersion parameters $\phi_E, \phi_{NE}$, and the fraction of experts $p$ satisfy the following inequality,
$$\textstyle
\left( \frac{p}{1-p}\right)^2 \ge 2 \cdot \left( \frac{Z(\phi_{NE})}{Z(\phi_E)}\right)^2 Z(\phi_{NE},k) \phi_E^{k(k-1)/2}
$$
where $Z(\phi,k) = \sum_{\sigma:[k] \rightarrow [k]} \phi^{d(\sigma, \sigma^\star)}$.
\end{assumption}
The next theorem states that sample complexity under the above assumption.
\begin{theorem}\label{thm:recovery-partial-ranking}
    Suppose \Cref{asn:identifiability} holds, and the total number of samples $n \ge k! \sqrt{\frac{10 k \log(2k/\delta)}{ \mu}}$ where $\mu = p \cdot \frac{Z(\phi_E, m-k)}{Z(\phi_E)} \cdot \phi_E^{k(k-1)/2}+ (1-p) \cdot \frac{Z(\phi_{NE}, m-k)}{Z(\phi_{NE})}\cdot \phi_{NE}^{k(k-1)/2} $. Then the surprisingly popular algorithm recovers true ranking over the subset $T$ of size $k$ with probability at least $1-\delta$.
\end{theorem}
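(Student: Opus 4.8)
The plan is to split the argument into a population-level identifiability claim---that \Cref{asn:identifiability} makes $\sigma^\star$, the restriction of $\pi^\star$ to $T$, the strict maximizer over $\Pi_s$ of the prediction-normalized vote $\overline{V}$ of \cref{eq:prediction_normalized_vote}---and a concentration claim that $n$ samples suffice to preserve that ordering for the empirical $\widehat{V}$. A preliminary lemma I would prove first is that $\mu$ lower-bounds every relevant population quantity: $f(\sigma)\ge\mu$ and $g(\sigma'\mid\sigma)\ge\mu$ for all $\sigma,\sigma'\in\Pi_s$. This follows by marginalizing each Mallows component to $T$: write $d(\pi,\pi^\star)$ as the sum of Kendall inversions within $T$ (which equals $d(\sigma,\sigma^\star)\le k(k-1)/2$ whenever $\pi\triangleright\sigma$), within $A\setminus T$, and across the two sets; summing $\Prs(\pi\mid\pi^\star,\phi)$ only over completions $\pi$ that keep $\pi^\star$'s interleaving of $T$ with $A\setminus T$ kills every cross-inversion and lets the order on $A\setminus T$ range freely, giving $\Prs(\sigma\mid\pi^\star,\phi)\ge \phi^{k(k-1)/2}\,Z(\phi,m-k)/Z(\phi)$. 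Averaging the two components yields $f(\sigma)\ge\mu$, and the same bound propagates through \cref{defn:pro-partial} to $g$, since $\Prs(\sigma'\mid\tilde\pi)\ge\mu$ for every full ranking $\tilde\pi$; in particular every denominator $g(\sigma\mid\sigma')$ in $\overline{V}$ is at least $\mu$.

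The population separation is, I expect, the crux. I would expand $\overline{V}(\sigma)=f(\sigma)\sum_{\sigma'\in\Pi_s} g(\sigma'\mid\sigma)/g(\sigma\mid\sigma')$ using the concentric-mixture forms of $f$ (from \cref{mallows-mixture-vote} restricted to $T$) and of $g$ (from \cref{defn:pro-partial}). The ``surprisingly popular'' phenomenon is that the expert component concentrates vote mass near $\sigma^\star$ with weight $\propto\phi_E^{d(\sigma,\sigma^\star)}$, whereas the prediction $g(\cdot\mid\sigma)$, a posterior-averaged mixture dominated by the near-flat non-expert component, under-predicts $\sigma^\star$ relative to its true frequency, so the ratio built into $\overline{V}$ is maximized at $\sigma^\star$. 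The inequality $(p/(1-p))^2\ge 2\,(Z(\phi_{NE})/Z(\phi_E))^2\,Z(\phi_{NE},k)\,\phi_E^{k(k-1)/2}$ is precisely the budget under which the expert mass $p$ beats the worst-case non-expert contamination of the ratios: the $Z(\phi_{NE},k)$ factor bounds the aggregate distortion across the $k!$ competing rankings over $T$, $\phi_E^{k(k-1)/2}$ is the smallest expert weight a denominator can carry, and $(Z(\phi_{NE})/Z(\phi_E))^2$ accounts for the mismatch between the two components' normalizers. The delicate work is to perform this expansion, isolate the dependence on $p,\phi_E,\phi_{NE},m,k$, and verify that $\overline{V}(\sigma^\star)-\max_{\sigma\ne\sigma^\star}\overline{V}(\sigma)$ is positive exactly under \Cref{asn:identifiability}; the factor $2$ there should leave a constant-factor margin to be spent on sampling noise.

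For the concentration step, the vote counts $n_\sigma=\abs{\{i:\sigma_i=\sigma\}}$ are $\mathrm{Binomial}(n,f(\sigma))$ with $f(\sigma)\ge\mu$, so a multiplicative Chernoff bound plus a union bound give $n_\sigma\ge n\mu/2$ for all $\sigma$ with high probability; conditioned on this, $\widehat{f}(\sigma)$ is a mean of $n$ Bernoullis and $\widehat{g}(\sigma'\mid\sigma)$ a mean of $n_\sigma$ i.i.d.\ draws from $g(\cdot\mid\sigma)$, so Hoeffding/Chernoff control each estimate at an additive scale $\epsilon$. Substituting into $\widehat{V}$ and using $\widehat{g}\gtrsim\mu$, the perturbation of $\overline{V}(\sigma)$ is governed by $\epsilon$, the factor $1/\mu^2$ coming from the ratio denominators, and the aggregation of the $k!$ terms of the prediction sum; forcing this below the population margin for every $\sigma$ fixes the required $\epsilon$, and a union bound over the $O(k)$-many binding events supplies the $\log(2k/\delta)$. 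Feeding $n_\sigma\ge n\mu/2$ back into the resulting Hoeffding inequality and solving for $n$ yields the stated threshold $n\ge k!\sqrt{10k\log(2k/\delta)/\mu}$, and on the good event $\widehat{V}$ inherits the strict ordering of $\overline{V}$, so $\widehat{\sigma}=\sigma^\star$ with probability at least $1-\delta$. Besides the population-separation calculation, the secondary technical point is arguing that the $k!$-term ratio sum can be controlled with per-entry accuracy coarser than $1/k!$, so the final bound on $n$ retains only a single $k!$ factor.
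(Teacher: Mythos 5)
Your proposal follows essentially the same route as the paper: lower-bounding $f(\sigma)$ and $g(\sigma'\mid\sigma)$ by $\mu$ via marginalizing each Mallows component to the subset $T$ (picking up the $\phi^{k(k-1)/2}Z(\phi,m-k)/Z(\phi)$ factor), establishing a population-level factor-$2$ separation $\overline{V}(\sigma^\star)\ge 2\,\overline{V}(\tau)$ under \Cref{asn:identifiability} (this is exactly the paper's \Cref{lem:separation-bound}), and then converting concentration of $\widehat{f},\widehat{g}$ plus a union bound into preservation of the ordering by the empirical scores. The only minor discrepancies are technical bookkeeping: the union bound is taken over all $k!$ partial rankings (so $\log(2\cdot k!/\delta)\le k\log(2k/\delta)$ supplies the $k$ inside the square root), not over $O(k)$ events, and the paper obtains multiplicative $(1\pm\varepsilon)$ control of $\widehat{f},\widehat{g}$ directly from an $\ell_\infty$ concentration bound combined with $f,g\ge\mu$ rather than your Binomial-count conditioning, but these are interchangeable details.
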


Suppose $\phi_{E} \ll \phi_{NE} <  1$. Then \Cref{asn:identifiability} requires $\frac{p}{1-p} \ge \Omega\left(  \phi_{NE}^{k^2/4+1} \phi_E^{k^2/4-1}\right)$, and it implies that if $\phi_{NE}$ is very large compared to $\phi_E$ (i.e. noisy non-experts) then we need a larger value of $p$ (i.e. more experts).

We provide the full proof of the theorem in \Cref{Appendix-C}. The main ingredient of the proof is \Cref{lem:separation-bound} which shows that under \Cref{asn:identifiability} there is a strict separation between the true prediction-normalized score of the true partial ranking and any other ranking. In fact, we show that $\overline{V}(\sigma^\star) \ge 2 \overline{V}(\tau)$ for any $\tau$ with $d(\tau, \sigma^\star) \ge 1$. Given this result, we can apply standard concentration inequality to show that $\widehat{V}(\sigma)$ is close to $\overline{V}(\sigma)$ for all $\sigma$ when the number of samples is large, and $\widehat{V}(\sigma^\star)$ will be larger than $\widehat{V}(\tau)$ for any $\tau \neq \sigma^\star$. Therefore, picking the ranking with the largest empirical prediction-normalized score returns the correct ranking.

Note that the sample complexity grows proportional to $k!$ only because we compute prediction-normalized votes over all $k!$ partial rankings. If we are interested in recovering top $t$-alternatives then it will grow proportional to $\binom{k}{t}$. Moreover, the subset size $k$ is assumed to be very small compared to the number of alternatives $m$, and \Cref{thm:recovery-partial-ranking} shows the benefit of applying SP algorithm to partial preferences. We can immediately apply \Cref{thm:recovery-partial-ranking} to a collection of subsets $S$ through a union bound, and extend our analysis to the Partial-SP algorithm. Let us assume that in the second stage of Partial-SP, we apply a $t$-\emph{consistent} voting rule $f$ that recovers top-$t$ alternatives as long as each partial ranking in $S$ is correct.

\begin{corollary}
    Under the same setting as \Cref{thm:recovery-partial-ranking}, suppose the number of samples from each subset in $S$ is $n \ge k! \sqrt{\frac{10k \log(2\abs{S} k/\delta)}{\mu}}$. Then the Partial-SP algorithm with a $t$-consistent voting rule, recovers the top $t$ alternatives of the ground truth $\pi^\star$ with probability at least $1-\delta$.
\end{corollary}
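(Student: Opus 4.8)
The plan is to reduce the statement to a direct application of Theorem~\ref{thm:recovery-partial-ranking} combined with a union bound over the collection $S$ of subsets. First I would fix the failure probability budget: for each subset $S_j \in S$, Theorem~\ref{thm:recovery-partial-ranking} guarantees that the SP algorithm recovers the correct partial ranking over $S_j$ with probability at least $1 - \delta'$ whenever the number of samples from that subset is at least $k! \sqrt{10 k \log(2k/\delta') / \mu}$. Setting $\delta' = \delta / \abs{S}$ and substituting into the sample bound yields exactly the hypothesis $n \ge k! \sqrt{10 k \log(2 \abs{S} k/\delta) / \mu}$ stated in the corollary. Note that $\mu$ is unchanged because it depends only on $k$, $m$, $p$, $\phi_E$, $\phi_{NE}$ and not on which subset we look at — here I would rely on the subset-selection construction ensuring every $S_j$ has size exactly $k$, so $\mu$ is common to all of them, and Assumption~\ref{asn:identifiability} (which is also subset-independent) continues to hold.

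Next I would take a union bound over the $\abs{S}$ subsets: with probability at least $1 - \abs{S} \cdot \delta' = 1 - \delta$, the SP algorithm simultaneously recovers the correct partial ranking on \emph{every} subset $S_j \in S$. Condition on this good event. Now the second stage of Partial-SP feeds this collection of (correct) partial rankings into the $t$-consistent voting rule $f$. By the defining property of $t$-consistency stated just before the corollary — namely that $f$ recovers the top-$t$ alternatives of $\pi^\star$ as long as each partial ranking in $S$ is correct — the output of Partial-SP has the correct top-$t$ alternatives. Hence the overall failure probability is at most $\delta$, which is what we wanted.

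The only genuine subtlety, and the step I would be most careful about, is making sure the per-subset guarantees are independent enough (or at least that a union bound is legitimate) and that the parameters $\mu$ and the identifiability condition really are uniform across subsets. The union bound itself requires no independence — it is valid regardless of correlations between the events ``SP fails on $S_j$'' — so the argument goes through cleanly; the main thing to verify is simply that every $S_j$ produced by the subset-selection rule in Section~4 has size $k$ and that the voter model (the concentric mixture of Mallows) restricted to each $S_j$ is the same, so that Theorem~\ref{thm:recovery-partial-ranking} applies verbatim to each one with the same $\mu$. Given that, the proof is essentially a two-line composition: union bound to get all partial rankings right, then $t$-consistency to lift correctness of the parts to correctness of the top-$t$ of the whole.
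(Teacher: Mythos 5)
Your proposal is correct and matches the paper's argument exactly: the paper proves the corollary by applying Theorem~\ref{thm:recovery-partial-ranking} with failure probability $\delta/\abs{S}$ per subset (which is where the $\log(2\abs{S}k/\delta)$ term comes from), taking a union bound over all subsets in $S$, and then invoking the defining property of a $t$-consistent rule to lift correctness of every partial ranking to recovery of the top-$t$ alternatives. Your side remarks—that $\mu$ and Assumption~\ref{asn:identifiability} are subset-independent and that the union bound needs no independence—are also accurate and consistent with the paper.
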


\section{Discussion and Future Work}

We conclude by discussing some limitations and future directions.
When dealing with partial preferences, even when the majority has the correct information, effective preference elicitation or finding a necessary winner in most vote aggregation rules are often computationally intractable \parencite{conitzer2002vote,rothe2003exact,davenport2004computational}. These challenges, together with the minority of experts, further highlight the efficacy of the SP approach in balancing information elicitation and accuracy, by employing additional prediction information.

Throughout the paper, we have assumed that the length of the partial preferences ($k$) is fixed but smaller than the number of alternatives $m$. There is definitely a trade-off between eliciting partial preferences over larger subsets, and accuracy, and we leave this study for the future. Additionally, compared to \cite{hosseini2021surprisingly}, partial preferences introduce a multitude of choices of elicitation formats. We provide a comparison among them in terms of \emph{cognitive load} (measured in terms of difficulty an expressiveness) in \Cref{app:responsetime}, however, it would be interesting to perform a fine-grained study of the whole design-space of elicitation formats with partial preferences.

Future research can also explore the setting of SP beyond the majority-minority dichotomy (e.g. informed, but not expert voters) or when malicious voters are present (e.g. in detecting misinformation). 
Theoretically, the sample complexity can be explored beyond Mallows model under other probabilistic models (e.g. Plackett-Luce \cite{plackett1975analysis}, Thurstone-Mosteller \cite{thurstone1994law}) to enhance our understanding of this approach, particularly in notable applications such as political polling or collective moderation of online content.
% Practically, the SP method has notable applications in political elections ... 

\subsubsection*{Acknowledgements}
Hadi Hosseini acknowledges support from NSF IIS grants \#2144413 and \#2107173.
This research was supported in part by a Seed Grant award from the College of Information Sciences and Technology at the Pennsylvania State University.

\printbibliography

\clearpage
\appendix

\addcontentsline{toc}{section}{Appendix}
\part{Appendix}
\parttoc

\section{Formalism of Elicitation Formats}
\label{appendix:elicitation_formalism}

In this section, we formally define the elicitation formats used in our study. Let $v_i$ and $p_i$ denote the vote and prediction submitted by voter $i$. Let $T = \{a_1, a_2, \ldots, a_k\}$ denote the subset of alternatives of size $k$ that voters will report on and $\mathcal{L}(T)$ denote the set of all possible rankings of alternatives in $T$. Let $\sigma$ denote a ranking of the alternatives in $T$ and $\sigma(j)$ denote the alternative at the $j^{th}$ position in $\sigma$. The elicitation formats are defined as follows:

\texttt{Top-None:} Voter $i$ reports the top alternative in her observed noisy ranking, i.e., $v_i = \sigma(1)$, and does not provide any inference about other's aggregated votes.

\texttt{Top-Top:} Voter $i$ reports the top alternative in her observed noisy ranking, i.e., $v_i = \sigma(1)$, and provides the estimate of the most frequent alternative among the other voters, i.e., $$p_i= \arg\max_{a \in T}\sum_{\sigma \in \mathcal{L}(T):\sigma(1)=a} \Pro(\sigma|\sigma_i).$$

\texttt{Top-Approval(3):} Voter $i$ reports the top alternative in her observed noisy ranking, i.e., $v_i = \sigma(1)$, and provides the estimate of the top three most frequent alternatives, in no specific order, among the other voters., i.e., $$p_i = \arg\max_{a,b,c \in T} \sum_{\sigma: \{a,b,c\} \subseteq \{\sigma(1), \sigma(2), \sigma(3)\}} \Pro(\sigma|\sigma_i).$$

\texttt{Top-Rank:} Voter $i$ reports the top alternative in her observed noisy ranking, i.e., $v_i = \sigma(1)$, and provides the estimate of other's rankings i.e,  $p_i \in \mathcal{L}(T)$ such that $$\sum_{\sigma \in \mathcal{L(A)}:\sigma(1)=q_i(x)} \Pro(\sigma|\sigma_i) \geq \sum_{\sigma \in \mathcal{L}(T):\sigma(1)=q_i(y)} \Pro(\sigma|\sigma_i)$$ for all $x>y$.

\texttt{Approval(2)-Approval(2):} Voter \textit{i} reports the top two alternatives, in no specific order, in her observed noisy ranking, i.e., $v_i = \{ \sigma(1), \sigma(2) \} = \{a, b\}$ with $a, b \in T$ in no particular order and provides the estimate of the top two most frequent alternatives, in no specific order, among the other voters., i.e., $$p_i = \arg\max_{a,b \in T} \sum_{\sigma: \{a,b\} \subseteq \{\sigma(1), \sigma(2)\}} \Pro(\sigma|\sigma_i).$$

\texttt{Approval(3)-Rank:} Voter \textit{i} reports the top three alternatives, in no specific order, in her observed noisy ranking, i.e., $v_i = \{ \sigma(1), \sigma(2), \sigma(3) \} = \{a, b, c\}$ with $a, b, c \in T$ in no particular order, and provides the estimate of other's rankings i.e,  $p_i \in \mathcal{L}(T)$ such that $$\sum_{\sigma \in \mathcal{L(A)}:\sigma(1)=q_i(x)} \Pro(\sigma|\sigma_i) \geq \sum_{\sigma \in \mathcal{L}(T):\sigma(1)=q_i(y)} \Pro(\sigma|\sigma_i)$$ for all $x>y$.

\texttt{Rank-None:} Voter $i$ reports her entire observed noisy ranking, i.e., $v_i=\sigma_i$, and does not provide any inference about other's aggregated votes.

\texttt{Rank-Top:} Voter \textit{i} reports her entire observed noisy ranking, i.e., $v_i=\sigma_i$, and provides the estimate of the most frequent alternative among the other voters, i.e., $$p_i= \arg\max_{a \in T}\sum_{\sigma \in \mathcal{L}(T):\sigma(1)=a} \Pro(\sigma|\sigma_i).$$

\texttt{Rank-Rank:} Voter \textit{i} reports her entire observed noisy ranking, i.e., $v_i=\sigma_i$, and provides the estimate of other's rankings i.e,  $p_i \in \mathcal{L}(T)$ such that $$\sum_{\sigma \in \mathcal{L(A)}:\sigma(1)=q_i(x)} \Pro(\sigma|\sigma_i) \geq \sum_{\sigma \in \mathcal{L}(T):\sigma(1)=q_i(y)} \Pro(\sigma|\sigma_i)$$ for all $x>y$.

\section{Common Voting Rules}
\label{appendix:common_voting_rules}

Vote aggregation rules are social choice functions that are used to aggregate individual votes to make conclusions about the collective opinion of a multi-candidate voting system \parencite{brandt2016handbook}. Given below are the vote aggregation rules used in our study. We will only focus on aggregating ranked preferences.
\begin{table}[!h]
\centering
\begin{tabular}{c|l}
\textbf{Number of Voters} & \textbf{Preference Profile} \\
\hline
\noalign{\smallskip}
44 & $A \succ B \succ C \succ D$ \\
24 & $B \succ C \succ D \succ A$ \\
18 & $C \succ D \succ B \succ A$ \\
14 & $D \succ C \succ B \succ A$ \\
\noalign{\smallskip}
\end{tabular}
\caption{Voter Preferences}
\label{table:voter_preferences}
\end{table}

\subsection{Borda}

The Borda rule \parencite{borda1781m} is a voting rule in which voters order candidates by ranked preference, and candidates are awarded points based on their position in each voter's ranking. The winner is the candidate with the highest total score after all votes are counted.
It can be mathematically represented as:
\[
\text{Borda Score}(a) = \sum_{i=1}^{n} (m-1 - \sigma_i^{-1}(a))
\]
where $\sigma_i^{-1}(a)$ represents the position at which alternative a is present. The aggregated ranking is derived by sorting the Borda scores of the alternatives in descending order.

\begin{example}
    Apply Borda Rule to the preference profile given in \Cref{table:voter_preferences}
\end{example}

\Cref{table:voter_preferences} provides the preference profiles of voters. Applying Borda Rule, we find that Borda Scores of A, B, C, D are 132, 192, 174, and 102 respectively. Thus arranging the scores in descending order results in the aggregated ranking of $B \succ C \succ A \succ D$.

\subsection{Copeland}

The Copeland rule \parencite{copeland1951reasonable} is a voting method used to select a single winner from a set of candidates based on pairwise comparisons between each pair of candidates. In the Copeland method, each candidate receives a score based on the number of head-to-head contests they win against other candidates, with ties potentially receiving a half point for each candidate involved in the tie. It can be mathematically represented as:

\[
\text{Copeland Score(a)} = \sum_{\substack{a,b \in A, \\ b \neq a}} 
\begin{cases} 
1 & \text{if } V(a, b) > V(b, a) \\
0.5 & \text{if } V(a, b) = V(b, a) \\
0 & \text{if } V(a, b) < V(b, a)
\end{cases}
\]

where $V(a, b)$ represents all the voters who preferred $a$ over $b$. The aggregated ranking is derived by sorting the Copeland scores of the alternatives in descending order.

\begin{example}
    Apply Copeland Rule to the preference profile given in \Cref{table:voter_preferences}
\end{example}
Applying Copeland Rule (with Borda tie-breaking) to the preference profiles given in \Cref{table:voter_preferences} results in the following pairwise table:

\begin{table}[h]
\centering
\begin{tabular}{c|c}
\textbf{Pairwise Comparisons} & \textbf{Winner} \\
\hline
\noalign{\smallskip}
A vs B & B \\
A vs C & C \\
A vs D & D \\
B vs C & B \\
B vs D & B \\
C vs D & C \\
\noalign{\smallskip}
\end{tabular}
\caption{Results of Pairwise Comparisons}
\end{table}

Thus, arranging the alternatives in decreasing order of number of time they become winners results in the aggregated ranking of $B \succ C \succ D \succ A$.

\subsection{Maximin}

The Maximin rule \parencite{young1977extending}, also known as the Simpson-Kramer method , selects a winner from a set of candidates by considering the strength of a candidate's worst-case pairwise comparison against all other candidates. It identifies the candidate whose least favorable comparison is superior to those of the others, aiming to find the most robust candidate against the strongest opponent. This can be mathematically expressed as:

\[
\text{Maximin Score(a)} = \min_{b \in A, b \neq a} V(a, b)
\]
where $V(a, b)$ represents all the voters who preferred $a$ over $b$. The aggregated ranking is derived by sorting the Maximin scores of the alternatives in descending order.

\begin{example}
    Apply Maximin Rule to the preference profile given in \Cref{table:voter_preferences}
\end{example}

In order to apply Maximin Rule to the preference profiles given in \Cref{table:voter_preferences} we first analyze the worst pairwise defeat and its margin by the following table:

\begin{table}[h]
\centering
\begin{tabular}{c|c|c}
\textbf{Alternative} & \textbf{Worst Pairwise Defeat} & \textbf{Margin of Worst Pairwise Defeat} \\
\hline
\noalign{\smallskip}
A & 56 & 12 \\
B & 0 & -12 \\
C & 68 & 36 \\
D & 86 & 72 \\
\noalign{\smallskip}
\end{tabular}
\caption{Analysis of Worst Pairwise Defeats}
\end{table}

The alternative that has a higher score of worst pairwise defeat or that loses by a higher margin is considered worse off. Thus, arranging in ascending order of the scores of any of the column results in the aggregated ranking of $B \succ A \succ C \succ D$.
\subsection{Schulze}

The Schulze rule \parencite{schulze2011new} selects a ranking of a set of candidates based on the strength of preferences expressed by voters. The strength of a preference is considered to be the number of voters who prefer one candidate over another. For every pair of candidates, a directed graph is constructed where the edges represent the strength of preference. The method then calculates the strongest path (defined as the weakest link in the path being as strong as possible) between every pair of candidates. A candidate wins if, for every other candidate, there exists a stronger (or equal strength) path to that candidate than from that candidate. It can be mathematically expressed as:

Initially, for all pairs of candidates \(a, b\):

\[
P(a, b) = \begin{cases} 
V(a, b) & \text{if } V(a, b) > V(b, a) \\
0 & \text{otherwise}
\end{cases}
\]

Then, for each $a, b, c \in A$ with $a \neq b \neq c$, update:

\[
P(a, b) = \max(P(a, b), \min(P(a, c), P(c, b)))
\]

For each candidate $a$, calculate a comprehensive score that may involve the sum of all positive differences $P(a, b) - P(b, a)$ against other candidates $b$. The aggregated ranking is obtained by sorting these scores in descending order.

\begin{example}
    Apply Schulze Rule to the preference profile given in \Cref{table:voter_preferences}
\end{example}
In order to apply Schulze Rule to the preference profile given in \Cref{table:voter_preferences}, we first generate a Directed Graph where the vertices denote the alternatives and the weight of the edges denote the score by which one alternative defeats the other. For the preferences in \Cref{table:voter_preferences}, we get the following graph:
\usetikzlibrary{arrows.meta, positioning}
\begin{figure}[h]
\centering
\begin{tikzpicture}[>={Latex[width=2mm,length=2mm]}, node distance=3cm]
  % Nodes positioned in square layout
  \node (A) at (0,0) {A};
  \node (B) at (0,3) {B};
  \node (C) at (3,3) {C};
  \node (D) at (3,0) {D};
  
  % Edges with weights
  \draw[->] (B) -- node[left] {56} (A);
  \draw[->] (C) to [bend right=20] node[left] {56} (A); % Bent edge to avoid overlap
  \draw[->] (D) -- node[below] {56} (A);
  \draw[->] (B) -- node[above] {68} (C);
  \draw[->] (B) to [bend right=20] node[below] {68} (D); % Bent edge to avoid overlap
  \draw[->] (C) -- node[right] {86} (D);
\end{tikzpicture}
\caption{Directed Graph with Vertices Arranged as Corners of a Square}
\end{figure}
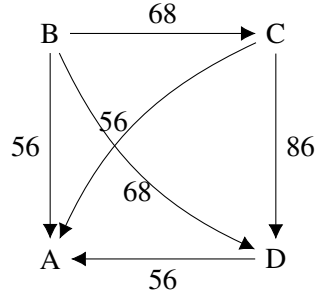

The \Cref{fig:strongest_path} finds the strongest path for each pair of vertices:

\begin{table}[h]
\centering
\begin{tabular}{|c|c|c|c|c|}
\hline
  & A & B & C & D \\ \hline
A & X  & 0 & 0 & 0 \\ \hline
B & B $\rightarrow$ 68 $\rightarrow$ C $\rightarrow$ 86 $\rightarrow$ D $\rightarrow$ 56 $\rightarrow$ A & X & B $\rightarrow$ 68 $\rightarrow$ C & B $\rightarrow$ 68 $\rightarrow$ C $\rightarrow$ 86 $\rightarrow$ D \\ \hline
C & C $\rightarrow$ 86 $\rightarrow$ D $\rightarrow$ 56 $\rightarrow$ A & 0 & X  & C $\rightarrow$ 86 $\rightarrow$ D \\ \hline
D & D $\rightarrow$ 56 $\rightarrow$ A & 0 & 0 & X  \\ \hline
\noalign{\smallskip}
\end{tabular}
\caption{Strongest Path between Vertices}
\label{fig:strongest_path}
\end{table}

Finally, the \Cref{fig:strongest_path} provides the weakest link between each pair of vertices:

\begin{table}[h]
\centering
\begin{tabular}{|c|c|c|c|c|}
\hline
  & A & B & C & D \\ \hline
A &  X & 0 & 0 & 0 \\ \hline
B & 56 & X  & 68 & 68 \\ \hline
C & 56 & 0 & X  & 86 \\ \hline
D & 56 & 0 & 0 &  X \\ \hline
\noalign{\smallskip}
\end{tabular}
\caption{Strength of weakest link between vertices}
\label{fig:weakest_link}
\end{table}

Arranging the alternatives in decreasing order of their pairwise wins in \Cref{fig:weakest_link} results in the aggregated ranking of $B \succ C \succ D \succ A$.

\section{Algorithms}
\label{appendix:algorithms}

\Cref{appendix:extract_reports_subsec} details the approach used to extract information from various elicitation formats. Explanation and pseudocode for Partial-SP and Aggregated-SP is provided in \Cref{subsec:Partial-SP_pseudo} and \Cref{subsec:Aggregated-SP_pseudo}, respectively.

\subsection{Extracting Reports from Voters}
\label{appendix:extract_reports_subsec}
\Cref{algo:extract_reports} describes how information is extracted from different elicitation formats. The algorithm takes as input the votes ($\{v_{i}\}_{i \in [n]}$) and predictions ($\{p_{i}\}_{i \in [n]}$) of $n$ voters, pair ($a,b$) of alternatives, and parameters $\alpha$ and $\beta$. Using grid search on the datasets from the three domains, it was observed that any $\alpha>0.5$ and $\beta<0.5$ can be used. In our experiments we use $\alpha=0.55$ and $\beta=0.1$.
For votes and predictions expressed as top choices or approvals, if $a$ is the preferred alternative, $v^{(a,b)}_i$ is set to 1, with $p^{(a,b)}_i$ adjusted to $\alpha$ if the prediction aligns with the vote, or to $\beta$ otherwise; if $b$ is chosen, $v^{(a,b)}_i$ is 0, and $p^{(a,b)}_i$ is set to $1-\alpha$ or $1-\beta$, depending on prediction alignment. For ranks, $v^{(a,b)}_i$ indicates preference based on rank ordering, and $p^{(a,b)}_i$ reflects the confidence in this preference ($\alpha$ or $1-\alpha$ if aligned, $\beta$ or $1-\beta$ if not). The algorithm returns the processed vote and prediction reports.

\begin{algorithm}[!htpb]

\DontPrintSemicolon
\KwInput{Votes $\{v_{i}\}_{i \in [n]}$, Predictions $\{p_{i}\}_{i \in [n]}$,  pair $(a,b)$, and probabilities $\alpha > 0.5$, and $\beta < 0.5$.}

    \For{$i = 1, \ldots, n$}
    {
        \tcc{Extract Vote Information}
        \uIf{$v_i$ is elicited as \textit{rank }}
            {
                Set $v^{(a,b)}_i = \left\{\begin{array}{cc}
                    1  &  \textrm{ if } a \succ_{v_i} b\\
                    0  &  \textrm{ o.w. }
                \end{array} \right.$
            }
        \ElseIf{$v_i$ is elicited as \textit{top} or \textit{approval}}
            {
                Set $v^{(a,b)}_i = \left\{\begin{array}{cc}
                    1  &  \textrm{ if } v_i=a\\
                    0  &  \textrm{ if } v_i=b
                \end{array} \right.$        
            }
        \Else{
            Ignore ($v_i,q_i$)
        }
        \tcc{Extract Prediction Information}
        \uIf{$p_i$ is elicited as \textit{rank }}
            {
             Set $p^{(a,b)}_i = \left\{\begin{array}{cc}
                \alpha  &  \textrm{ if } a \succ_{p_i} b \textrm{ and } v_i^{(a,b)} = 1\\
                1-\alpha & \textrm{ if } b \succ_{p_i} a \textrm{ and } v_i^{(a,b)} = 1 \\
                1-\beta & \textrm{ if } a \succ_{p_i} b \textrm{ and } v_i^{(a,b)} = 0\\
                \beta &  \textrm{ o.w. }
            \end{array} \right.$ 
           }
        \ElseIf{$p_i$ is elicited as \textit{top} or \textit{approval}}
        {
             Set $p^{(a,b)}_i = \left\{\begin{array}{cc}
                \alpha  &  \textrm{ if } p_i=a \textrm{ and } v_i^{(a,b)} = 1\\
                1-\alpha & \textrm{ if } p_i=b \textrm{ and } v_i^{(a,b)} = 1 \\
                1-\beta & \textrm{ if } p_i=a \textrm{ and } v_i^{(a,b)} = 0\\
                \beta &  \textrm{ o.w. }
            \end{array} \right.$        
        }
        \Else{
            Set $p_i^{a,b} = \frac{1}{2}$
        }
    }
\Return{$\left(\{v^{(a,b)}_i, p^{(a,b)}_i\}_{i \in [n]}\right)$}

\caption{Extract-Reports} \label{algo:extract_reports}
\end{algorithm}

\subsection{Partial-SP}
\label{subsec:Partial-SP_pseudo}
\Cref{algo:partial_sp} describes the proposed Partial-SP aggregation approach. The algorithm takes as input the number of voters ($n$), number of alternatives ($m$), set of all subsets that voters voted on ($S$), voters' votes ($\{v_{i,j}\}_{i \in [n], j \in S}$), voters' predictions ($\{p_{i,j}\}_{i \in [n], j \in S}$), parameters $\alpha$ and $\beta$ representing the conditional probabilities that would be returned for the predictions, and Vote Aggregation Rule ($\mathcal{V}$).  For $n$ voters and $m$ alternatives, depending on the elicitation format, the voters will be providing votes ($v_{i,j}$) and predictions ($p_{i,j}$) as Top choices, Approvals($t$), or Rankings over a subset $S_j \in S$. Additionally, we use Borda, Copeland, and Maximin aggregation rule for $\mathcal{V}$.

For every pair of alternatives ($a,b$) within a subset $S_j$, we extract information about the number of people that voted for $a \succ b$ and $b \succ a$ represented by $f(a \succ b)$ and $f(b \succ a)$ respectively, and the conditional probability of their predictions $\left(g(0|0), g(0|1), g(1|0), g(1|1)\right)$. Refer to \Cref{appendix:extract_reports_subsec} for a detailed explanation of how information is extracted for different elicitation formats. With this, the prediction normalized score $\bar{V}(a \succ b)$ and $\bar{V}(b \succ a)$ is calculated and a higher prediction normalized score decides the correct ordering for each pair ($a,b$) within a subset $S_j$. This results in a partial ground-truth ranking representing the correct relative ordering for the alternatives within that subset. $Q$ represents the set of partial ground-truth ranking for all subsets. Finally, Vote-Aggregation rule $\mathcal{V}$ is applied on $Q$ to find the complete ground-truth ordering of the $m$ alternatives.
\begin{algorithm}[!htpb]

\DontPrintSemicolon
\KwInput{Number of Voters $n$, Subsets of alternatives $S$, Votes $\{v_{i,j}\}_{i \in [n], j \in S}$, Predictions $\{p_{i,j}\}_{i \in [n], j \in S}$, probabilities $\alpha > 0.5$ and $\beta < 0.5$, and Vote-Aggregation rule $\mathcal{V}$}
    $Q \leftarrow \phi$\\
    \For{$j = 1, \ldots, |S|$}
    {
        $G \leftarrow \phi$\\
    \tcc{Apply SP-voting on votes and predictions for each subset}
        \For{each pair of alternatives $(a,b)$ in $S_j$}
        {
            $\left(\{v^{(a,b)}_i, p^{(a,b)}_i\}_{i \in [n]}\right) \leftarrow \textrm{Extract-Reports}(\{v_{ij},p_{ij}\}_{i \in [n], j \in S}, pair (a,b), \alpha, \beta)$\\
             \tcc{Signal $1$ (resp. $0$) corresponds to $a \succ b$ (resp. $b \succ a$).}
             $N_{a \succ b} = \set{c : v_c^{(a,b)} = 1}$\\
             $N_{b \succ a} = \set{c : v_c^{(a,b)} = 0}$\\
             $f(a \succ b) = \sum_i 1\set{v_i^{(a,b)} = 1} / (\abs{N_{a \succ b}} + \abs{N_{b \succ a}})$\\
             $f(b \succ a) = 1 - f(a \succ b)$\\
             $g(1 \mid 1) = \frac{1}{\abs{N_{a \succ b}}} \sum_{i \in N_{a \succ b}} p_i \textrm{  and  } P(0 \mid 1) = 1 - P(1 \mid 1)$\\
             $g(1 \mid 0) = \frac{1}{\abs{N_{b \succ a}}} \sum_{i \in N_{b \succ a}} p_i \textrm{  and  } P(0 \mid 0) = 1 - P(1 \mid 0)$\\
             \tcc{Compute prediction-normalized vote}
             \begin{flalign*}
             &\bar{V}(a \succ b) = f(a \succ b) \sum_i  \frac{g\left(v_i^{(a,b)}|1 \right)}  {g\left(1|v_i^{(a,b)}\right)}&&\\
             &\bar{V}(b \succ a) = f(b \succ a) \sum_i  \frac{g\left(v_i^{(a,b)}|0 \right)}  {g\left(0|v_i^{(a,b)}\right)}&&
             \end{flalign*}
             \tcc{Ties are broken uniformly at random}
             \uIf{$\bar{V}(a\succ b) < \bar{V}(b\succ a)$} 
                {
                    $G \leftarrow G \cup a \succ b$\\
                }
                \Else{$G \leftarrow G \cup b \succ a$\\}
        }
        $Q_j \leftarrow Q_j \cup G$   
    }
    $GT \leftarrow \mathcal{V}(Q)$\\
    \Return $GT$

\caption{Partial-SP} \label{algo:partial_sp}
\end{algorithm}

\subsection{Aggregated-SP}
\label{subsec:Aggregated-SP_pseudo} 
\Cref{algo:aggregated_sp} describes the proposed Aggregated-SP aggregation approach. The algorithm takes as input the number of voters ($n$), number of alternatives ($m$), set of all subsets that voters voted on ($S$), voters' votes ($\{v_{i,j}\}_{i \in [n], j \in S}$), voters' predictions ($\{p_{i,j}\}_{i \in [n], j \in S}$), parameters $\alpha$ and $\beta$ representing the conditional probabilities that would be returned for the predictions, and Vote Aggregation Rule ($\mathcal{V}$).  For $n$ voters and $m$ alternatives, depending on the elicitation format, the voters will be providing votes ($v_{i,j}$) and predictions ($p_{i,j}$) as Top choices, Approvals($t$), or Rankings over a subset $S_j \in S$. Additionally, we use Borda, Copeland, and Maximin aggregation rule for $\mathcal{V}$.

For every subset $S_j$, we aggregate the votes using $\mathcal{V}$, resulting in the set of partial aggregated subsets represented by $Q$. $Q$ is a dictionary containing the alternative and its corresponding score according to the aggregation rule $\mathcal{V}$. We now apply SP-Algorithm on $Q$. For every pair of alternatives ($a,b$) within a partial aggregated subset $Q_j$, we extract information about the conditional probability of their predictions $\left(g(0|0), g(0|1), g(1|0), g(1|1)\right)$. Refer to \Cref{appendix:extract_reports_subsec} for a detailed explanation of how information is extracted for different elicitation formats. With this, the prediction normalized score $\bar{V}(a \succ b)$ and $\bar{V}(b \succ a)$ is calculated where we use the scores of the alternatives represented by $Q(a)$ and $Q(b)$. A higher prediction normalized score decides the correct ordering for each pair ($a,b$). Parsing all pairs, results in the complete ground-truth ordering of the $m$ alternatives.
\begin{algorithm}[!htpb]

\DontPrintSemicolon
\KwInput{Number of Voters $n$, Subsets of alternatives $S$, Votes $\{v_{i,j}\}_{i \in [n], j \in S}$, Predictions $\{p_{i,j}\}_{i \in [n], j \in S}$, probabilities $\alpha > 0.5$ and $\beta < 0.5$, and Vote-Aggregation rule $\mathcal{V}$}
    $Q \leftarrow \phi$\\
    $GT \leftarrow \phi$\\
    \For{$j = 1, \ldots, |S|$}
    {
        $G \leftarrow \phi$\\
        $G \leftarrow \mathcal{V}(v_{i,j})$\\

        $Q_j \leftarrow Q_j \cup G$\\
            \tcc{Apply pairwise SP-voting on aggregated votes and non-aggregated predictions}
        \For{each pair of alternatives $(a,b)$ in $Q_j$}
        {
            $\left(\{v^{(a,b)}_i, p^{(a,b)}_i\}_{i \in [n]}\right) \leftarrow \textrm{Extract-Reports}(\{v_{ij},p_{ij}\}_{i \in [n], j \in S}, pair (a,b), \alpha, \beta)$\\
             $N_{a \succ b} = \set{c : v_c^{(a,b)} = 1}$\\
             $N_{b \succ a} = \set{c : v_c^{(a,b)} = 0}$\\
             $g(1 \mid 1) = \frac{1}{\abs{N_{a \succ b}}} \sum_{i \in N_{a \succ b}} p_i \textrm{  and  } P(0 \mid 1) = 1 - P(1 \mid 1)$\\
             $g(1 \mid 0) = \frac{1}{\abs{N_{b \succ a}}} \sum_{i \in N_{b \succ a}} p_i \textrm{  and  } P(0 \mid 0) = 1 - P(1 \mid 0)$\\
             \tcc{Compute prediction-normalized vote}
             \begin{flalign*}
             &\bar{V}(a \succ b) = Q(a) \sum_i  \frac{g\left(v_i^{(a,b)}|1 \right)}  {g\left(1|v_i^{(a,b)}\right)} &&\\
             &\bar{V}(b \succ a) = Q(b) \sum_i  \frac{g\left(v_i^{(a,b)}|0 \right)}  {g\left(0|v_i^{(a,b)}\right)}&&
             \end{flalign*}
             \tcc{Ties are broken uniformly at random}
             \uIf{$\bar{V}(a\succ b) < \bar{V}(b\succ a)$}
                {
                    $GT \leftarrow GT \cup a \succ b$
                }
                \Else{$GT \leftarrow GT \cup b \succ a$}
        }
    }
    \Return{$GT$}

\caption{Aggregated-SP Aggregation \label{algo:aggregated_sp}}
\end{algorithm}

\textbf{Response qualifications \& payment.} To ensure reliable responses, we established several qualification criteria for participants in our study on MTurk. Participants were required to have: (a) a minimum approval rate of 90\% for their previous tasks, (b) completed at least 100 tasks on the platform, and (c) specified the region as US and Canada (to ensure language proficiency). 
To check attentiveness of the participants, we included an additional quiz that repeated one of the previous questions.
% After the survey, participants took a quiz to recall the answer from a previous question, with a financial incentive for correct answers to assess attentiveness and ensure data integrity. 
The compensation structure included a base payment of 50 cents for completing the survey, which encompassed tutorials, questions, and evaluations. Additionally, a 50-cent bonus was offered for accurately completing the attentiveness quiz.
%, incentivizing thorough engagement and careful consideration throughout the survey.

%\end{minipage}

\section{Additional Details of Simulation}
\label{sec:simulation_details}
\subsection{Parameter Inference for the concentric mixtures of Mallows model}
To assess the accuracy of the simulations, we fit the concentric mixtures of Mallows model to both the real and simulated data to infer the model parameters. This process allows us to compare the inferred parameters, thereby evaluating how effectively the model captures the underlying patterns in the data. Specifically, we infer the proportion of experts ($p$), dispersion parameters of experts' votes ($\phi_{E\text{-}votes}$) and predictions ($\phi_{E\text{-}predictions}$), as well as the dispersion parameters of non-experts' votes ($\phi_{NE\text{-}votes}$) and predictions ($\phi_{NE\text{-}predictions}$). For the real data, the distribution of these parameters can help us in understanding the voting behavior of experts and non-experts. For the synthetic data, generated based on known parameters, we can check if the model accurately recovers these parameters.

The parameters are inferred using Bayesian inference, implemented through the No-U-Turn Sampler (NUTS) \parencite{hoffman2014no}, an extension of Hamiltonian Monte Carlo (HMC) available in Stan \parencite{carpenter2017stan}. Before sampling, we calculate the Kendall-Tau distance between the ground-truth ranking and the votes ($\tau_{votes}$) and predictions ($\tau_{predictions}$) of all the voters. We then define the priors for our parameters as follows: 
\begin{align*}
p &\sim \beta(1, 2.5)\\
\phi_{E\text{-}votes} &\sim \mathcal{N}(0.15, 0.075) \\
\phi_{E\text{-}predictions} &\sim \mathcal{N}(0.7, 0.3) \\
\phi_{NE\text{-}votes} &\sim \mathcal{N}(0.7, 0.3) \\
\phi_{NE\text{-}predictions} &\sim \mathcal{N}(0.7, 0.3)
\end{align*}

We then combine the likelihood of the parameters of our mixture model and perform inference over the following target function: 

\begin{align*}
\text{target} &+= \log \text{mix}\left( 
    p, \right.\\
    &\quad \left.\mathcal{N}(\tau_{\text{votes}}[n] \mid 0, \phi_{E\text{-}votes}) + 
    \mathcal{N}(\tau_{\text{predictions}}[n] \mid 0, \phi_{E\text{-}predictions}), \right.\\
    &\quad \left.\mathcal{N}(\tau_{\text{votes}}[n] \mid 0, \phi_{NE\text{-}votes}) + 
    \mathcal{N}(\tau_{\text{predictions}}[n] \mid 0, \phi_{NE\text{-}predictions})
\right)
\end{align*}

The log-likelihood function incorporates the observed $\tau_{votes}$ and $\tau_{predictions}$ using the mixture model to account for the possibility that each voter could be an expert or a non-expert. We run four chains, each for 4000 iterations with 1000 iterations of warm-up for the NUTS algorithm. The algorithm explores the parameter space and updates the parameter estimates iteratively based on the input data and priors.

\subsection{Synthetic Data Generation}
The synthetic data was generated by simulating voter behavior using the concentric mixtures of Mallows model to construct preference rankings. This approach allows for the simulation of both expert and non-expert voters, with experts' votes closely aligning with a ground truth ranking and non-experts showing a broader dispersion in their preferences. The subsets to be voted on were generated as described in Section \ref{Experimental_Design}.

\textbf{Voter Classification.} To simulate the voting process effectively, voters are initially classified into experts and non-experts. Since we need the experts to be in the minority, we determine the probability of a voter being an expert by sampling the proportion parameter as $p \sim \beta(1,2.5)$. 

\textbf{Voting Simulation}: 
The voting behavior is simulated by the concentric mixtures of Mallows model.
\begin{equation}
%\label{mallows-mixture}
   \Prs(\pi_i \mid \pi^\star ) = p \cdot \Prs(\pi_i \mid \pi^\star, \phi_{\text{E}}) + (1 - p) \cdot \Prs(\pi_i \mid \pi^\star, \phi_{\text{NE}}) 
\end{equation}
where $\phi_{\text{E}} \sim N(0.15, 0.075)$ and $\phi_{\text{NE}} \sim N(0.9, 0.4)$. Kendall-Tau distance is used as the distance metric between the rankings $\pi_i$ and $\pi^\star$.

\section{Missing Results and Analysis}

\subsection{Evaluation Metrics}
\label{Metrics}
To quantitatively assess the alignment between the rankings derived from the voting rule, denoted as $\sigma'$, and the ground truth ranking, $\sigma^*$, we use metrics described in the following subsections.

\subsubsection{Pairwise hit rate}
\label{appendix:hit_rate_difference}

This metric evaluates the accuracy of the voting rule in identifying the correct relative order between pairs of alternatives, focusing on pairs with an increasing distance in their positions in the ground truth ranking:
\[ \text{Pairwise hit rate} = \frac{1}{|P|} \sum_{(i,j) \in P} \mathbf{1}((\sigma'(i) < \sigma'(j)) = (\sigma^*(i) < \sigma^*(j))) \]
where \(P\) represents the set of pairs determined by the difference in their positions in \(\sigma^*\), and \(\mathbf{1}\) is the indicator function.

\subsubsection{\texorpdfstring{Top-$t$}{} hit rate}
\label{appendix:hit_rate_topt}

The Top-$t$ hit rate metric quantitatively assesses the accuracy of a ranking algorithm by measuring the presence of the top-$t$ elements from the ground truth ranking within the top-$t$ elements of the predicted ranking. The formula for calculating the Top-$t$ hit rate for rankings up to a given $t$ is given by:
\[
\text{Top-$t$ hit rate} = \frac{|\text{Top-$t$ elements in ground truth} \cap \text{Top-$t$ elements in predicted ranking}|}{t}
\]

% For the Top-$t$ rankings, this metric assesses the precision in capturing the relative order of alternatives within the Top-$t$ alternatives with respect to ground-truth ranking:
% \[ \text{Top-$t$ hit rate} = \frac{1}{k-1} \sum_{i=1}^{k-1} \mathbf{1}\left((\sigma'(i) < \sigma'(i+1)) = (\sigma^*(i) < \sigma^*(i+1))\right) \]

\subsubsection{Kendall-Tau Correlation Coefficient}
%\label{Kendall Tau Correlation Coefficient}

The Kendall-Tau correlation coefficient is a measure of the ordinal association between two rankings:
\[ \tau(\sigma', \sigma^*) = \frac{2}{n(n-1)} \sum_{i < j} \mathbf{1}((\sigma'(i) - \sigma'(j))(\sigma^*(i) - \sigma^*(j)) > 0) - 1 \]
where \(n\) is the number of elements in the ranking.

\subsubsection{Spearman's \texorpdfstring{$\rho$}{} }
%\label{Spearman Correlation Coefficient}

The Spearman's $\rho$ or Spearman correlation coefficient between \(\sigma'\) and \(\sigma^*\) quantifies the rank correlation:
\[ \rho(\sigma', \sigma^*) = 1 - \frac{6 \sum_{i=1}^{n} d_i^2}{n(n^2 - 1)} \]
with \(d_i = \sigma'(i) - \sigma^*(i)\) representing the rank difference of each element \(i\) between \(\sigma'\) and \(\sigma^*\).

\paragraph{Bootstrapping.} To ensure the robustness of our estimates, we used bootstrapping to approximate the sampling distribution of various statistics. This method offers insights into the variance and bias of our estimates without relying on the stringent assumptions required by traditional parametric methods. Bootstrapping is particularly advantageous when the distribution of metric values is unknown or does not conform to common distributional assumptions. After generating the bootstrapped distribution of metric values, we calculated the 95\% confidence interval for each metric. Reporting these intervals alongside the metric values serves two purposes: it quantifies the uncertainty of our estimates, providing a transparent measure of their statistical precision, and it enhances the credibility of our findings by acknowledging the variability and potential error margins associated with our estimates.

\subsection{Response-Time, Difficulty and Expressiveness} \label{app:responsetime}
Here, we measure the response time, difficulty, and expressiveness of the elicitation formats we used in our study.

\begin{itemize}
    \item Response Time: We measure the average response time spent by the participants on each question for each elicitation format in our survey. This measure gives us an idea of the cognitive load perceived for each elicitation format \parencite{rauterberg1992method}.
    \item Perceived Difficulty: In the review phase of our MTurk survey, we asked participants to select from `Very Easy', `Easy', `Neutral', `Difficult', and `Very Difficult' to subjectively indicate the ease of answering questions within each elicitation format. This approach provides a measure of the perceived difficulty associated with different elicitation formats.
    \item Perceived Expressiveness: In the review phase of our MTurk survey, we prompted participants to select from the options `Very Little', `Little', `Adequate', `Significant', and `Very Significant' to indicate the amount of information they could convey using each elicitation format.
\end{itemize}

\begin{figure}[!h]
        \centering
        \includegraphics[width=\textwidth]{./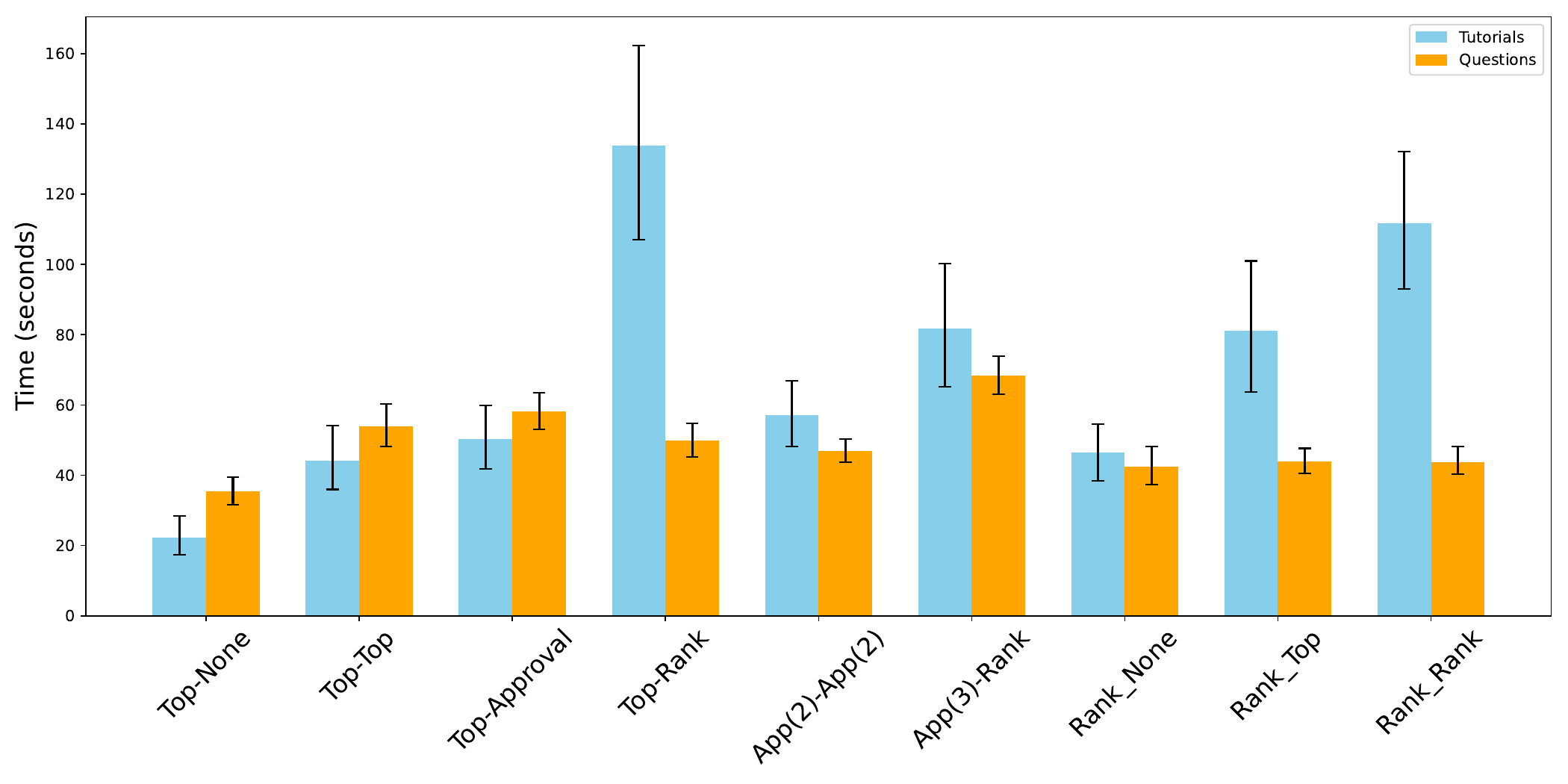}
    \caption{The figure shows the average time spent by the participants on each tutorial and question for different elicitation formats across all domains. As expected, the participants spent similar or more time on the tutorial than on the questions. Additionally, the only elicitation format that has a statistical significance for the Questions is \texttt{Approval(3)-Rank}, where more time is spent by the participants. Thus, for all other elicitation formats, the participants face a similar cognitive complexity while responding to the questions.}
    \label{fig:average_time_spent}
\end{figure}

\begin{figure}[!h]
    \centering
    \begin{subfigure}[b]{0.5\textwidth}  % Specifies the width of the subfigure, 50% of the text width
        \centering
        \includegraphics[width=\textwidth]{./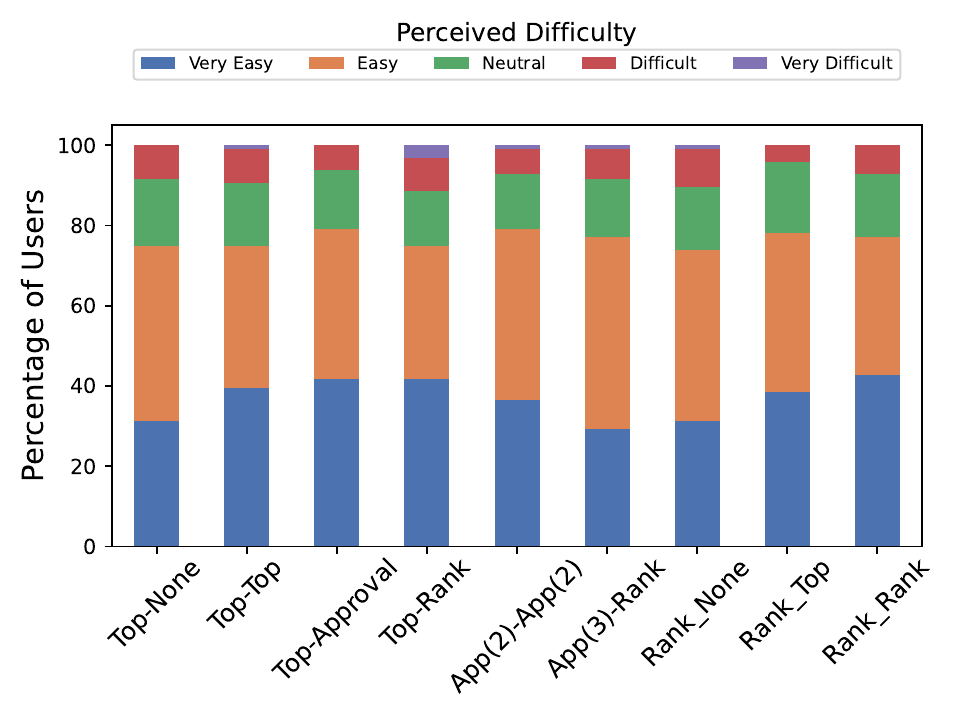}
        \label{fig:difficulty}
    \end{subfigure}%
    \hfill  % This will insert a space between the figures
    \begin{subfigure}[b]{0.5\textwidth}
        \centering
        \includegraphics[width=\textwidth]{./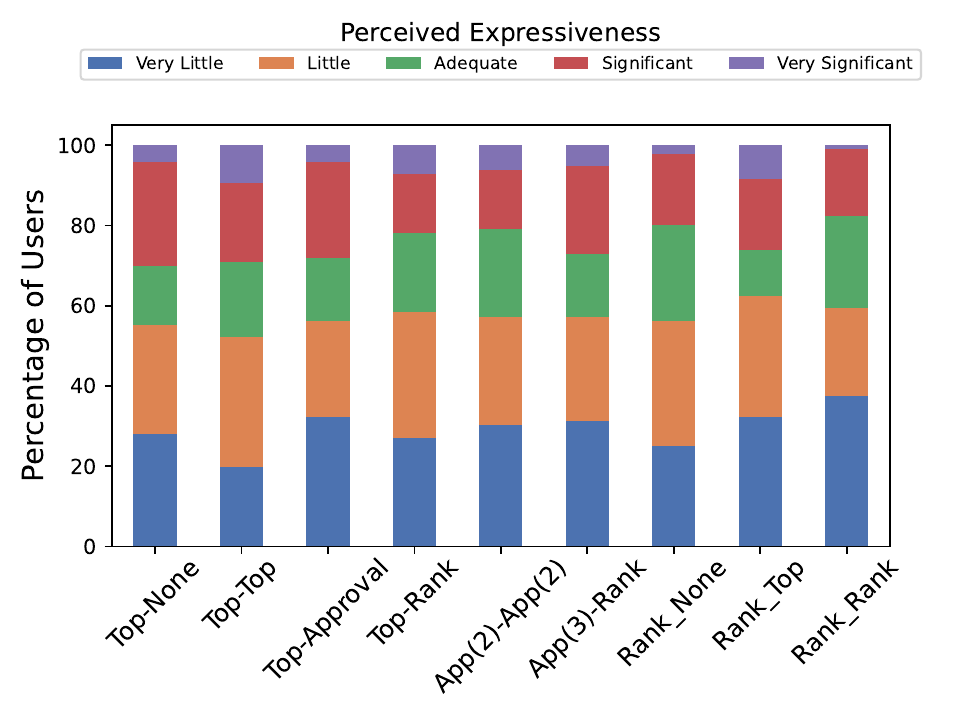}
        \label{fig:expressiveness}
    \end{subfigure}
    \caption{The figure shows the difficulty (easier is better) and expressiveness (higher is better) of different elicitation formats as reported by the participants. A higher percentage of participants found the tasks to be relatively easy, indicating that they could answer questions effortlessly across all elicitation formats. Conversely, they demonstrated similar expressiveness not strongly leaning to either side of the scale.}
    \label{fig:difficulty_and_expressiveness}
\end{figure} 

\subsection{Missing Figures for Predicting the Ground-Truth Ranking}
\label{appendix:missingfig_predict_gt}
\begin{figure}[!h]
    \centering
    \begin{subfigure}[b]{0.32\textwidth}
        \centering        \includegraphics[width=\textwidth]{./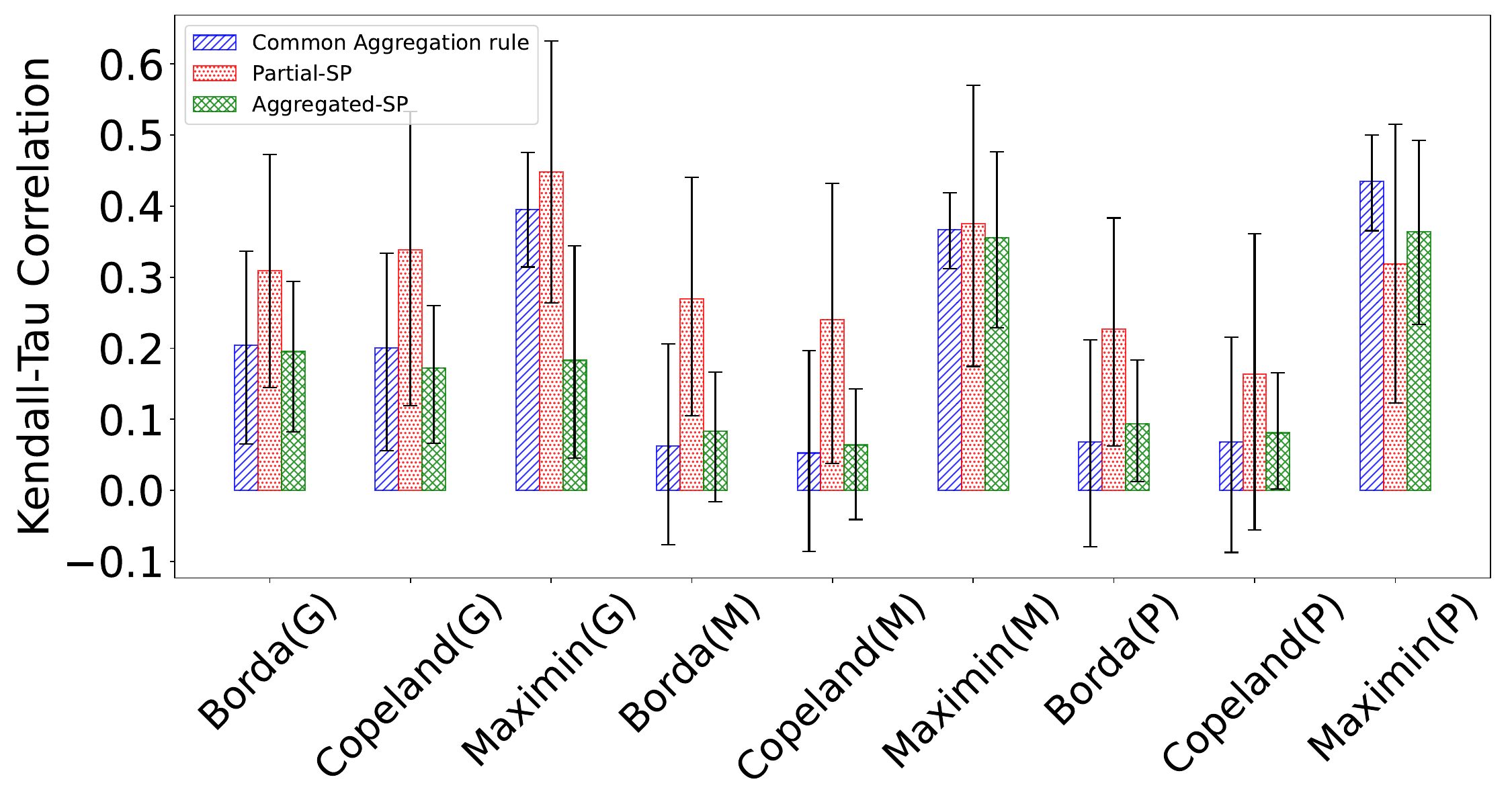}
        \caption{\texttt{Top-Top}}
    \end{subfigure}
    \hfill
    \begin{subfigure}[b]{0.32\textwidth}
        \centering        \includegraphics[width=\textwidth]{./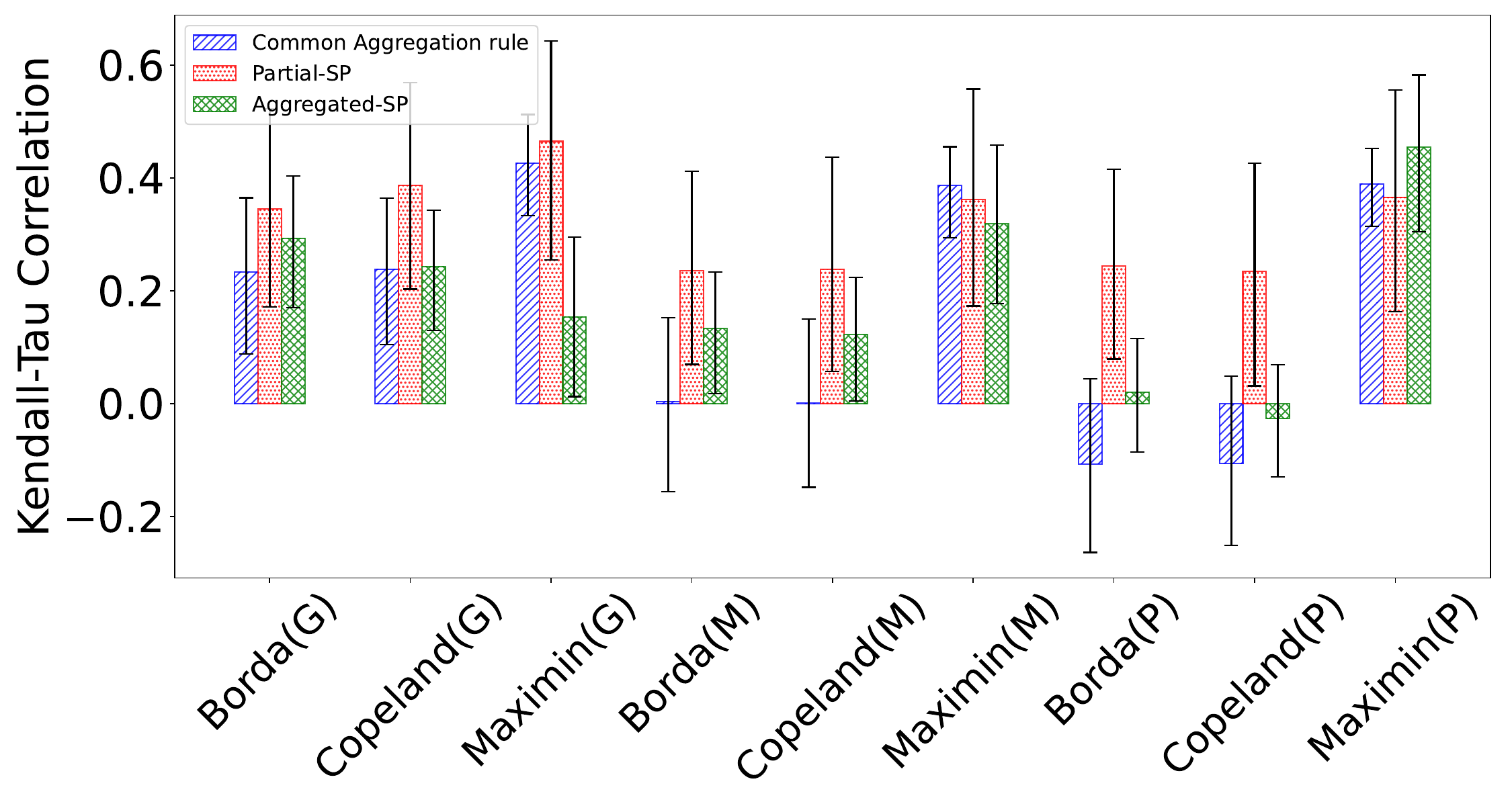}
        \caption{\texttt{Top-Approval(3)}}
    \end{subfigure}
    \begin{subfigure}[b]{0.32\textwidth}
        \centering        \includegraphics[width=\textwidth]{./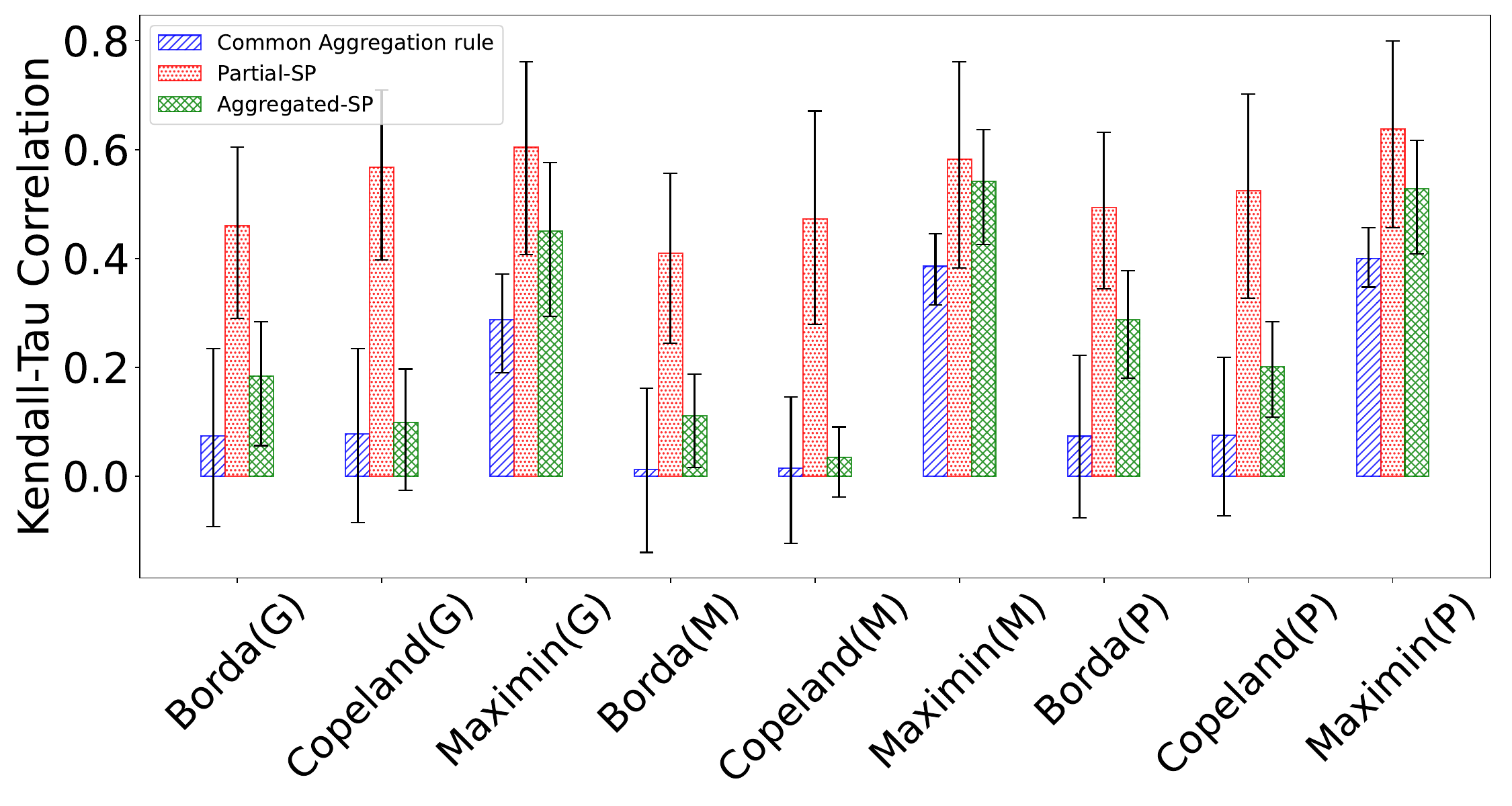}
        \caption{\texttt{Top-Rank}}
    \end{subfigure}
    \vspace{1cm}
    \begin{subfigure}[b]{0.32\textwidth}
        \centering        \includegraphics[width=\textwidth]{./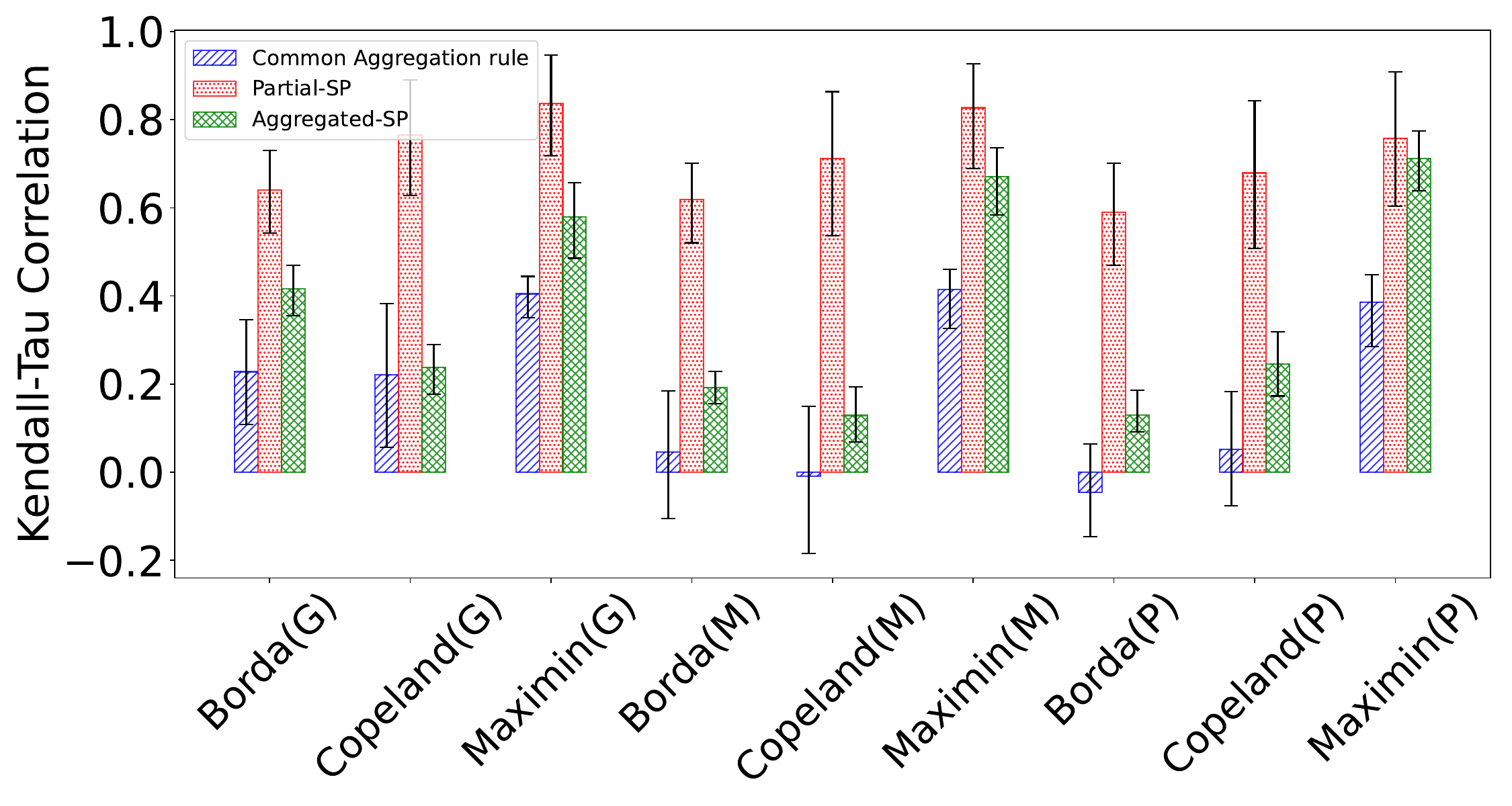}
        \caption{\texttt{Approval(2)-Approval(2)}}
    \end{subfigure}
    \begin{subfigure}[b]{0.32\textwidth}
        \centering        \includegraphics[width=\textwidth]{./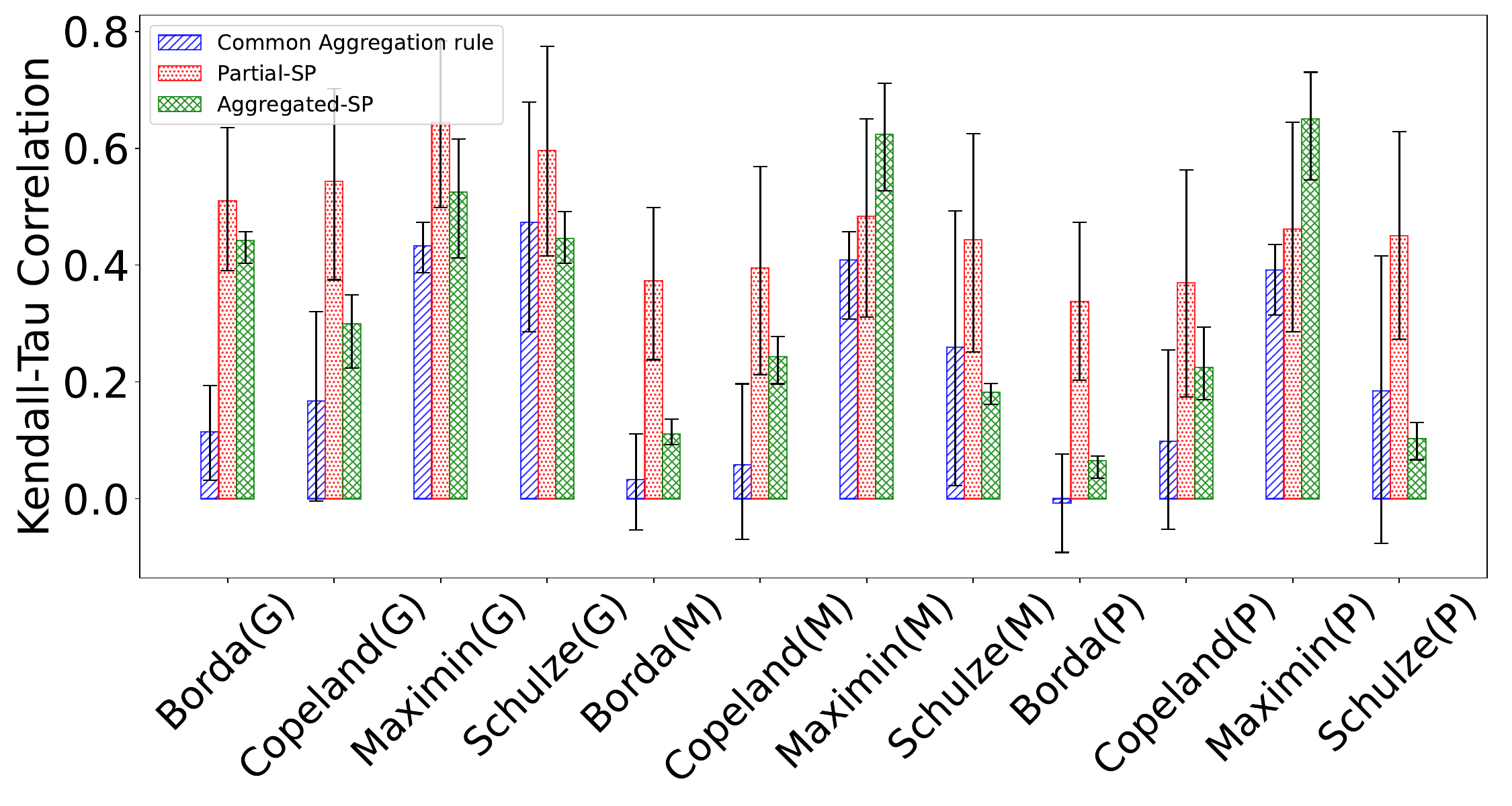}
        \caption{\texttt{Rank-Top}}
    \end{subfigure}
    \begin{subfigure}[b]{0.32\textwidth}
        \centering        \includegraphics[width=\textwidth]{./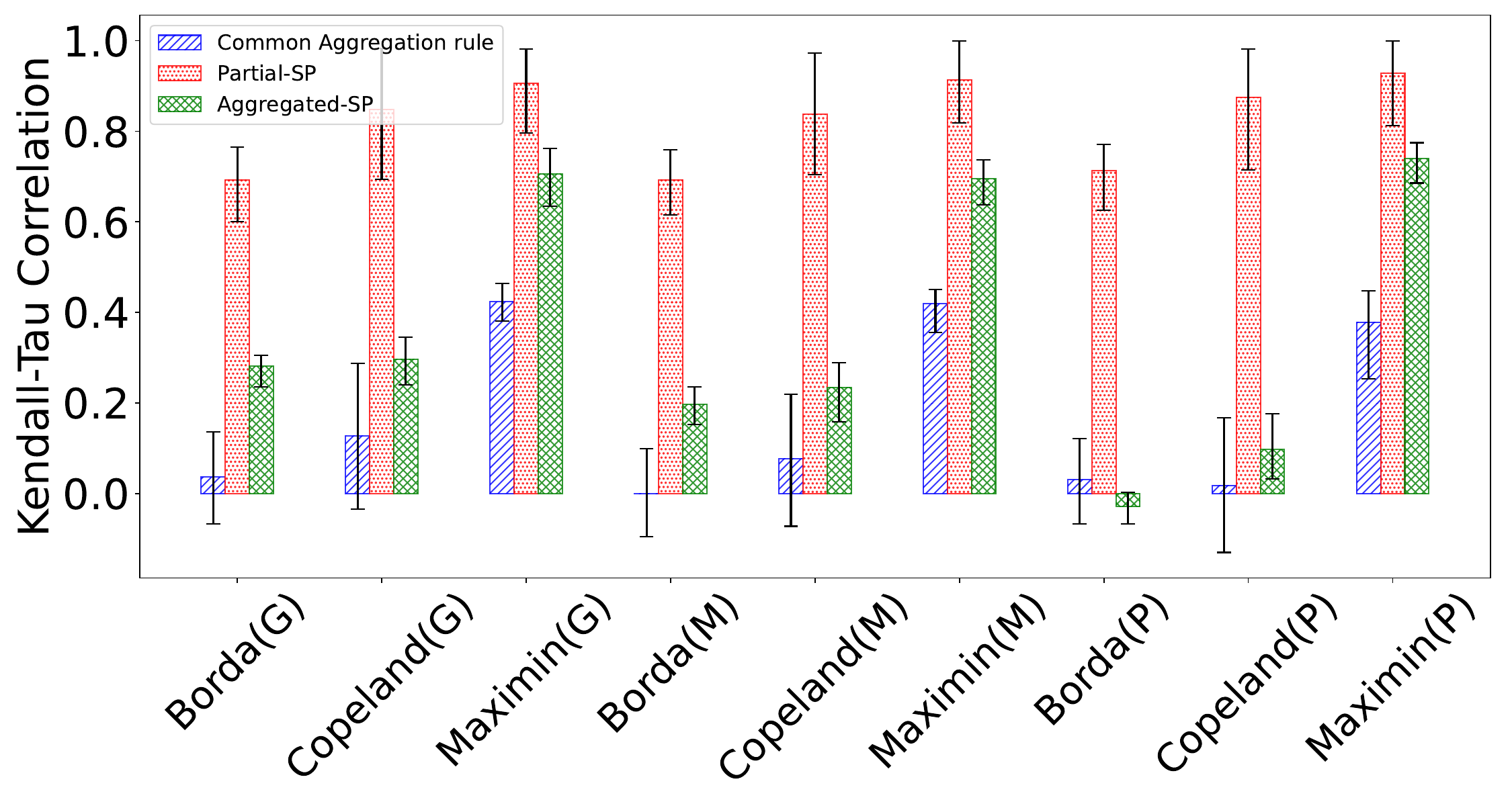}
        \caption{\texttt{Approval(3)-Rank}}
    \end{subfigure}
    \caption{The plots show the Kendall-Tau Correlation between rankings derived from Common Aggregation Rules (blue), Partial-SP(red), and Aggregated-SP(red) for \texttt{Top-Top}, \texttt{Top-Approval(3)}, \texttt{Top-Rank}, \texttt{Approval(2)-Approval(2)}, \texttt{Rank-Top}, and \texttt{Approval(3)-Rank} elicitation formats across Geography(G), Movies(M), and Paintings(P) domains. A high Kendall-Tau Correlation implies higher pairwise alignment of alternatives between the ground-truth ranking and the aggregated ranking. The usage of different aggregation rules for Partial-SP and Aggregated-SP has similar impact on the outcome. However, the performance improves with an increase in information elicited as seen by the high correlation and increases statistical difference between the conventional methods and proposed methods. For example, \texttt{Approval(2)-Approval(2)} recovers ground-truth ranking more accurately than \texttt{Top-Top}. Note: We see Schulze method in \texttt{Rank-Top} as it only works for preference data. }
    \label{fig:partial_aggregated_supplemental_kendalltau}
\end{figure} 

\begin{figure}[!h]
    \centering
    \begin{subfigure}[b]{0.32\textwidth}
        \centering        \includegraphics[width=\textwidth]{./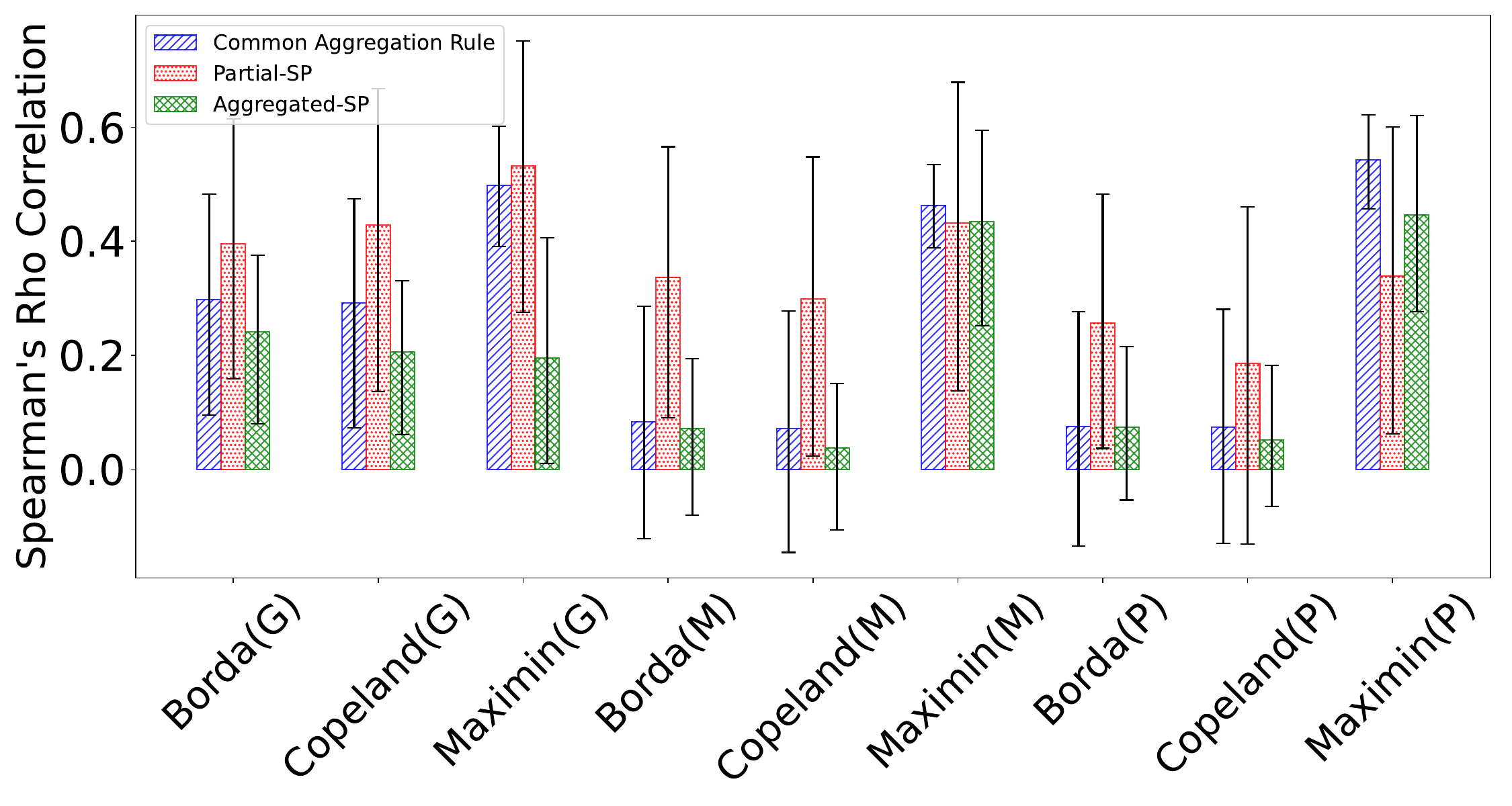}
        \caption{\texttt{Top-Top}}
    \end{subfigure}
    \hfill
    \begin{subfigure}[b]{0.32\textwidth}
        \centering        \includegraphics[width=\textwidth]{./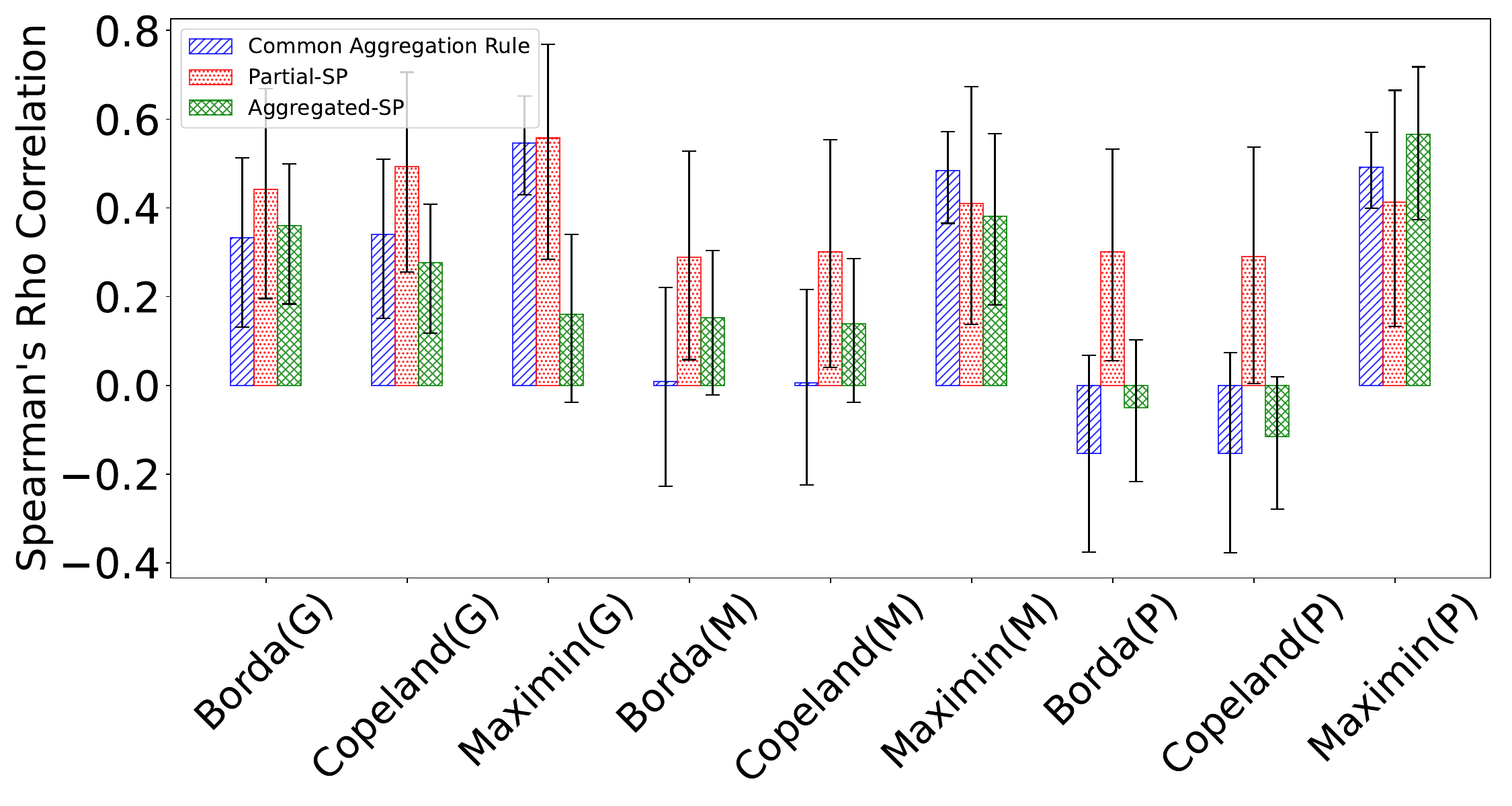}
        \caption{\texttt{Top-Approval(3)}}
    \end{subfigure}
    \begin{subfigure}[b]{0.32\textwidth}
        \centering        \includegraphics[width=\textwidth]{./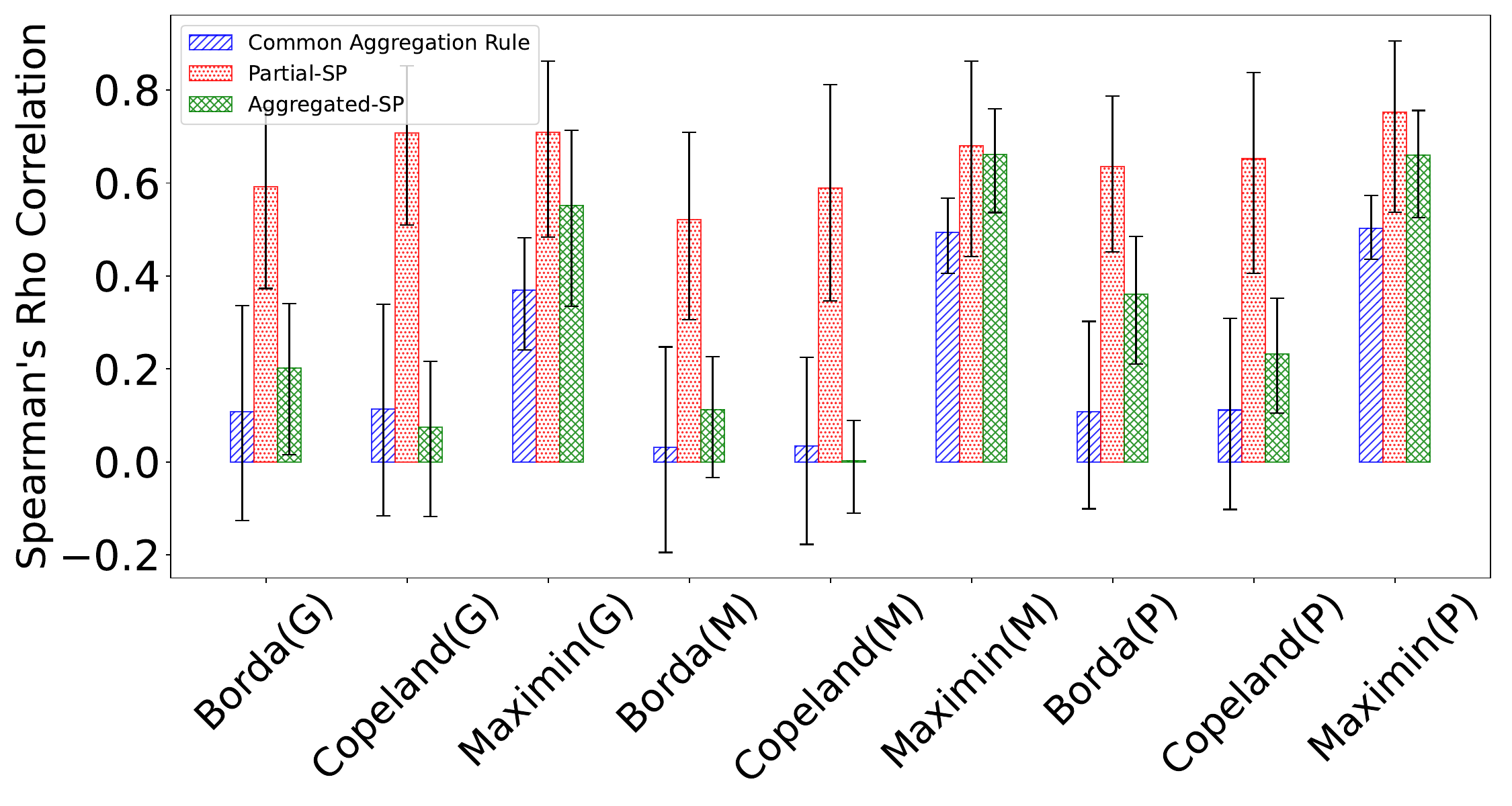}
        \caption{\texttt{Top-Rank}}
    \end{subfigure}
    \vspace{1cm}
    \begin{subfigure}[b]{0.32\textwidth}
        \centering        \includegraphics[width=\textwidth]{./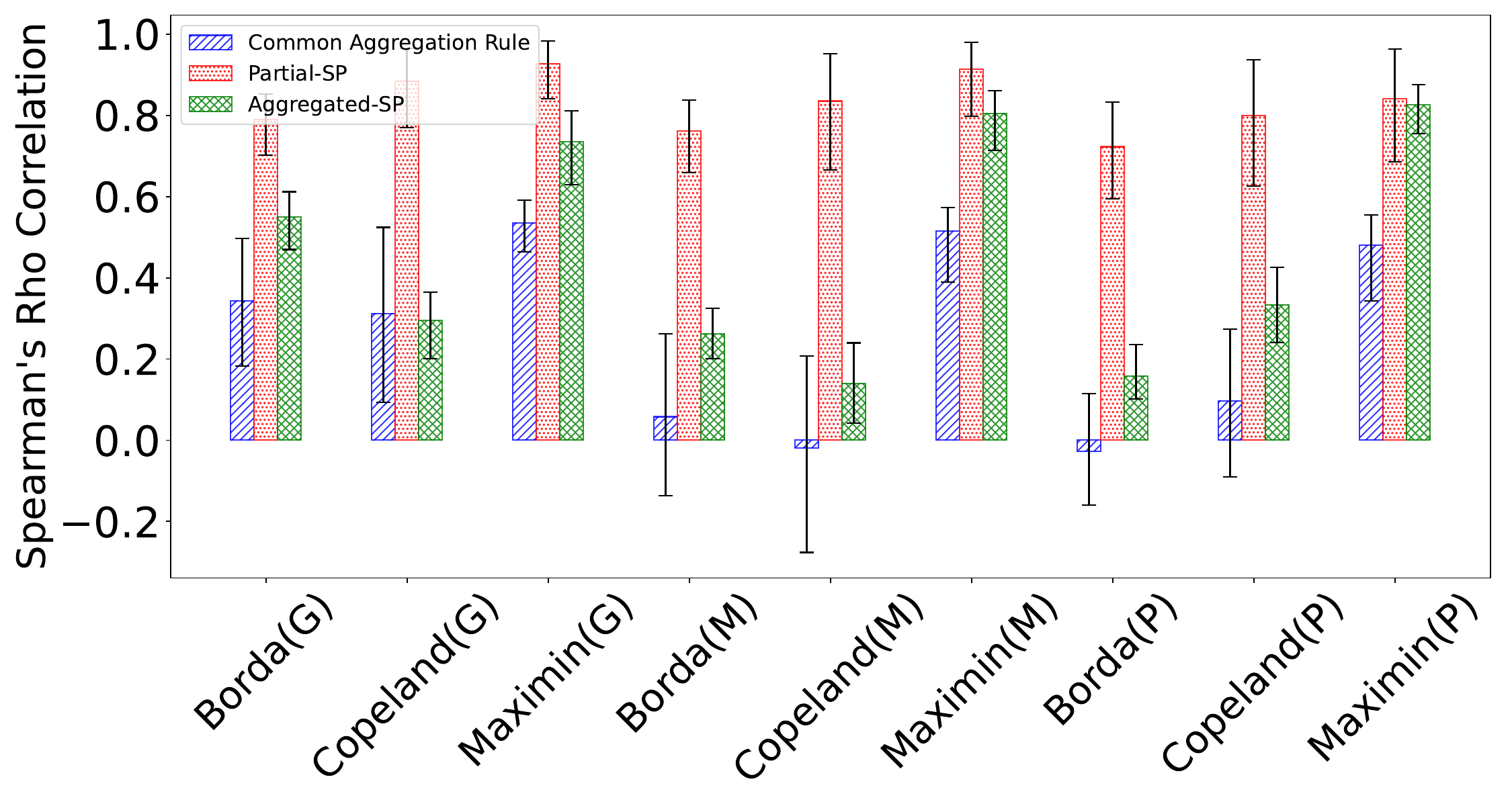}
        \caption{\texttt{Approval(2)-Approval(2)}}
    \end{subfigure}
    \begin{subfigure}[b]{0.32\textwidth}
        \centering        \includegraphics[width=\textwidth]{./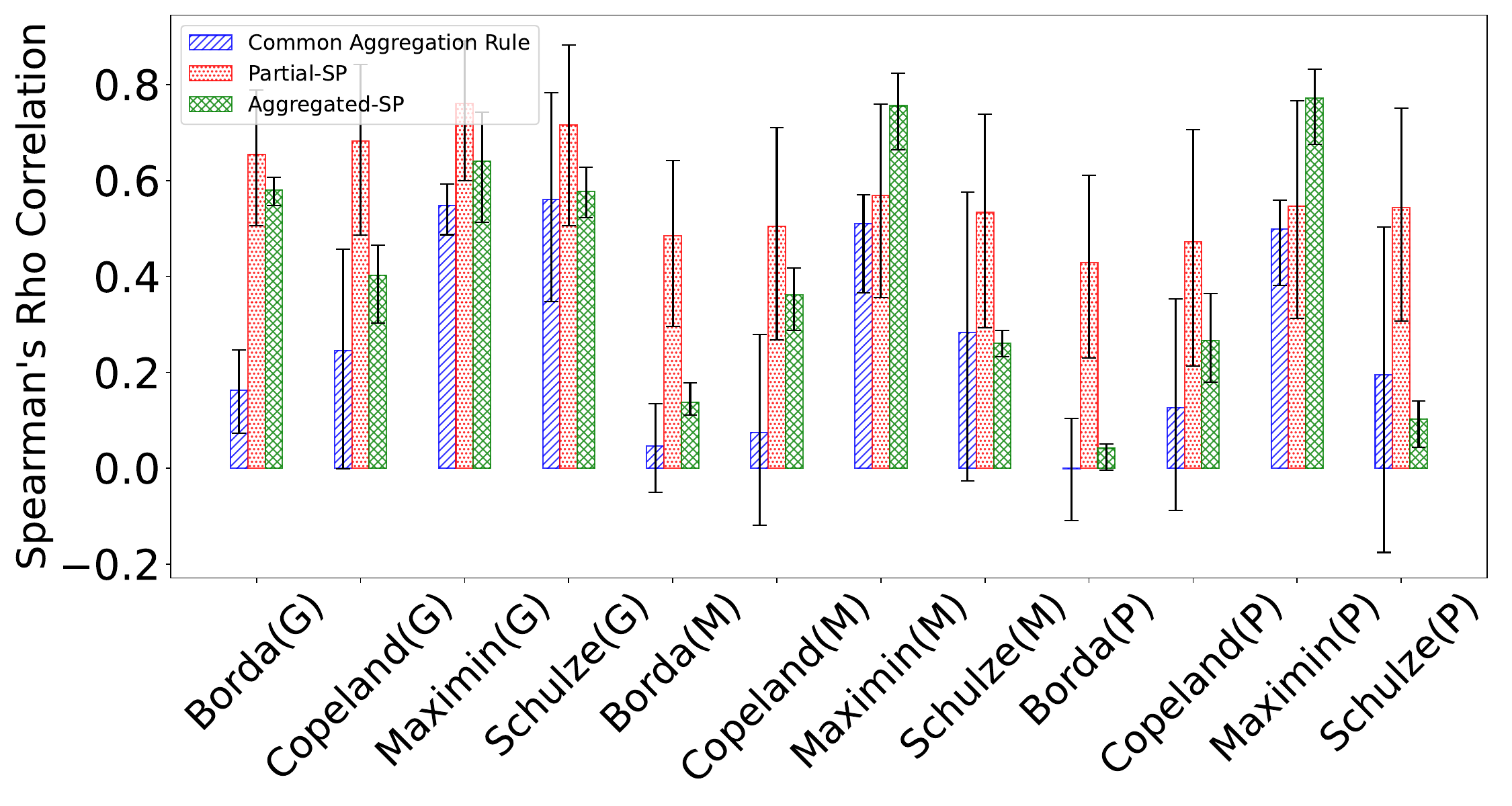}
        \caption{\texttt{Rank-Top}}
    \end{subfigure}
    \begin{subfigure}[b]{0.32\textwidth}
        \centering        \includegraphics[width=\textwidth]{./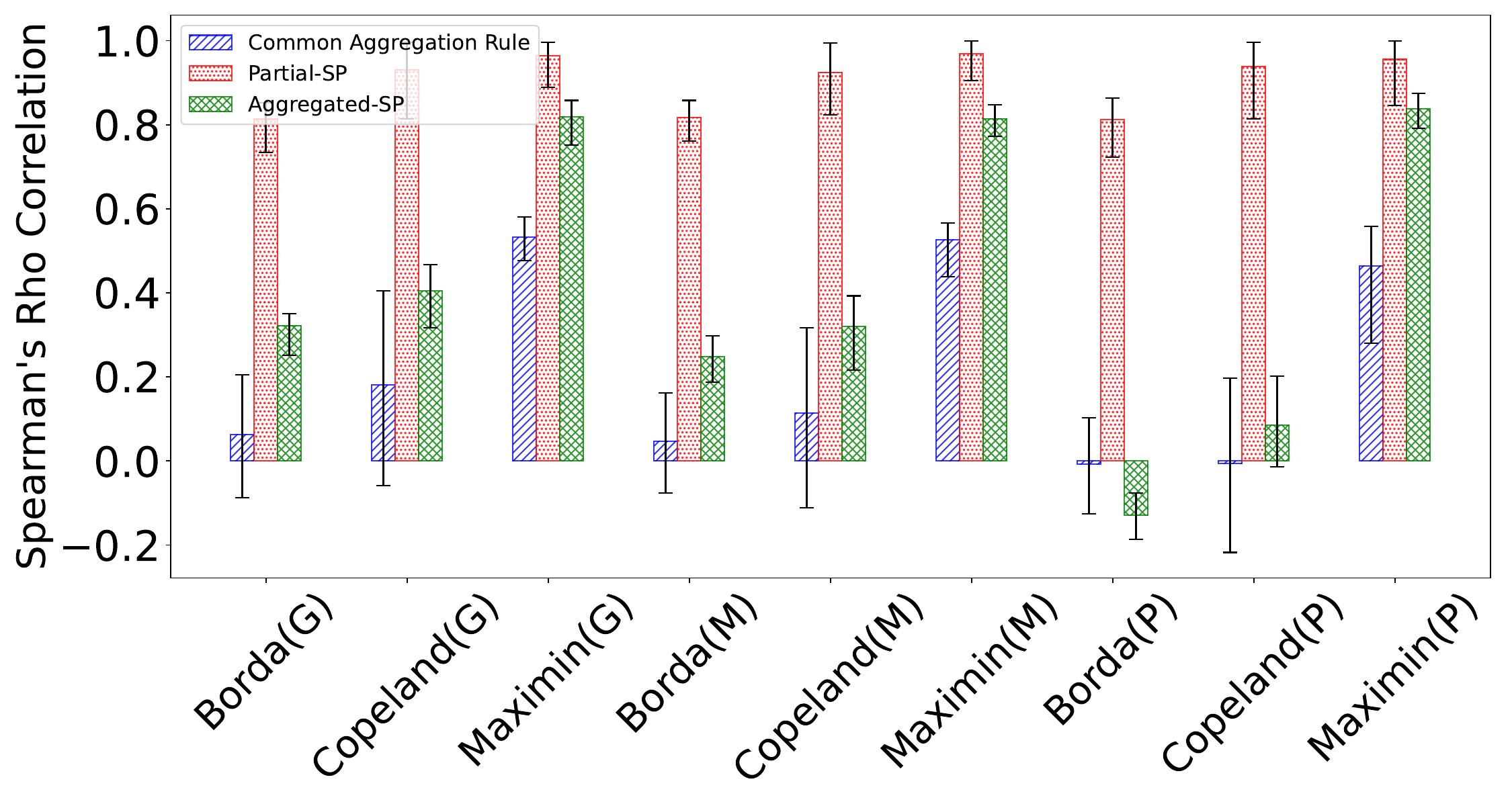}
        \caption{\texttt{Approval(3)-Rank}}
    \end{subfigure}
    \caption{The plots show the Spearman's $\rho$ Correlation between rankings derived from Common Aggregation Rules (blue), Partial-SP(red), and Aggregated-SP(red) for \texttt{Top-Top}, \texttt{Top-Approval(3)}, \texttt{Top-Rank}, \texttt{Approval(2)-Approval(2)}, \texttt{Rank-Top}, and \texttt{Approval(3)-Rank} elicitation formats across Geography(G), Movies(M), and Paintings(P) domains. A high Spearman's $\rho$ Correlation implies higher alignment between the ground-truth ranking and the aggregated ranking. The usage of Maximin aggregation rule for Partial-SP and Aggregated-SP has a better impact on the outcome as compared to other common aggregation rules. Additionally, the performance improves with an increase in information elicited as seen by the high correlation and increases statistical difference between the conventional methods and proposed methods. For example, \texttt{Approval(3)-Rank} recovers ground-truth ranking more accurately than \texttt{Top-Approval(3)}. Note: We see Schulze method in \texttt{Rank-Top} as it only works for preference data.}
    \label{fig:partial_aggregated_supplemental_spearmanrho}
\end{figure} 
\clearpage
\subsection{Missing Figures for Partial-SP}

\begin{figure}[!h]
    \centering
    \begin{subfigure}[b]{0.32\textwidth}
        \centering        \includegraphics[width=\textwidth]{./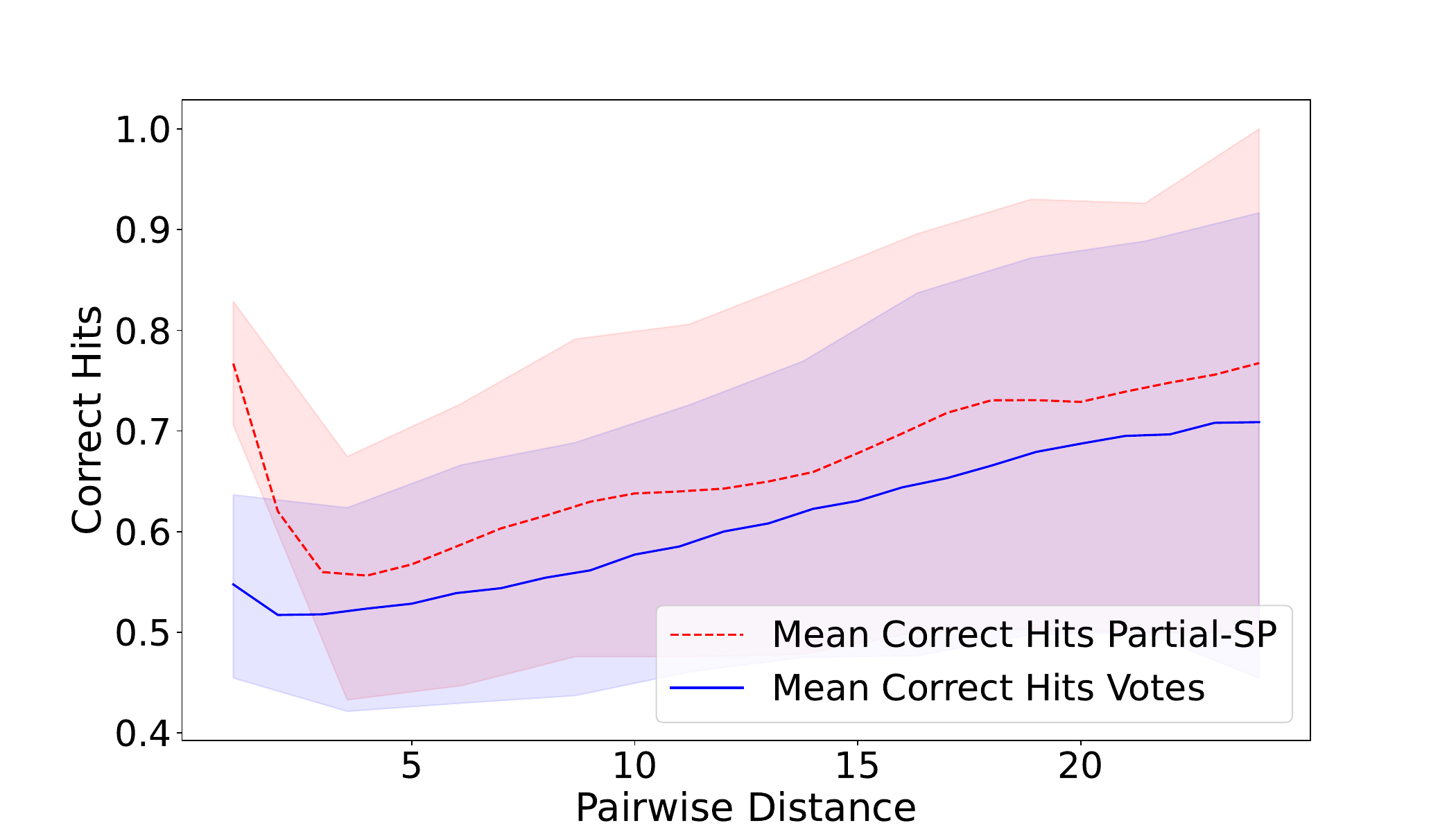}
        \caption{\texttt{Top-Top}}
    \end{subfigure}
    \hfill
    \begin{subfigure}[b]{0.32\textwidth}
        \centering        \includegraphics[width=\textwidth]{./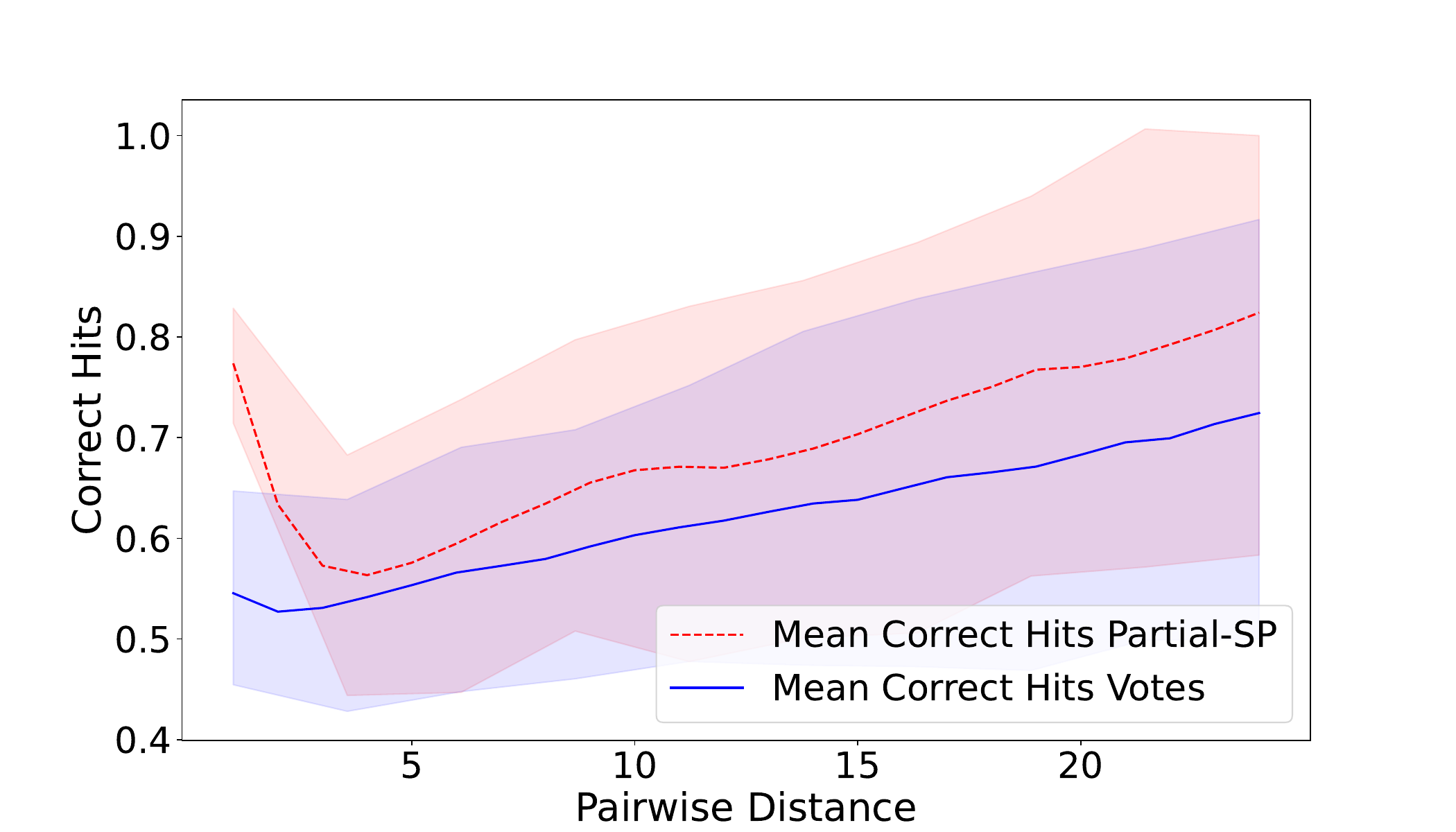}
        \caption{\texttt{Top-Approval(3)}}
    \end{subfigure}
    \begin{subfigure}[b]{0.32\textwidth}
        \centering        \includegraphics[width=\textwidth]{./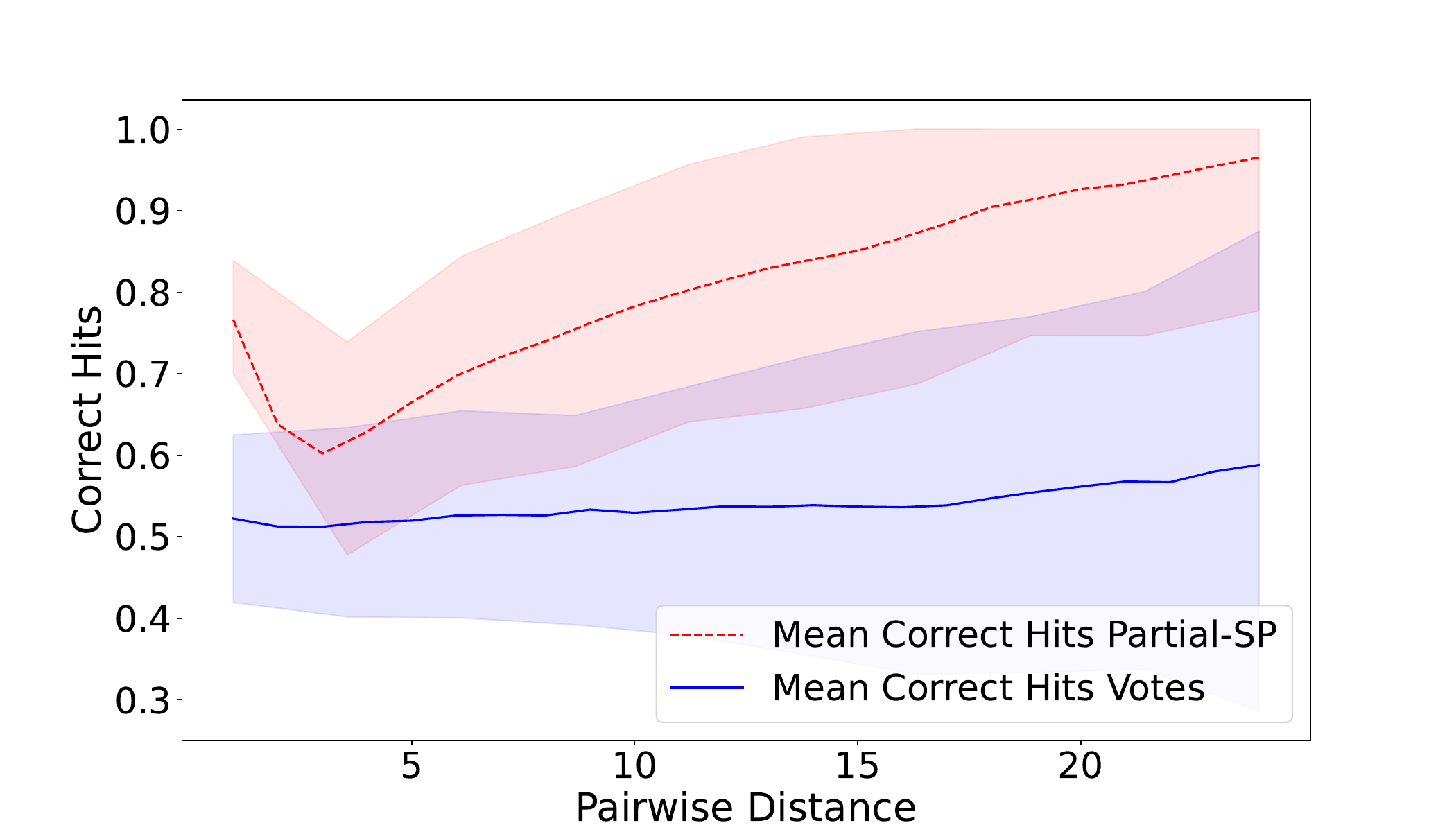}
        \caption{\texttt{Top-Rank}}
    \end{subfigure}
    \begin{subfigure}[b]{0.32\textwidth}
        \centering        \includegraphics[width=\textwidth]{./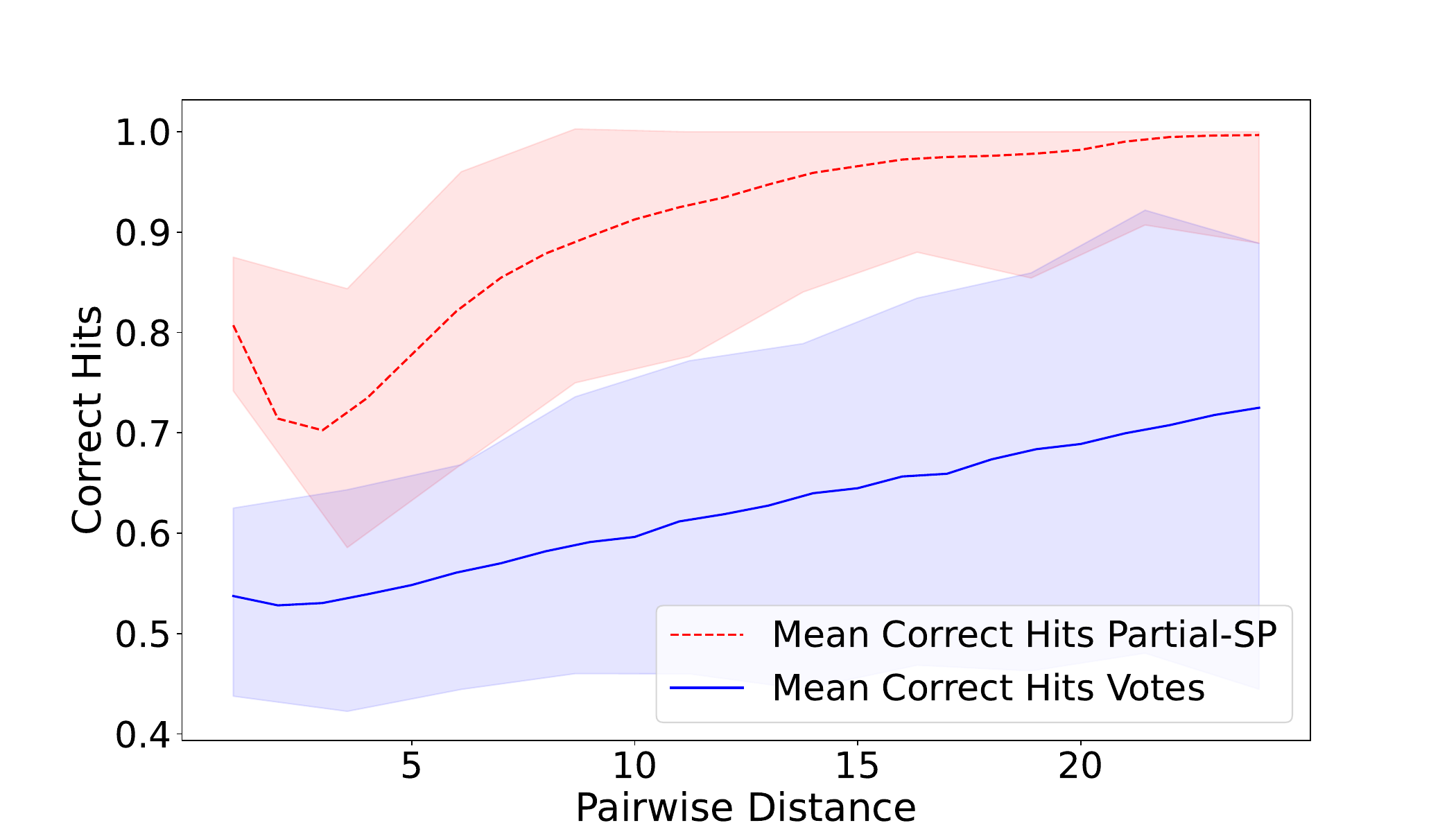}
        \caption{\texttt{Approval(2)-Approval(2)}}
    \end{subfigure}
    \begin{subfigure}[b]{0.32\textwidth}
        \centering        \includegraphics[width=\textwidth]{./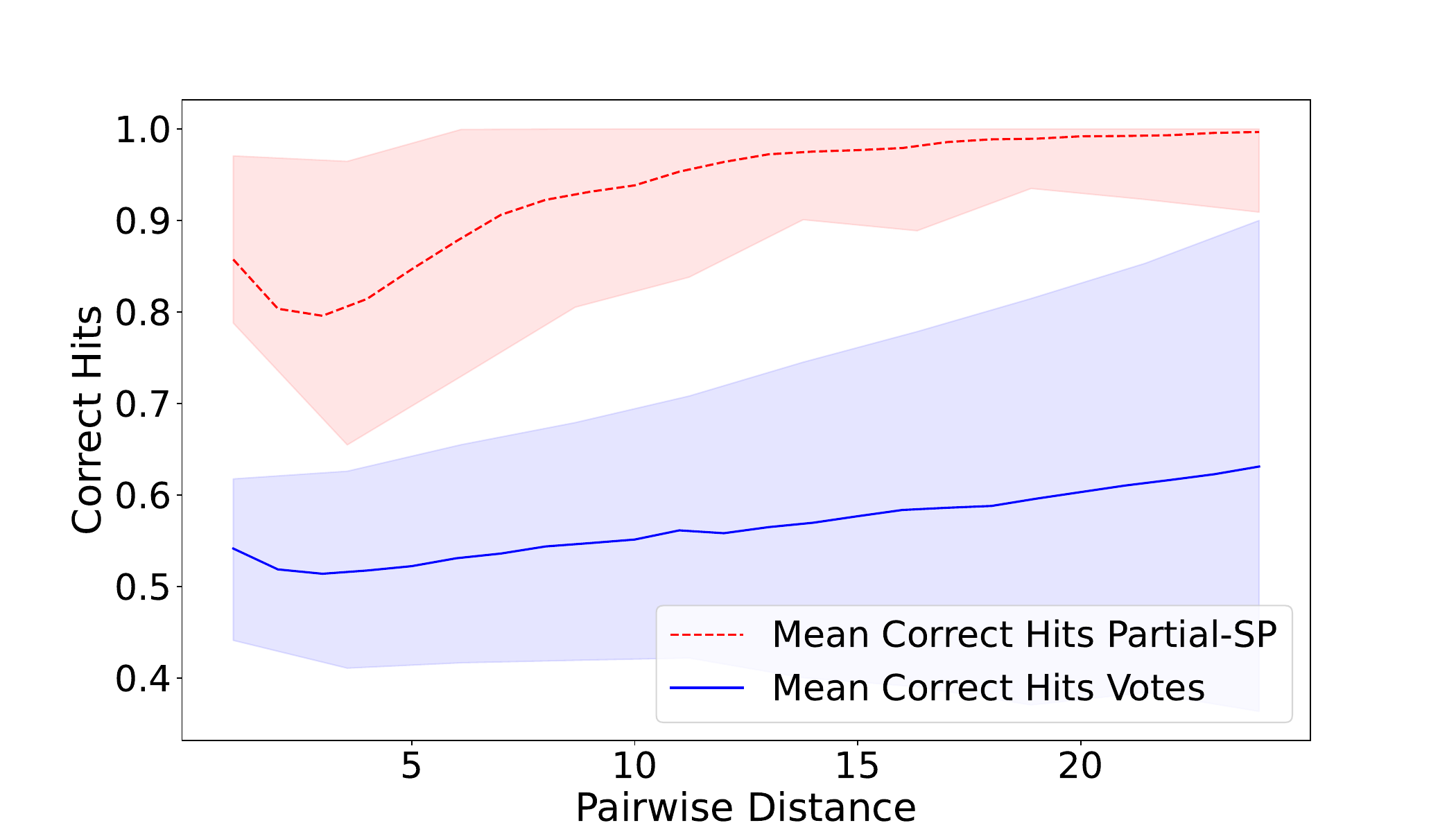}
        \caption{\texttt{Approval(3)-Rank}}
    \end{subfigure}
    \begin{subfigure}[b]{0.32\textwidth}
        \centering        \includegraphics[width=\textwidth]{./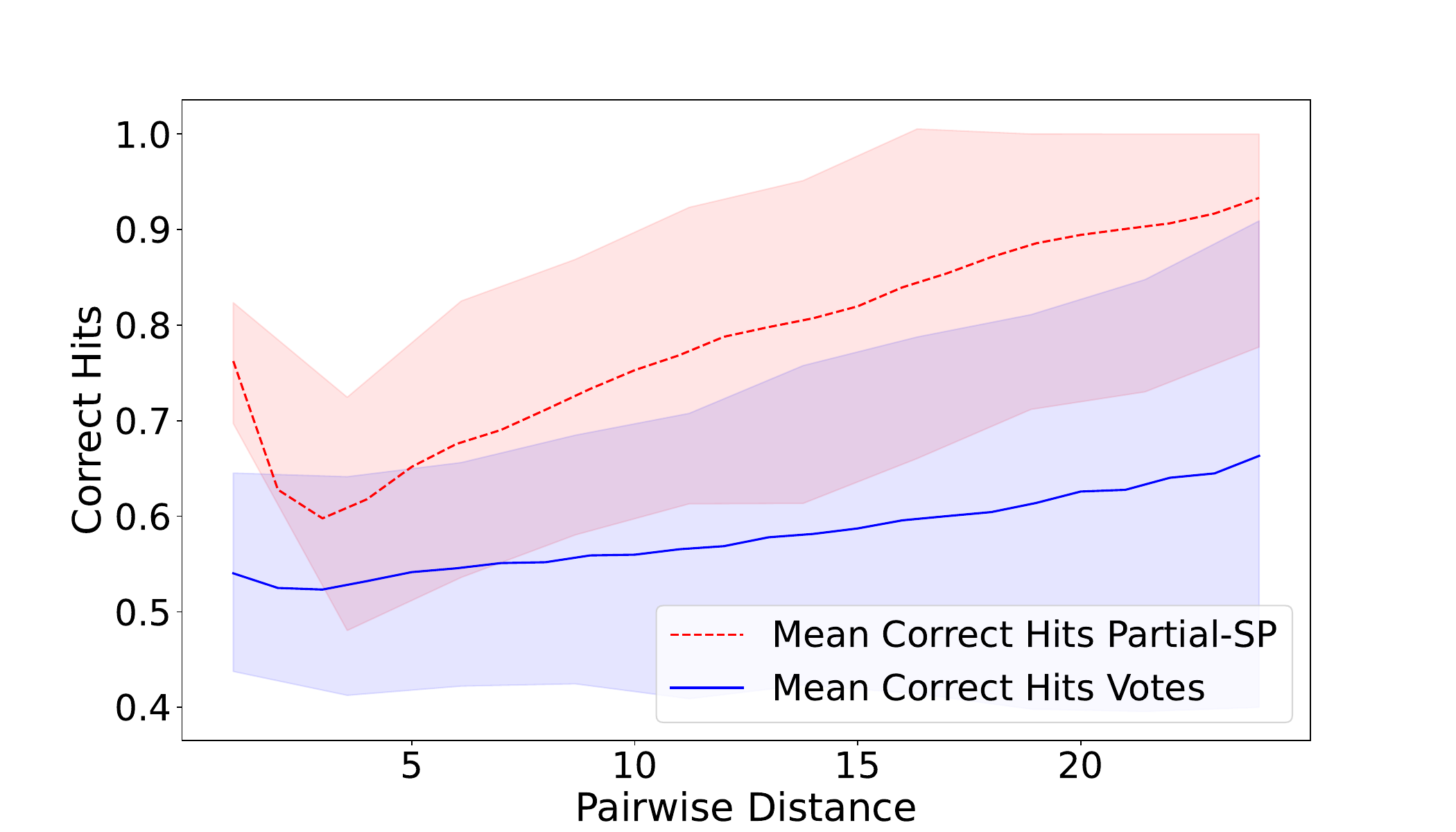}
        \caption{\texttt{Rank-Top}}
    \end{subfigure}
    \begin{subfigure}[b]{0.32\textwidth}
        \centering        \includegraphics[width=\textwidth]{./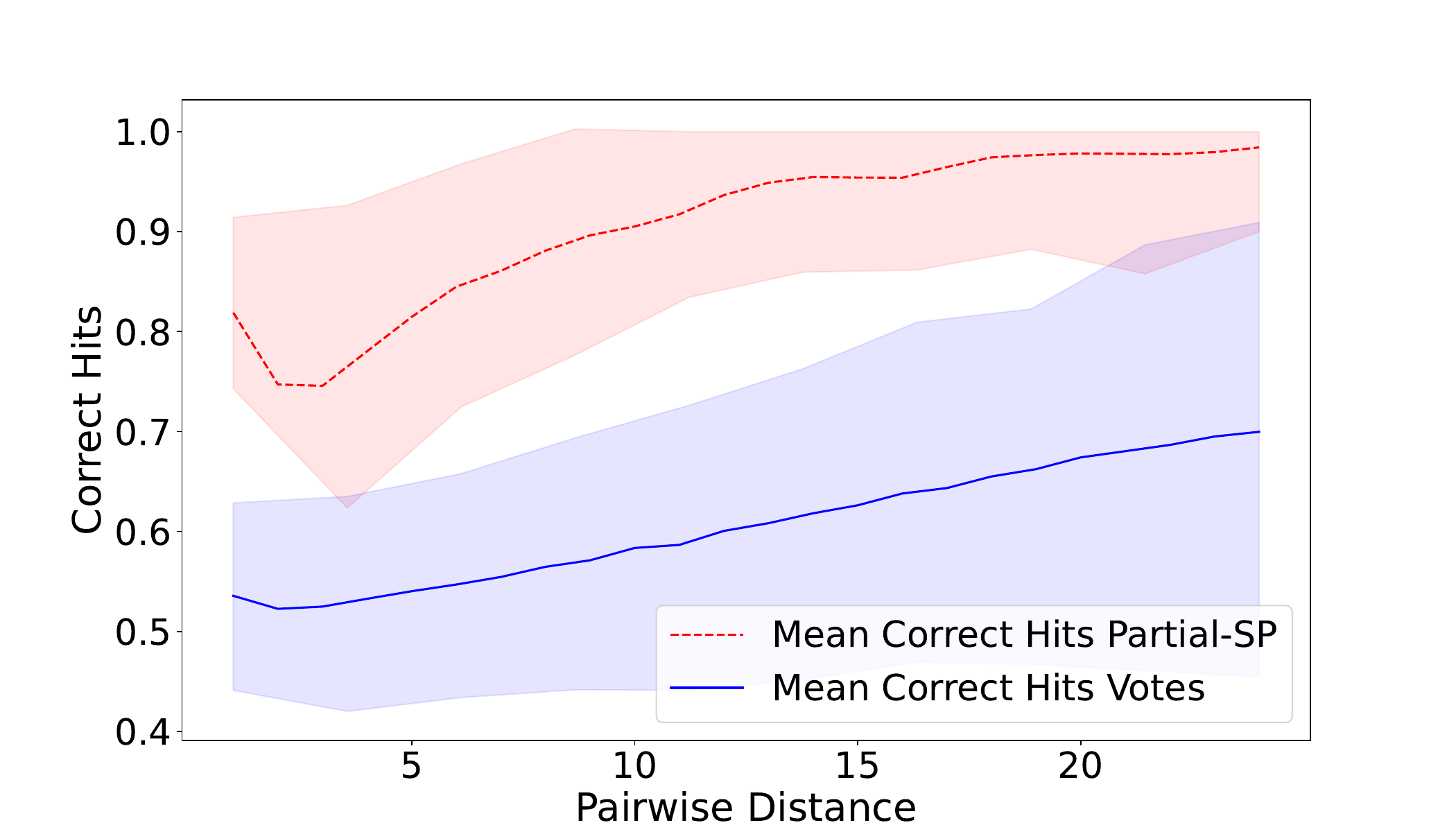}
        \caption{\texttt{Rank-Rank}}
    \end{subfigure}
    \caption{The figures show the effect of different elicitation formats for ground-truth recovery using Copeland and Partial-SP using metric defined in Section \ref{appendix:hit_rate_difference}. The metric assesses the number of pairs that were correctly predicted according to the aggregation rule based on their increasing distance in the ground-truth ranking. Comparable performance between \texttt{Approval(2)-Approval(2)}, \texttt{Approval(3)-Rank}, and \texttt{Rank-Rank} show that eliciting Approvals on half the size of the subset recovers truth as good as eliciting Ranking over the subset. %\HH{this is not true, see Section 5.}.
    \label{fig:partial_difference_elicitation}}
\end{figure}

\begin{figure}[!h]
    \centering
    \begin{subfigure}[b]{0.32\textwidth}
        \centering        \includegraphics[width=\textwidth]{./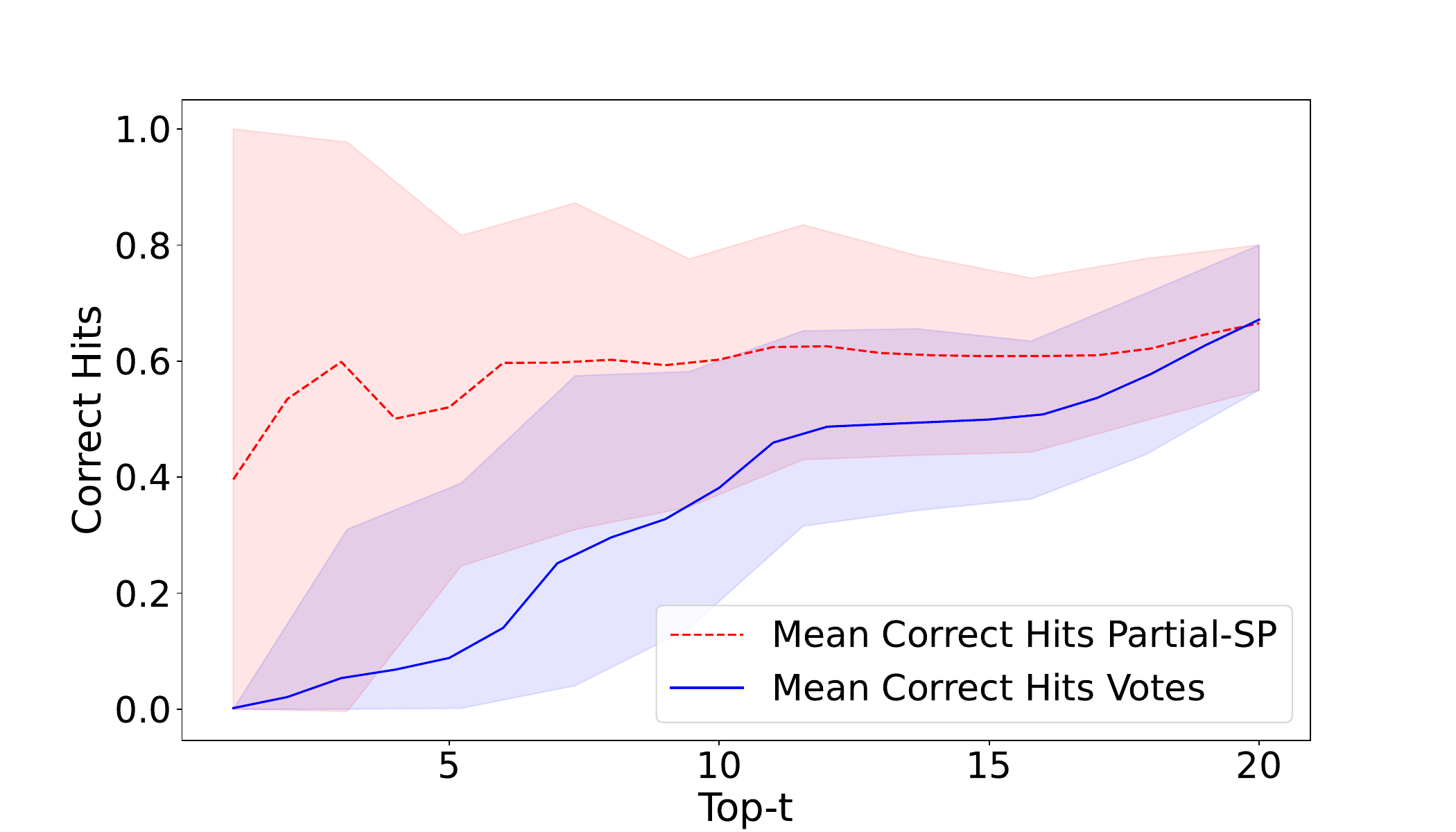}
        \caption{\texttt{Top-Top}}
    \end{subfigure}
    \hfill
    \begin{subfigure}[b]{0.32\textwidth}
        \centering        \includegraphics[width=\textwidth]{./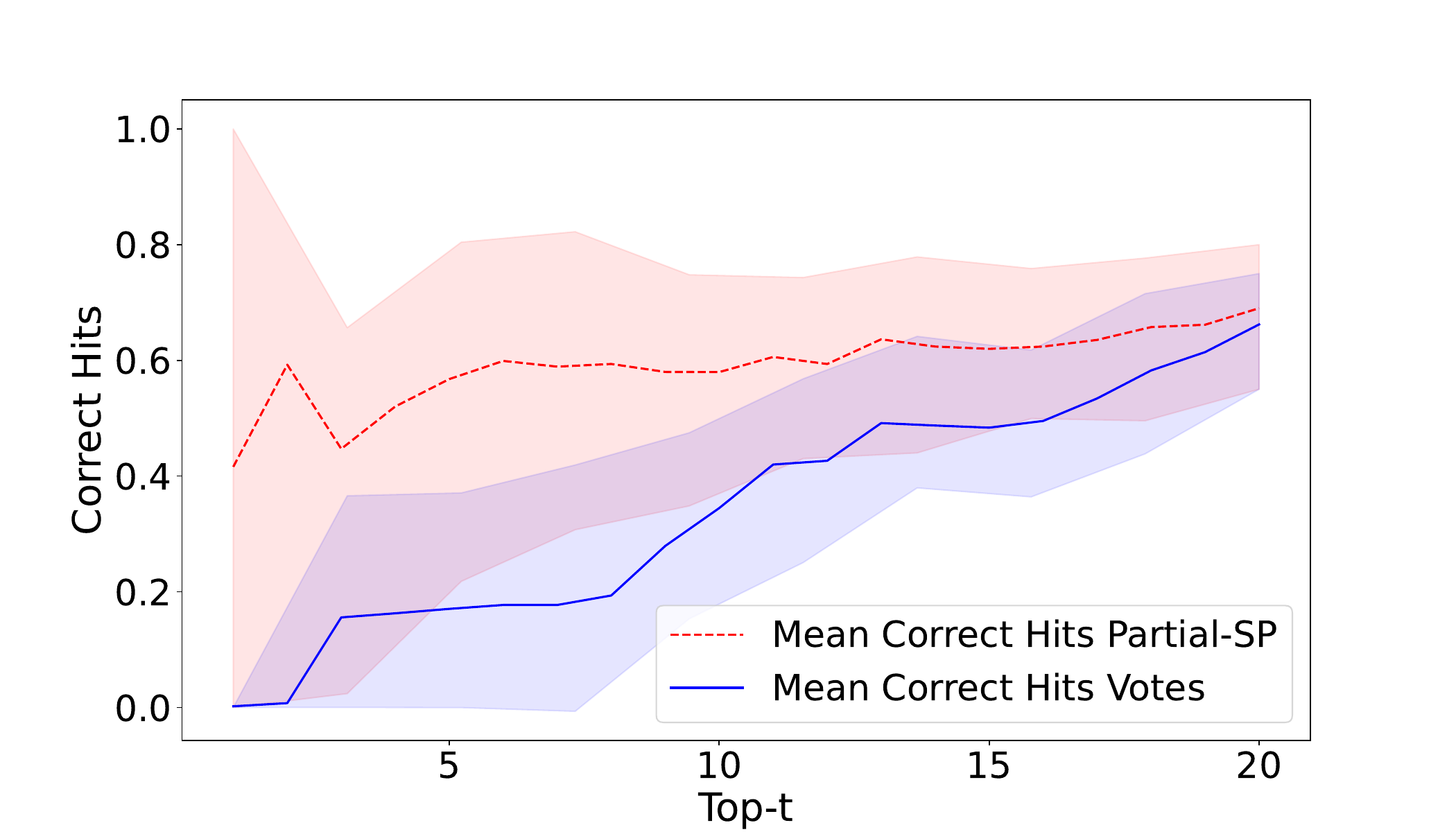}
        \caption{\texttt{Top-Approval(3)}}
    \end{subfigure}
    \begin{subfigure}[b]{0.32\textwidth}
        \centering        \includegraphics[width=\textwidth]{./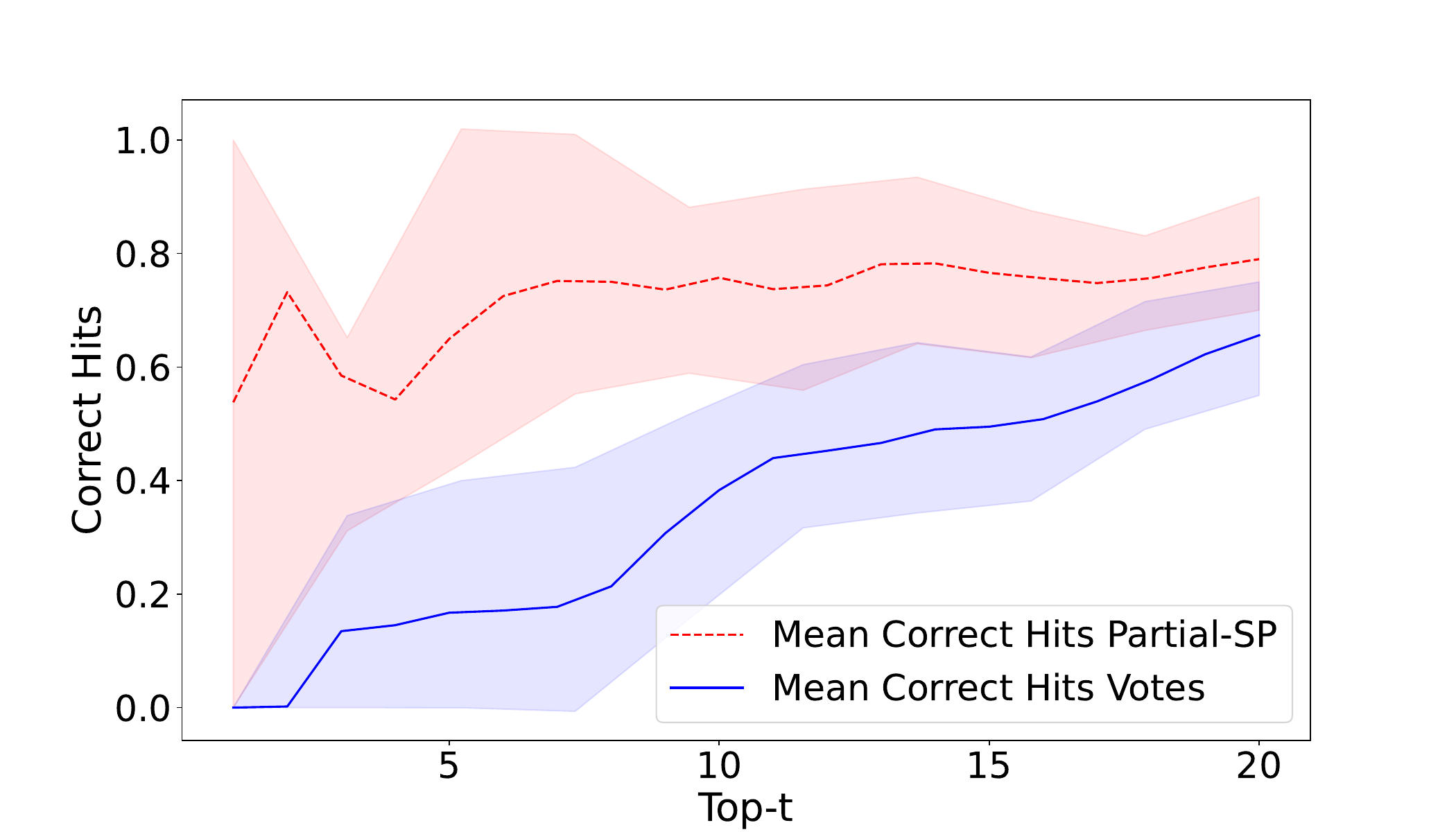}
        \caption{\texttt{Top-Rank}}
    \end{subfigure}
    \begin{subfigure}[b]{0.32\textwidth}
        \centering        \includegraphics[width=\textwidth]{./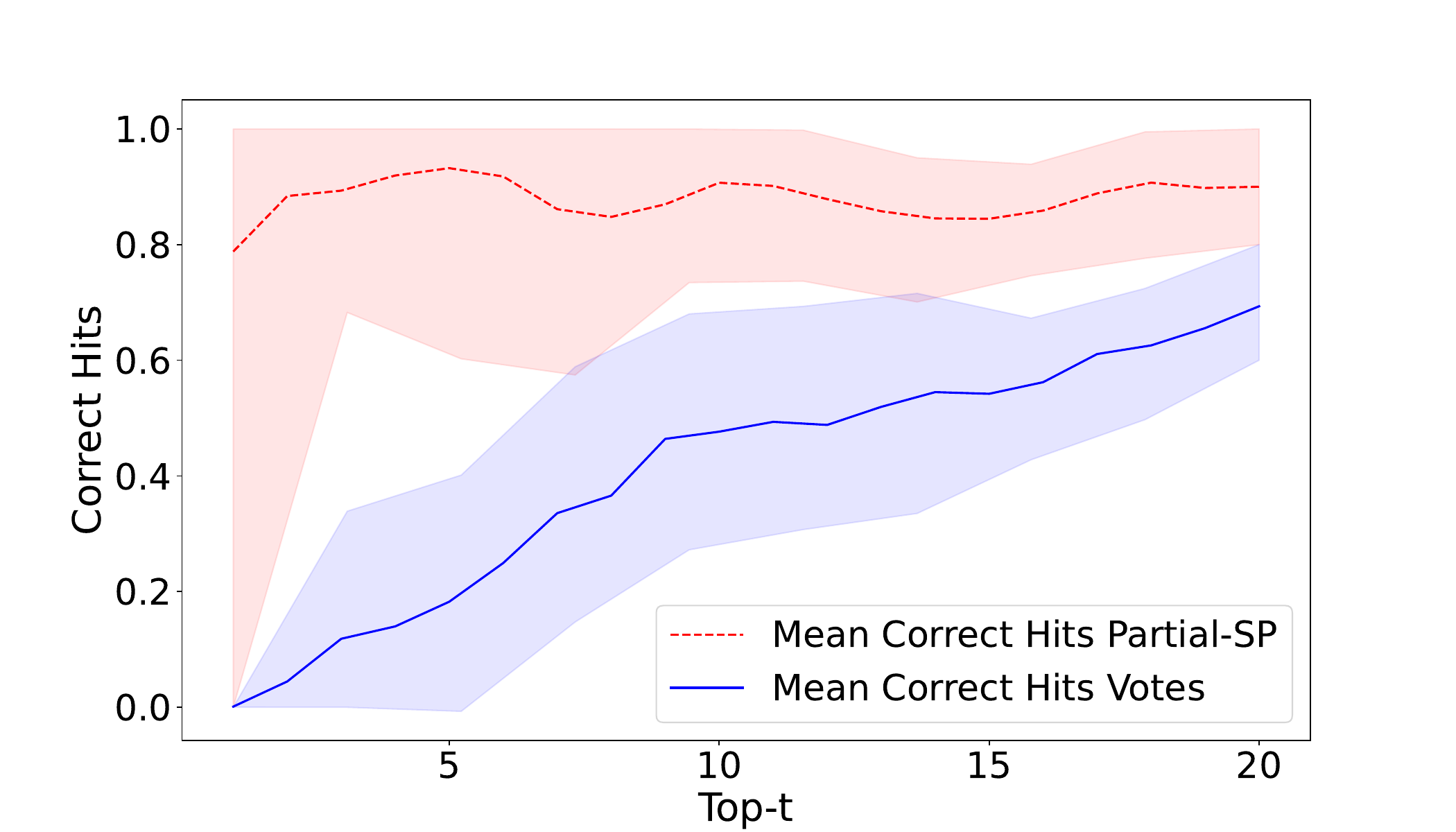}
        \caption{\texttt{Approval(2)-Approval(2)}}
    \end{subfigure}
    \begin{subfigure}[b]{0.32\textwidth}
        \centering        \includegraphics[width=\textwidth]{./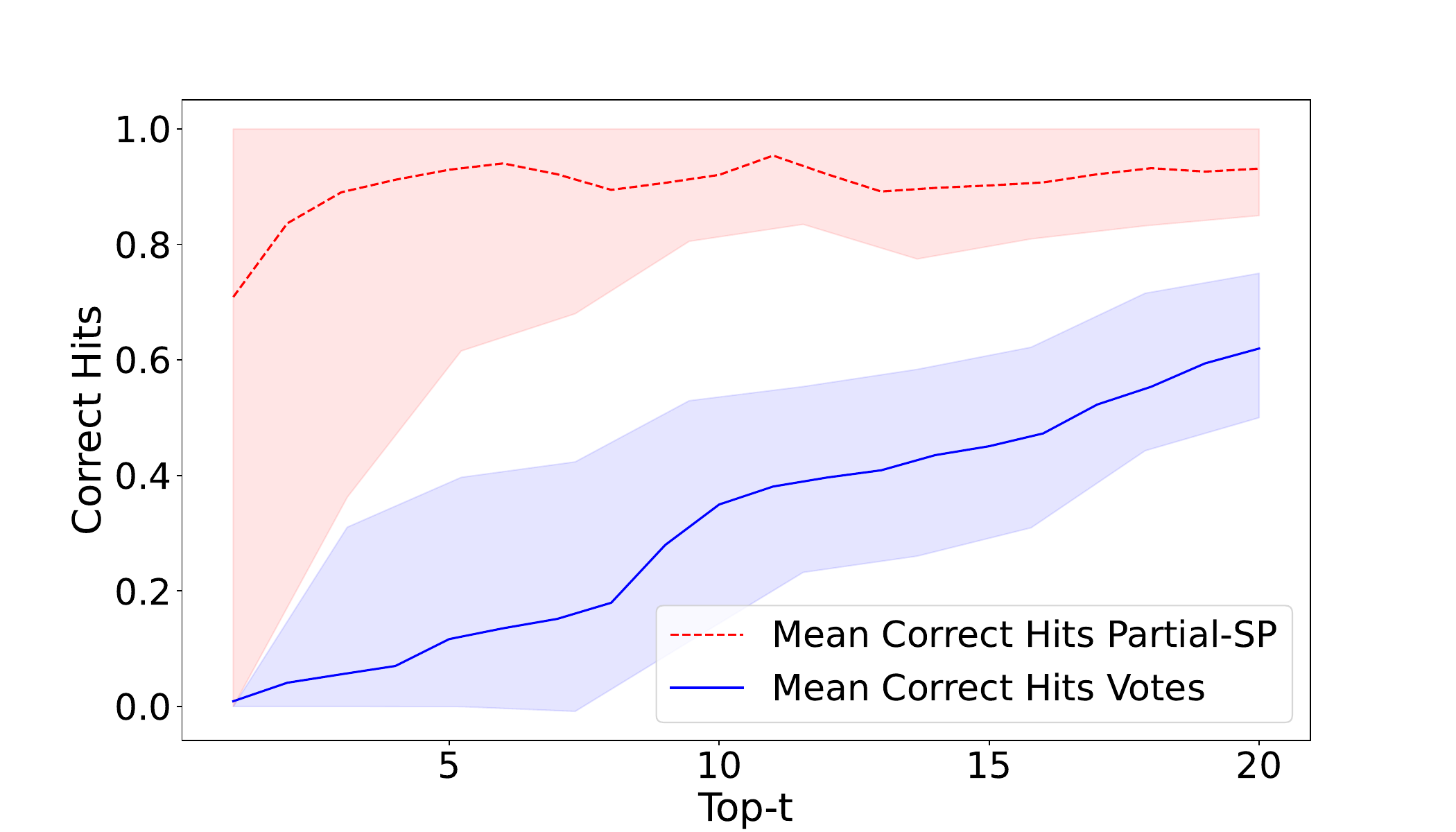}
        \caption{\texttt{Approval(3)-Rank}}
    \end{subfigure}
    \begin{subfigure}[b]{0.32\textwidth}
        \centering        \includegraphics[width=\textwidth]{./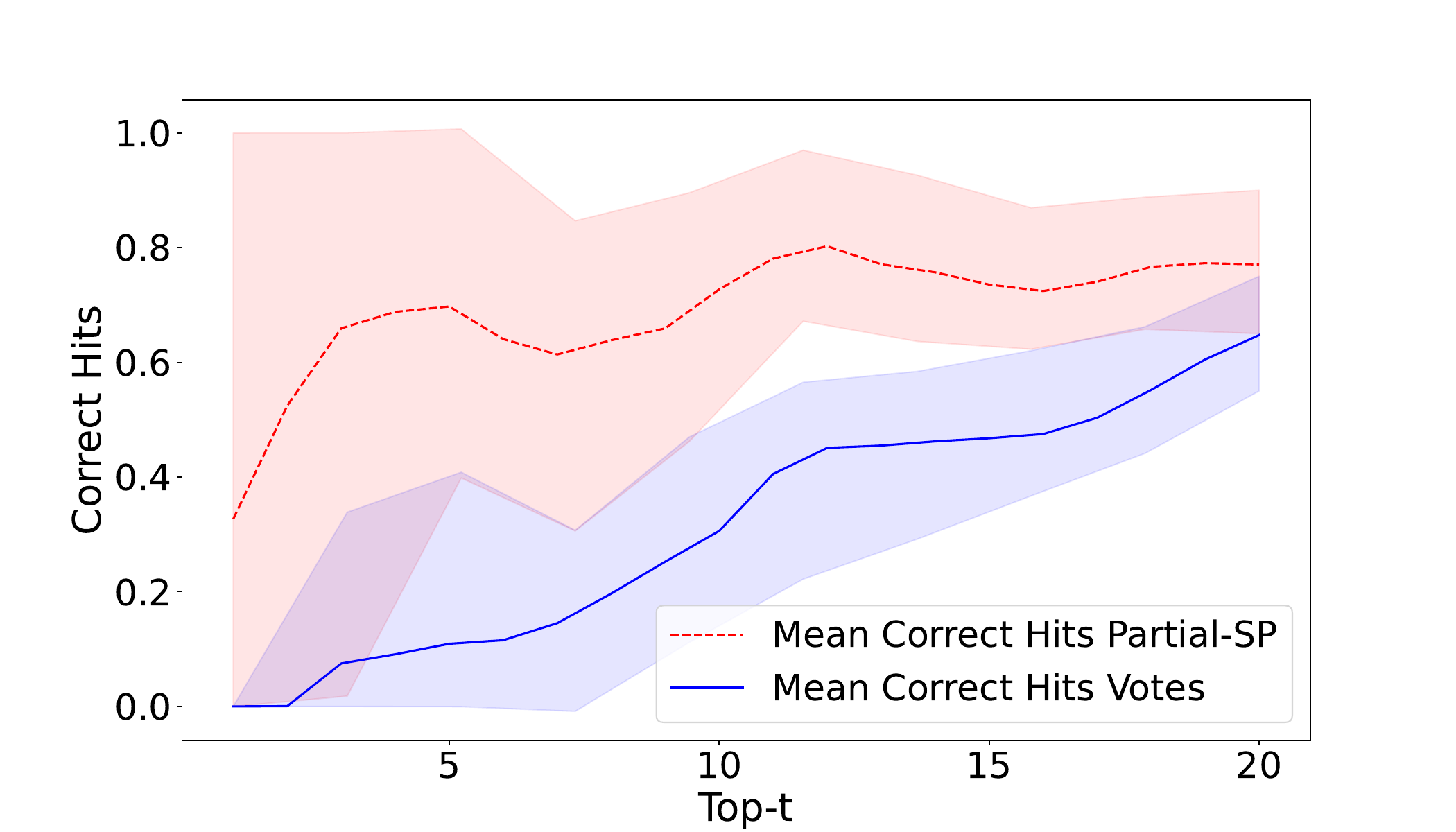}
        \caption{\texttt{Rank-Top}}
    \end{subfigure}
    \begin{subfigure}[b]{0.32\textwidth}
        \centering        \includegraphics[width=\textwidth]{./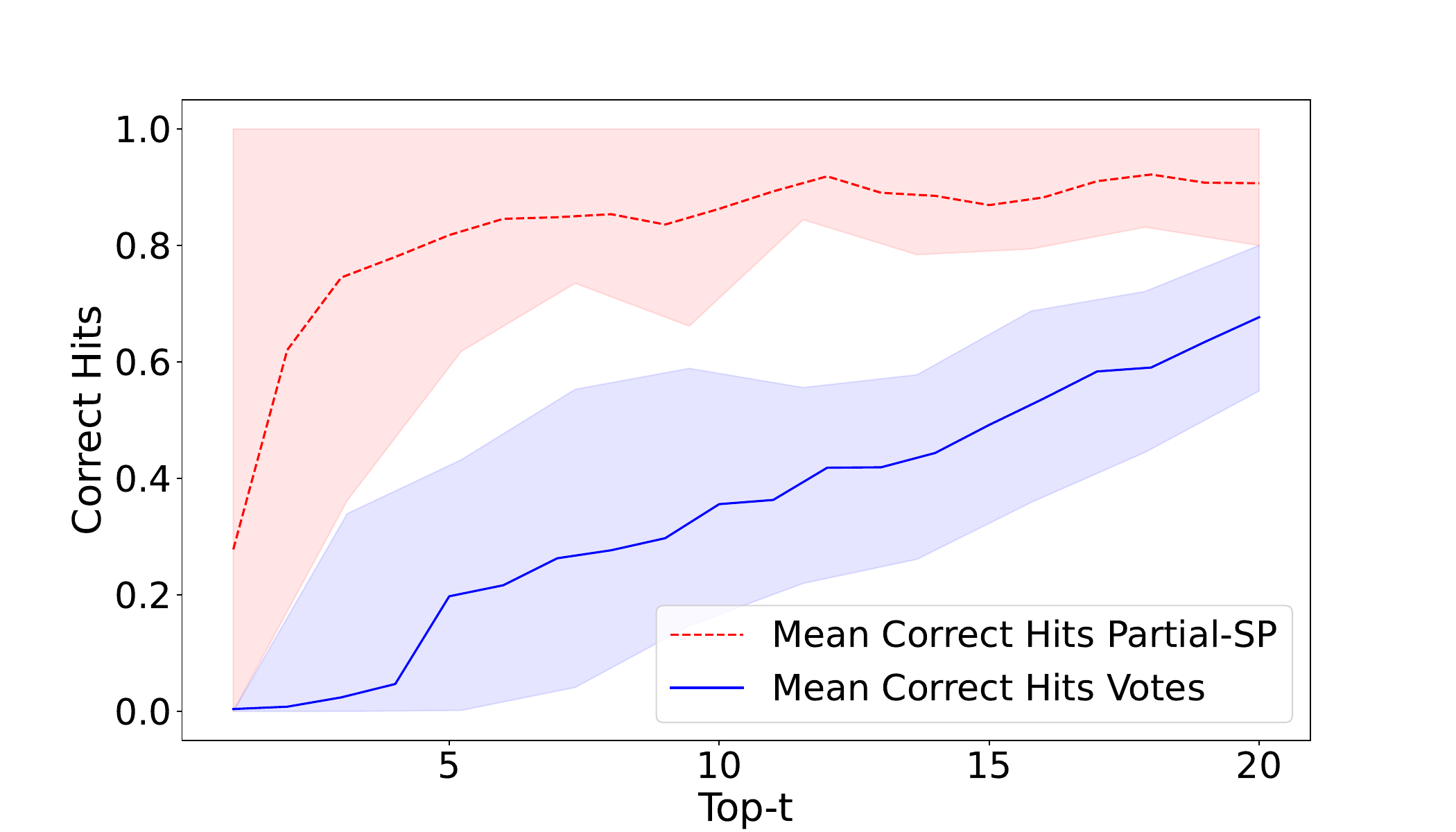}
        \caption{\texttt{Rank-Rank}}
    \end{subfigure}
    \caption{The figures show the effect of different elicitation formats for ground-truth recovery using Copeland and Partial-SP using metric defined in Section \ref{appendix:hit_rate_topt}. We again observe comparable performance between \texttt{Approval(2)-Approval(2)}, \texttt{Approval(3)-Rank}, and \texttt{Rank-Rank} show that eliciting Approvals on half the size of the subset recovers truth as good as eliciting Ranking over the subset. }
    \label{fig:partial_topk_elicitation}
\end{figure}

\clearpage
\subsection{Missing Figures for Aggregated-SP}
\label{appendix:missingfig_aggregatedsp}
\begin{figure}[!h]
    \centering
    \begin{subfigure}[b]{0.32\textwidth}
        \centering        \includegraphics[width=\textwidth]{./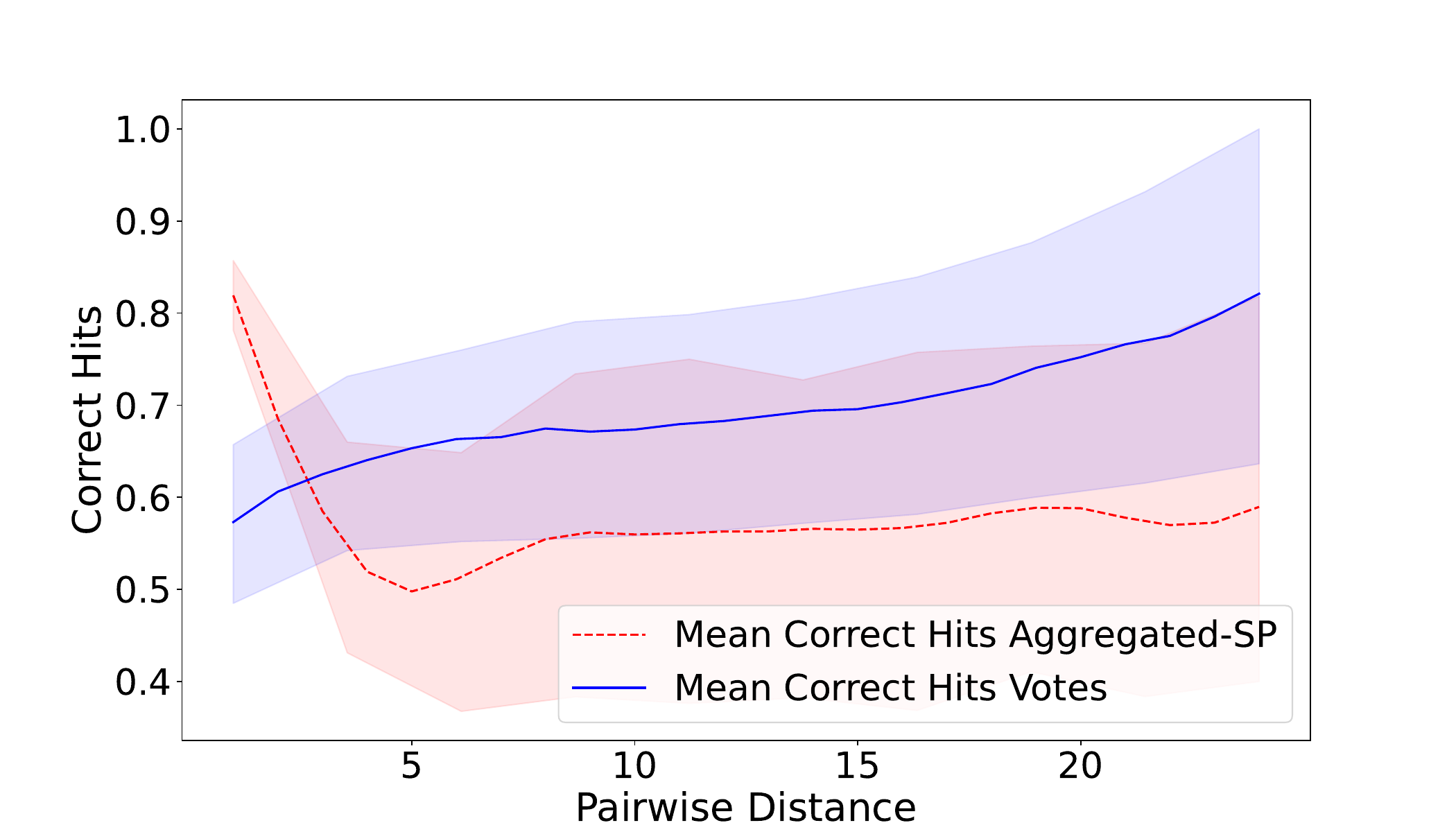}
        \caption{\texttt{Top-Top}}
    \end{subfigure}
    \hfill
    \begin{subfigure}[b]{0.32\textwidth}
        \centering        \includegraphics[width=\textwidth]{./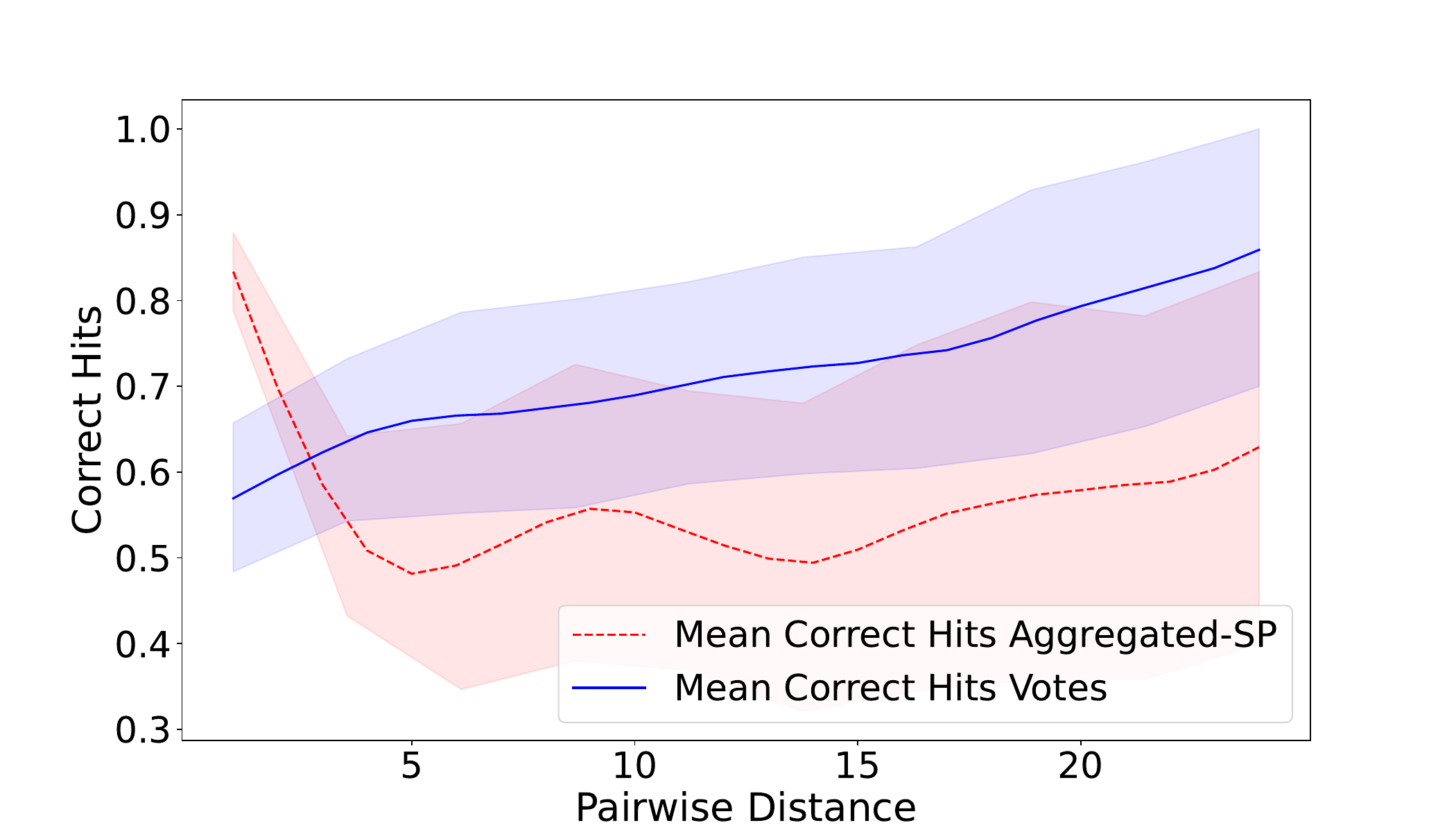}
        \caption{\texttt{Top-Approval(3)}}
    \end{subfigure}
    \begin{subfigure}[b]{0.32\textwidth}
        \centering        \includegraphics[width=\textwidth]{./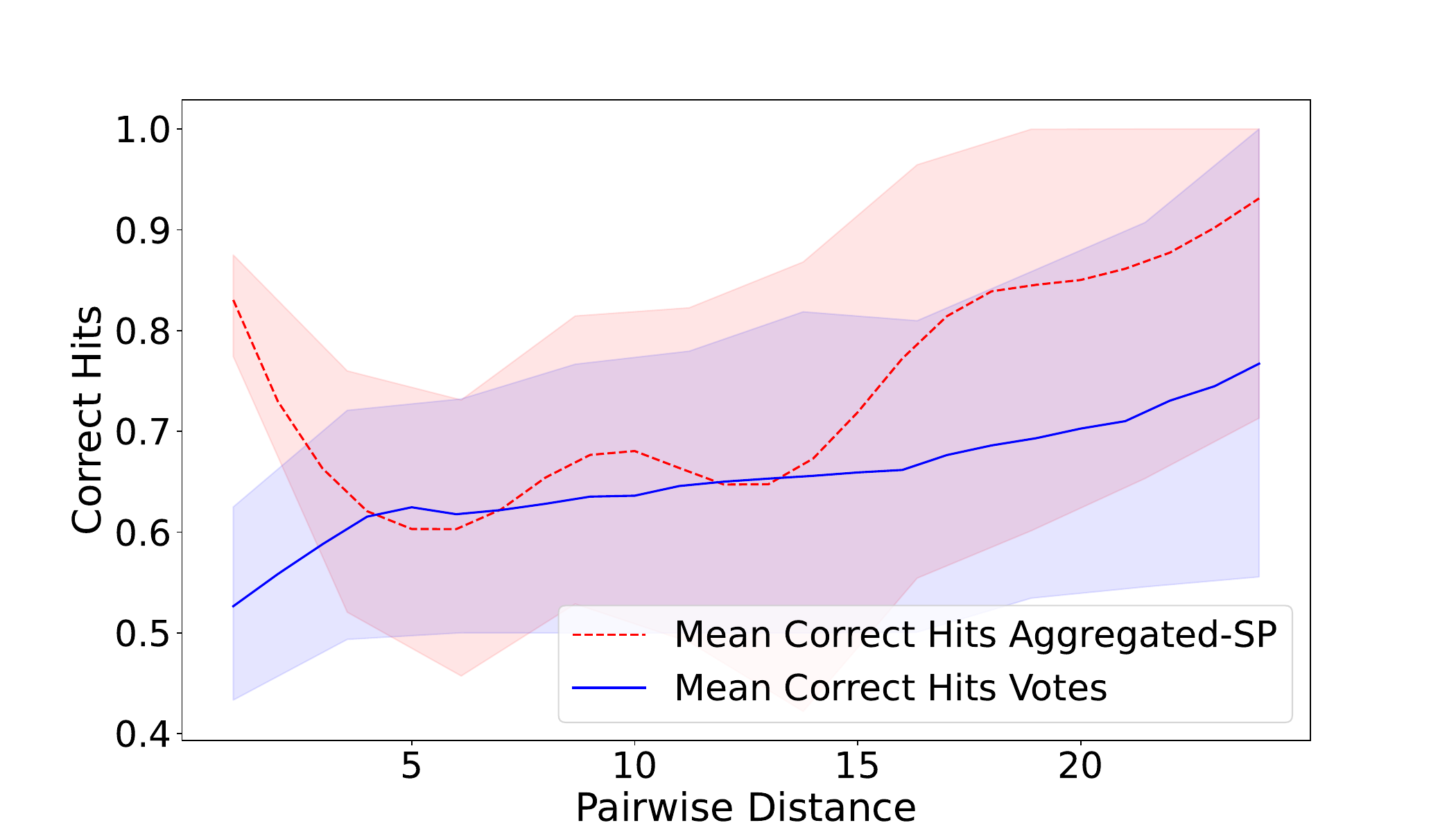}
        \caption{\texttt{Top-Rank}}
    \end{subfigure}
    \begin{subfigure}[b]{0.32\textwidth}
        \centering        \includegraphics[width=\textwidth]{./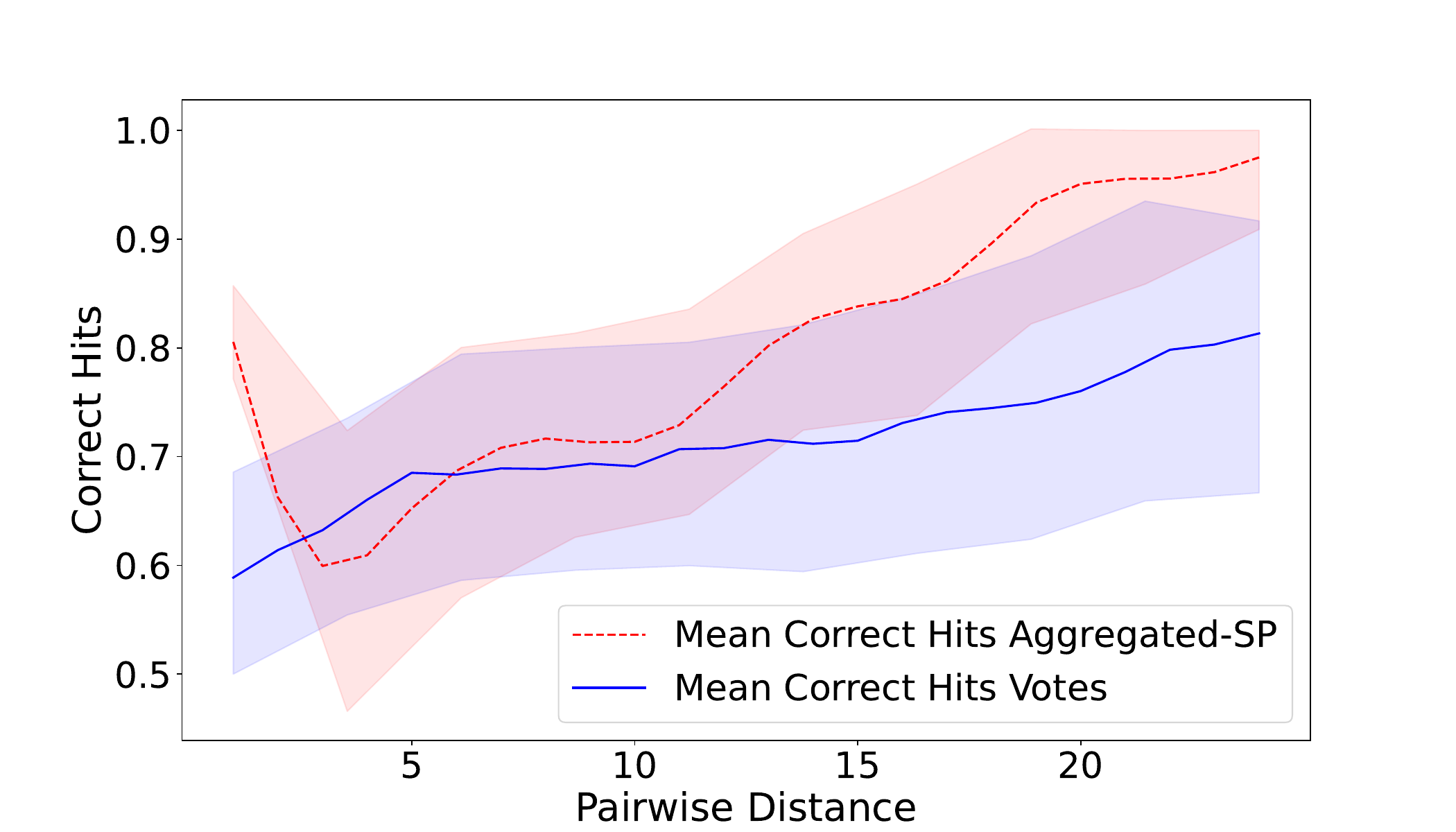}
        \caption{\texttt{Approval(2)-Approval(2)}}
    \end{subfigure}
    \begin{subfigure}[b]{0.32\textwidth}
        \centering        \includegraphics[width=\textwidth]{./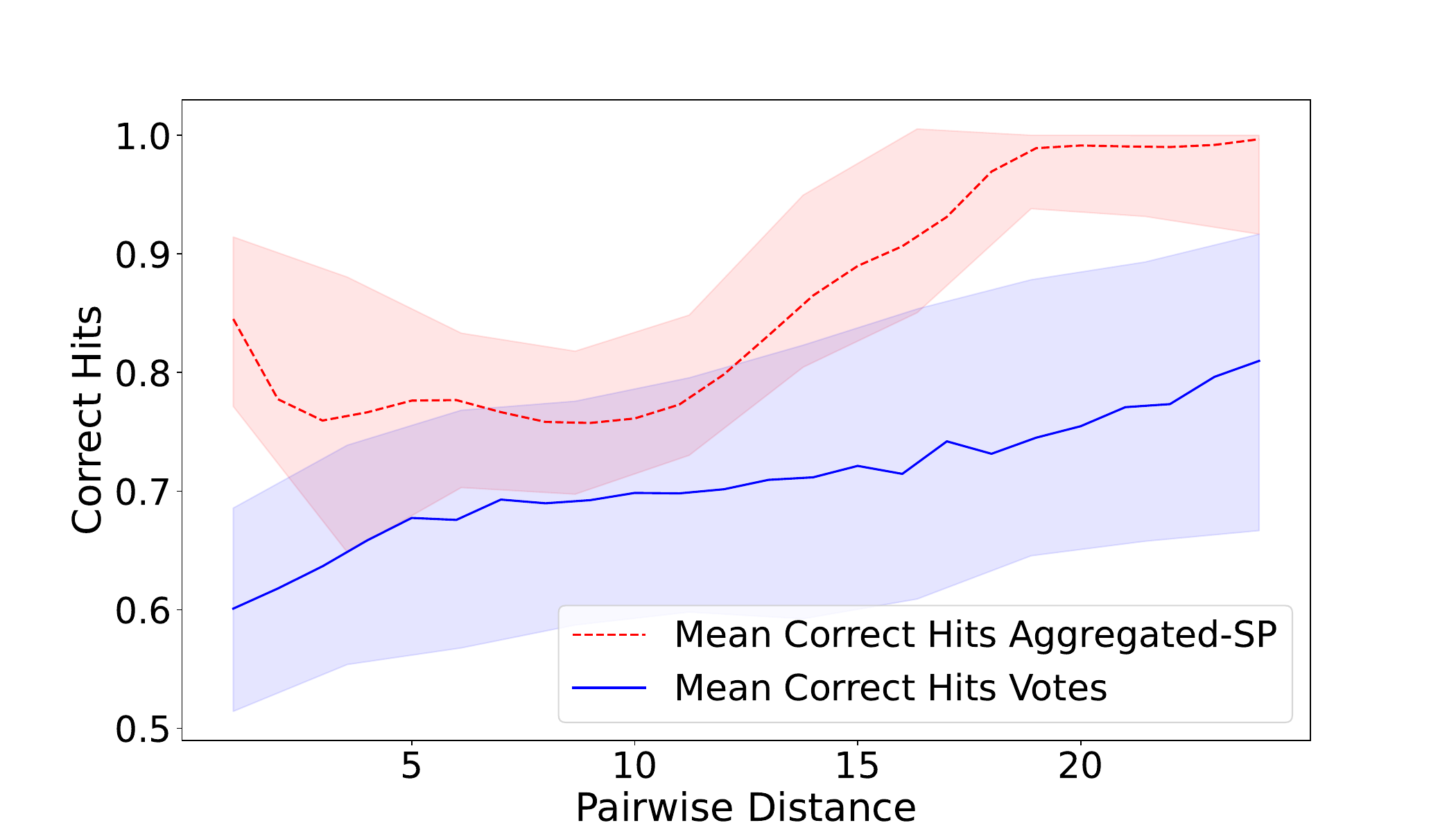}
        \caption{\texttt{Approval(3)-Rank}}
    \end{subfigure}
    \begin{subfigure}[b]{0.32\textwidth}
        \centering        \includegraphics[width=\textwidth]{./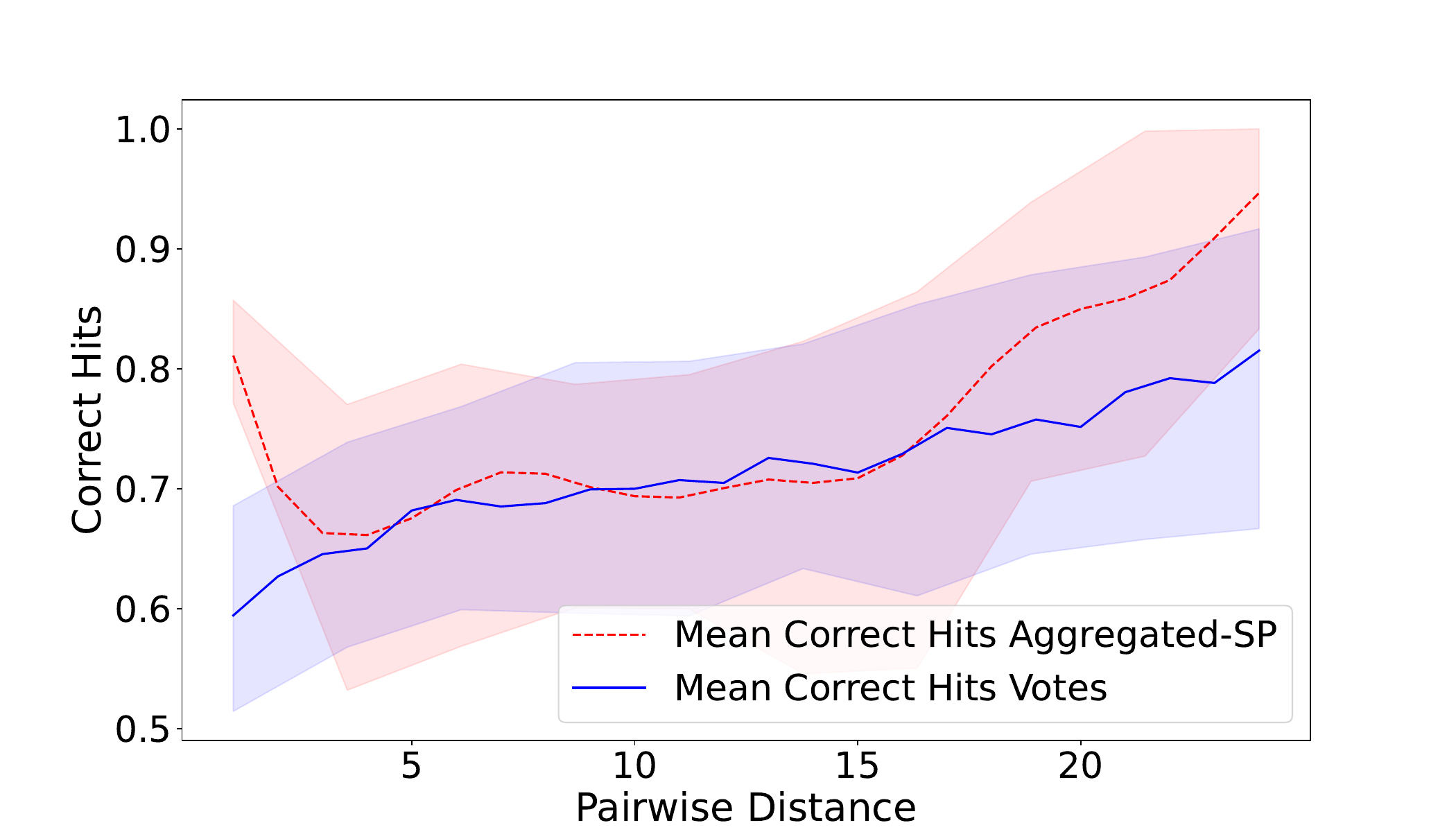}
        \caption{\texttt{Rank-Top}}
    \end{subfigure}
    \begin{subfigure}[b]{0.32\textwidth}
        \centering        \includegraphics[width=\textwidth]{./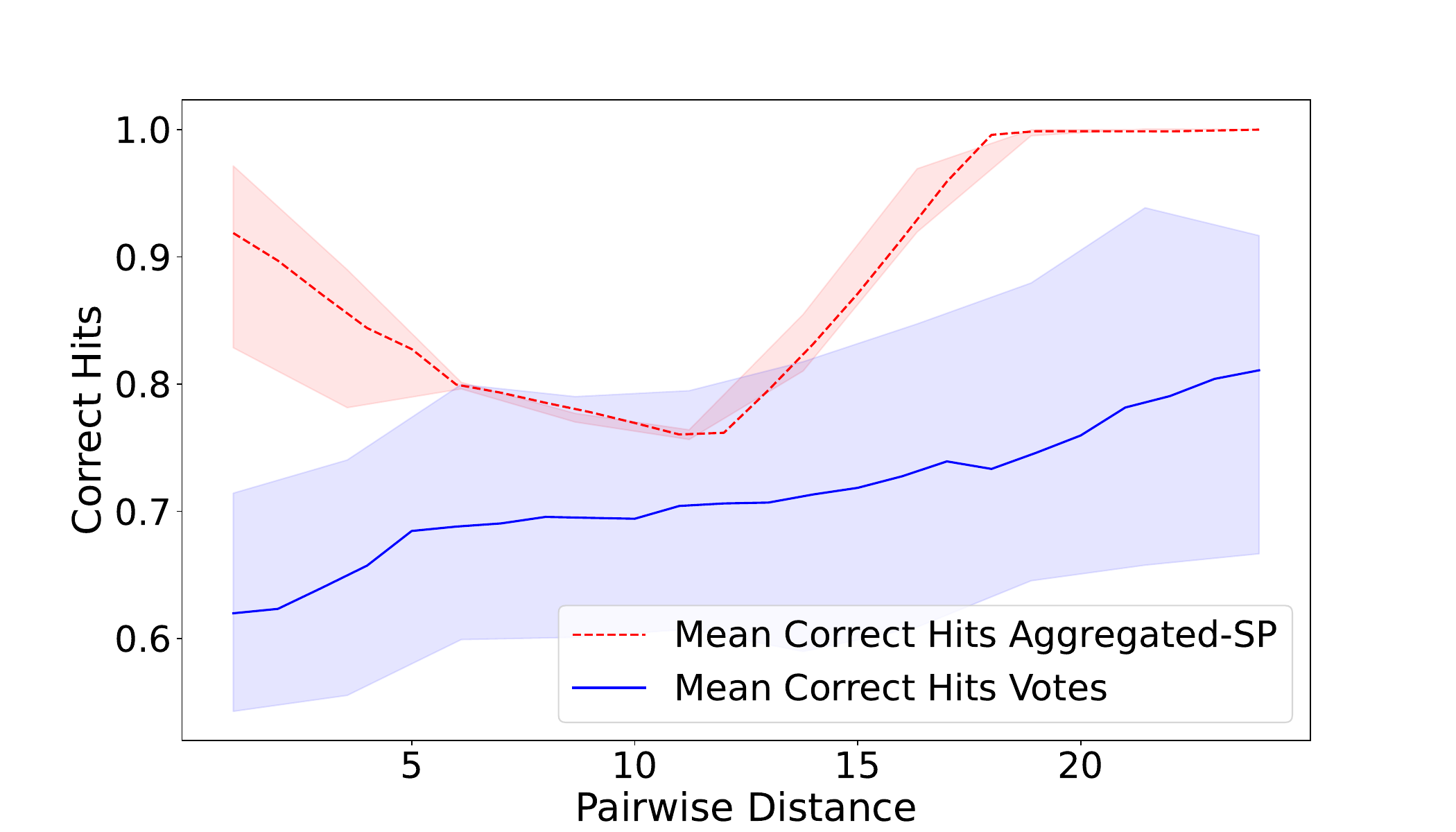}
        \caption{\texttt{Rank-Rank}}
    \end{subfigure}
    \caption{The figures show the effect of different elicitation formats for ground-truth recovery using Maximin and Aggregated-SP using metric defined in Section \ref{appendix:hit_rate_difference}. Improved performance in \texttt{Approval(3)-Rank}, and \texttt{Rank-Rank} are consistent with the observations made in Figure \ref{fig:partial_difference_elicitation} except for \texttt{Approval(2)-Approval(2)} where no statistical significance is observed.}
    \label{fig:aggregated_difference_elicitation}
\end{figure}

\begin{figure}[!h]
    \centering
    \begin{subfigure}[b]{0.32\textwidth}
        \centering        \includegraphics[width=\textwidth]{./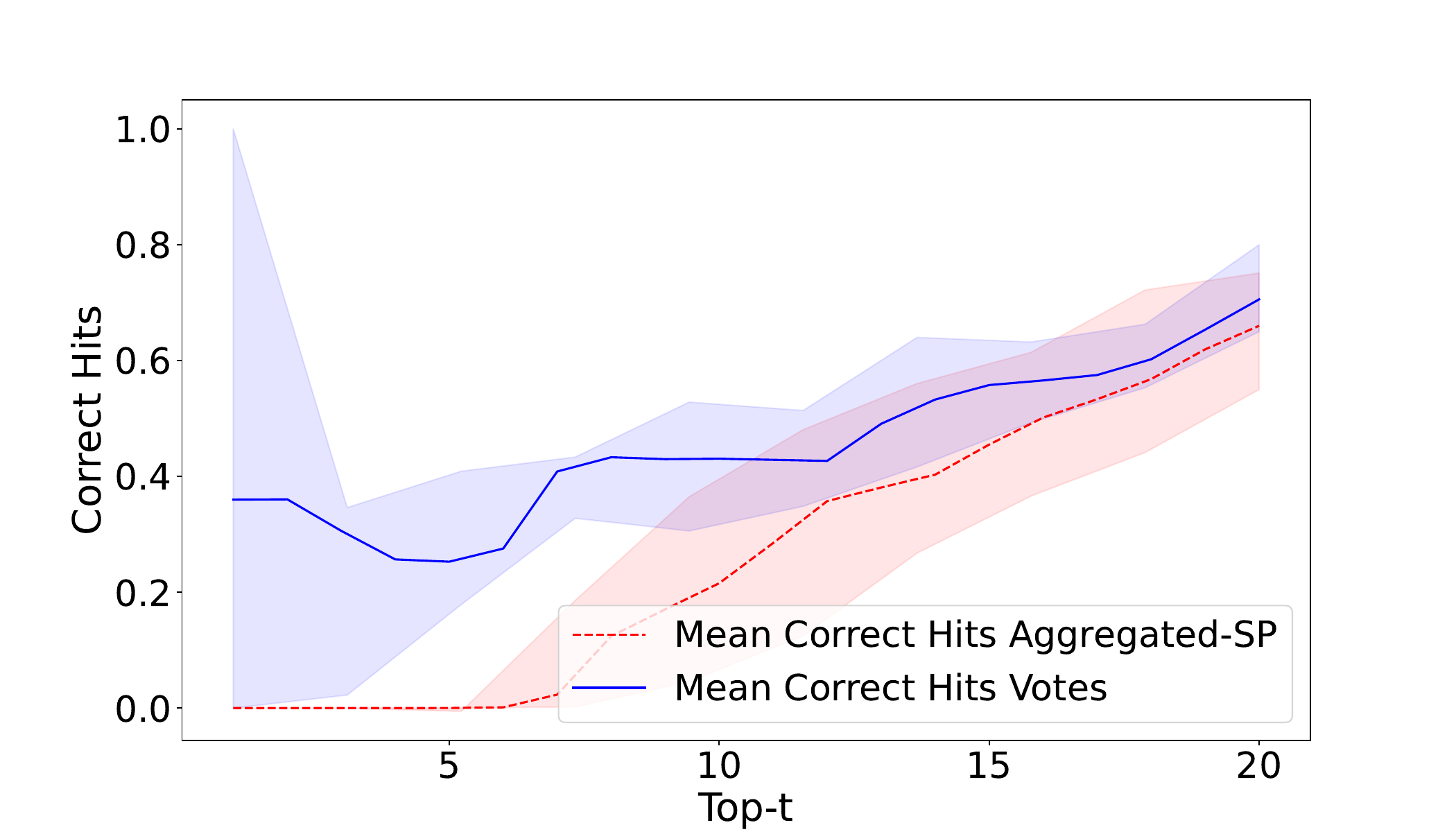}
        \caption{\texttt{Top-Top}}
    \end{subfigure}
    \hfill
    \begin{subfigure}[b]{0.32\textwidth}
        \centering        \includegraphics[width=\textwidth]{./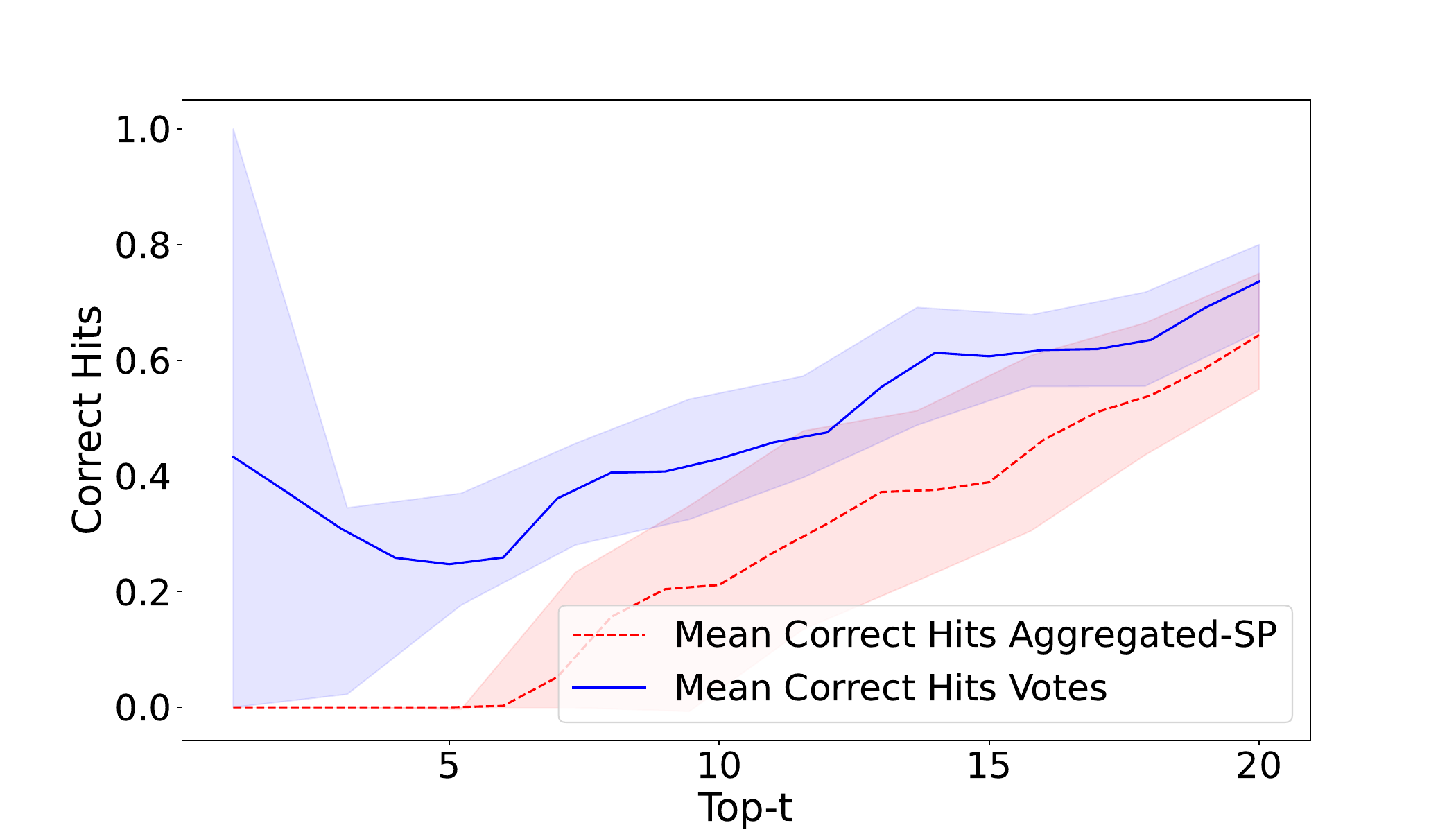}
        \caption{\texttt{Top-Approval(3)}}
    \end{subfigure}
    \begin{subfigure}[b]{0.32\textwidth}
        \centering        \includegraphics[width=\textwidth]{./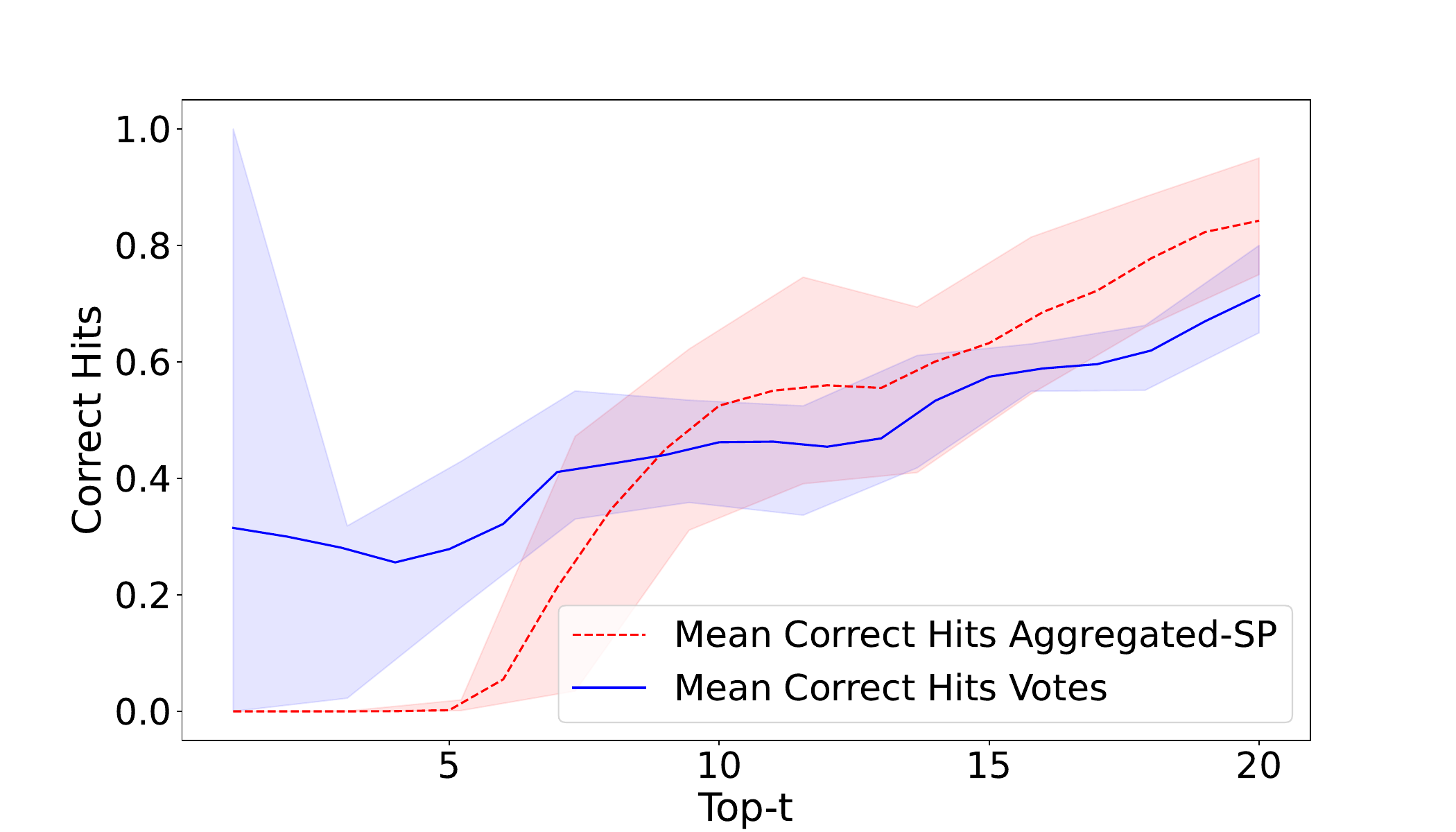}
        \caption{\texttt{Top-Rank}}
    \end{subfigure}
    \begin{subfigure}[b]{0.32\textwidth}
        \centering        \includegraphics[width=\textwidth]{./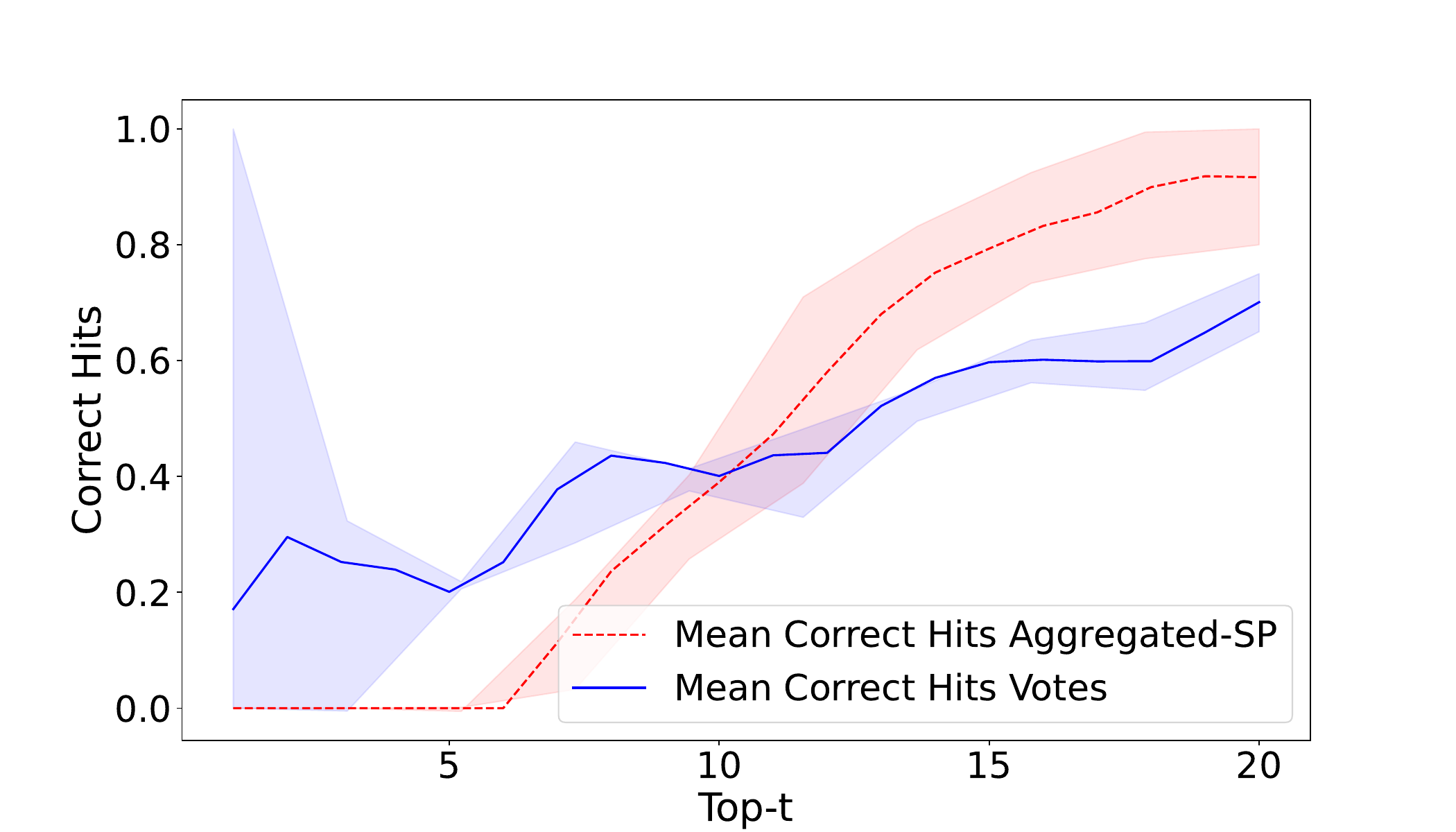}
        \caption{\texttt{Approval(2)-Approval(2)}}
    \end{subfigure}
    \begin{subfigure}[b]{0.32\textwidth}
        \centering        \includegraphics[width=\textwidth]{./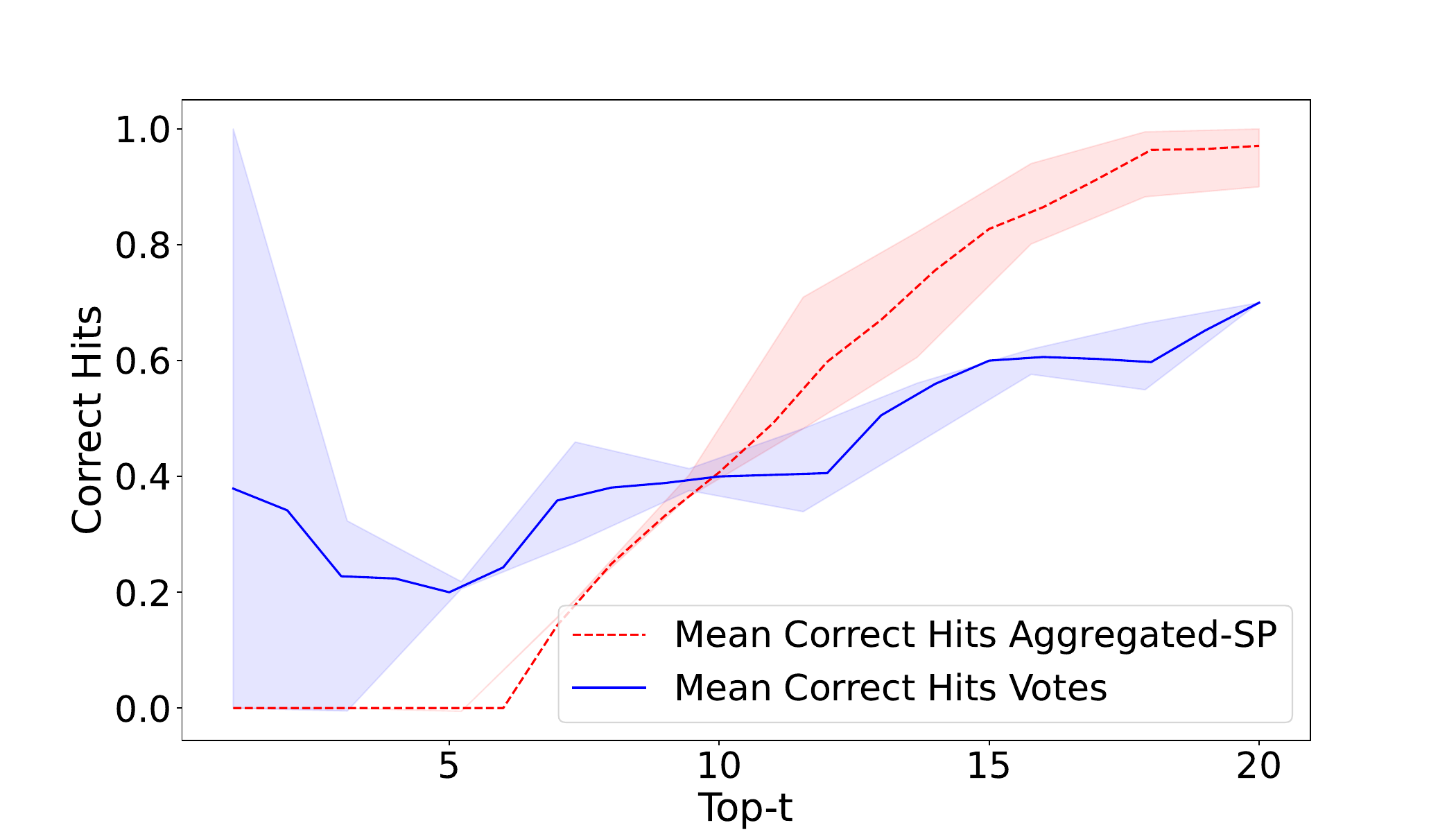}
        \caption{\texttt{Approval(3)-Rank}}
    \end{subfigure}
    \begin{subfigure}[b]{0.32\textwidth}
        \centering        \includegraphics[width=\textwidth]{./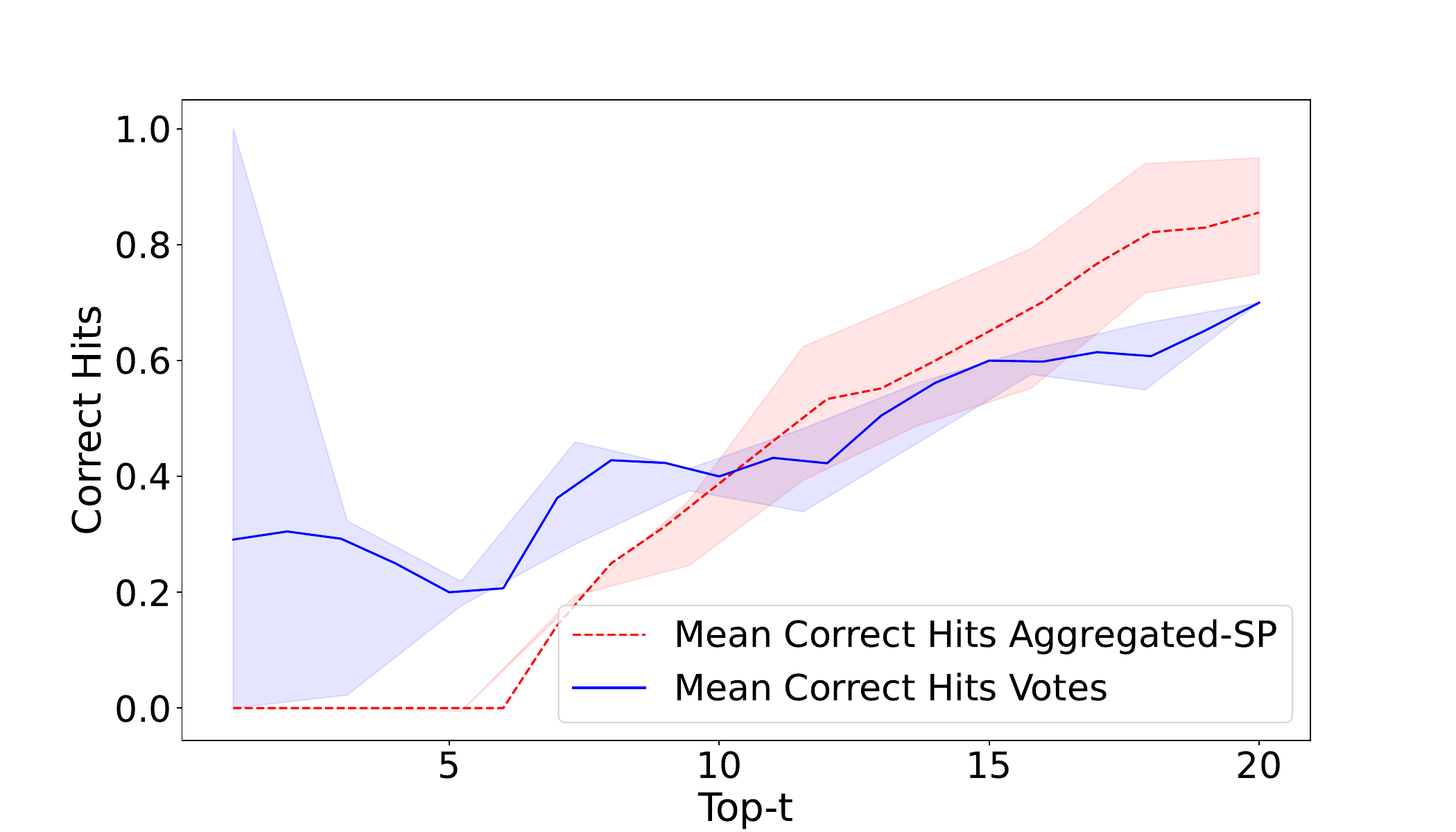}
        \caption{\texttt{Rank-Top}}
    \end{subfigure}
    \begin{subfigure}[b]{0.32\textwidth}
        \centering        \includegraphics[width=\textwidth]{./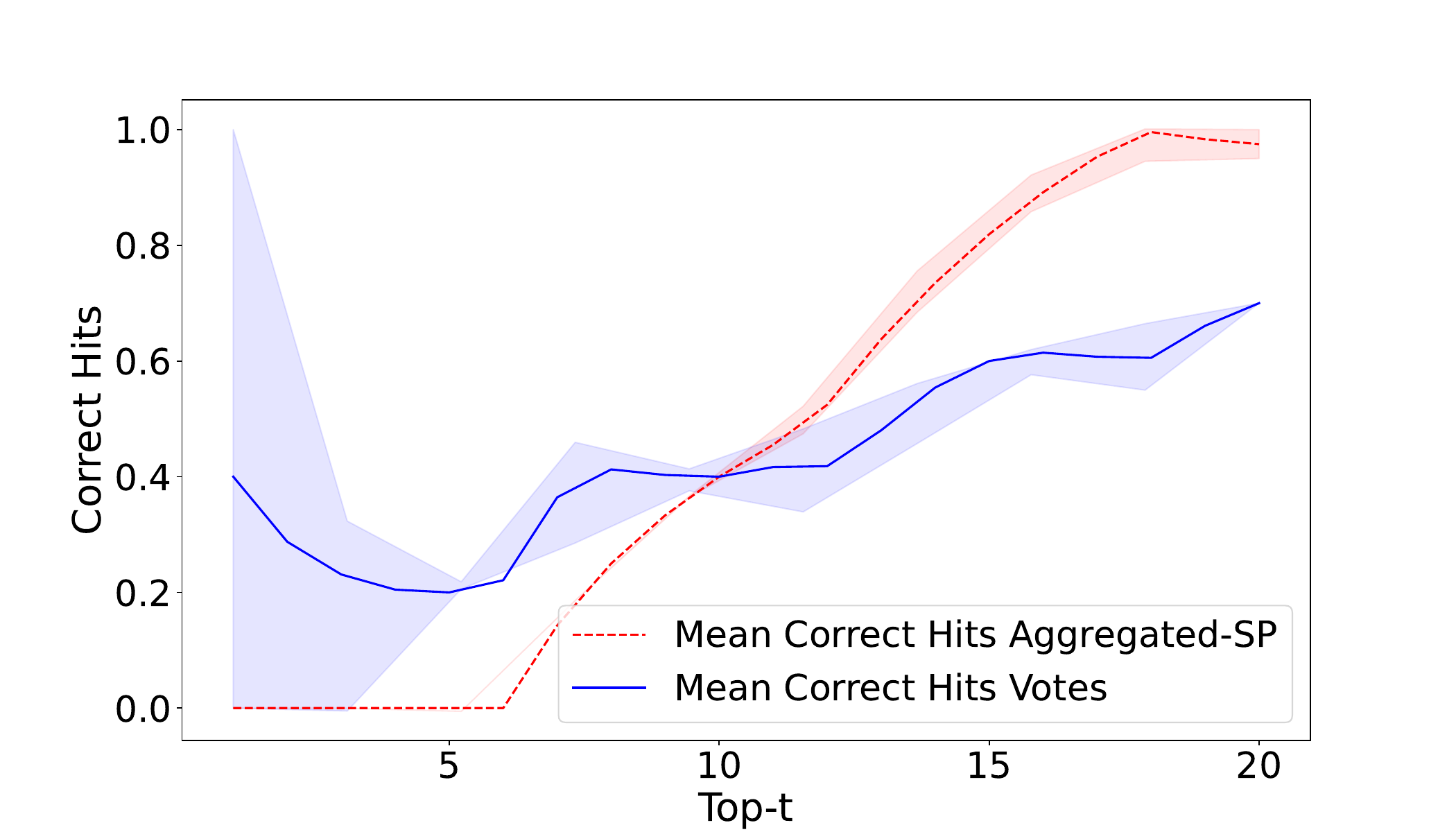}
        \caption{\texttt{Rank-Rank}}
    \end{subfigure}
    \caption{The figures show the effect of different elicitation formats for ground-truth recovery using Maximin and Aggregated-SP using metric defined in \Cref{appendix:hit_rate_topt}. Improved performance, especially after $t=10$, in \texttt{Approval(2)-Approval(2)}, \texttt{Approval(3)-Rank}, and \texttt{Rank-Rank} are consistent with the observations made in \Cref{fig:partial_topk_elicitation}.}
    \label{fig:aggregated_topk_elicitation}
\end{figure}

\subsection{Comparing Performance between Real and Simulated Data}
\label{appendix:real_simulated}

Figure \ref{fig:comparison_real_simulated} shows the performance of Partial-SP with Copeland Aggregation on Real Data and Simulated Data using metrics described in Section \ref{appendix:hit_rate_difference} and \ref{appendix:hit_rate_topt}. In both of the metrics, we observe similar trends across various pairwise distances, and top-t metrics. 

\begin{figure}[h]
    \centering
    \begin{subfigure}[b]{0.48\textwidth} % Adjusted width for better fit
        \centering
        \includegraphics[width=\textwidth]{./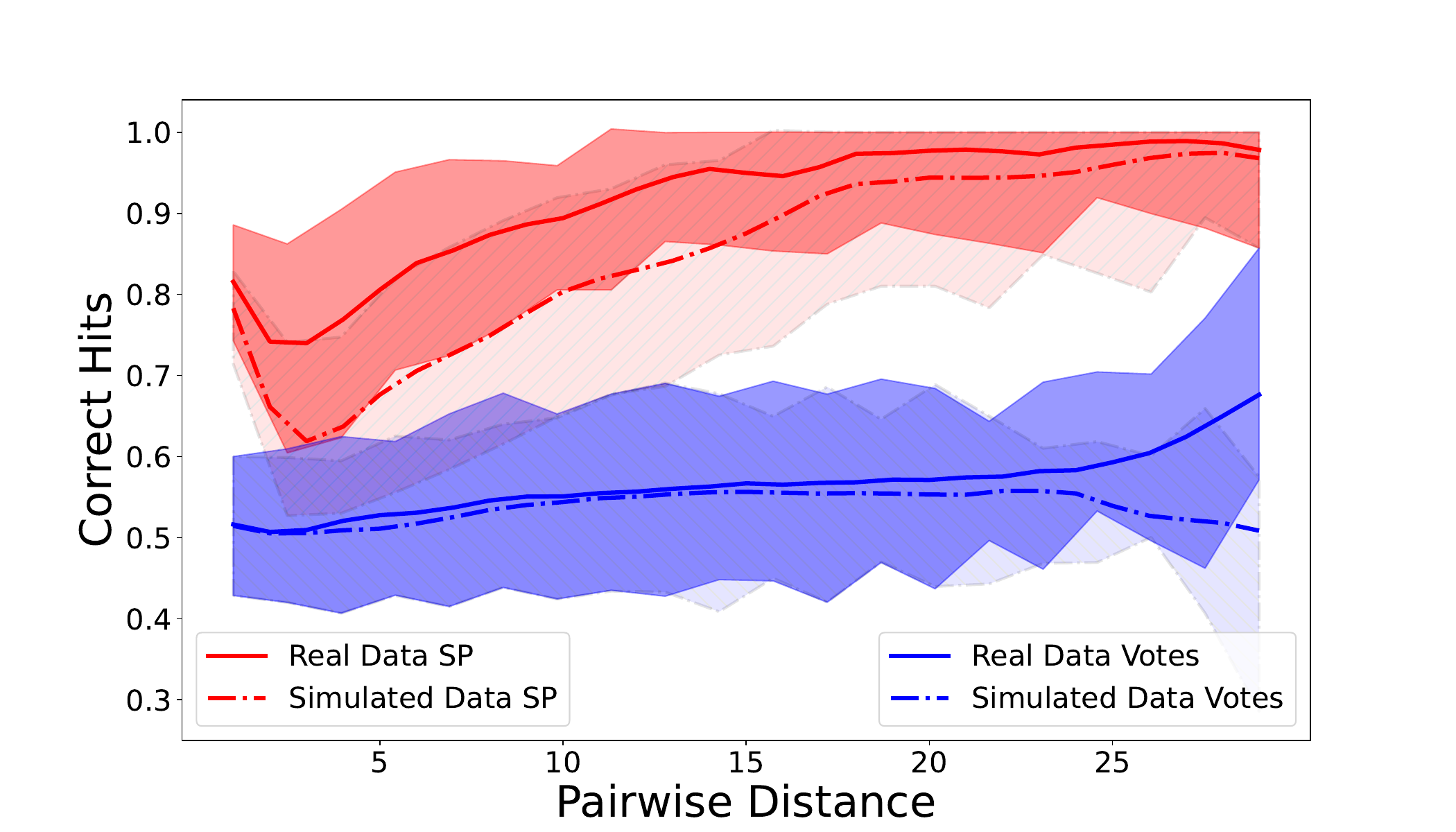} % Adjusted to fill the subfigure
        \caption{Correct-Hits Vs Pairwise distance.}
        \label{fig:real_simulated_fraction_difference}
    \end{subfigure}
    \hfill
    \begin{subfigure}[b]{0.48\textwidth} % Adjusted width for better fit
        \centering
        \includegraphics[width=\textwidth]{./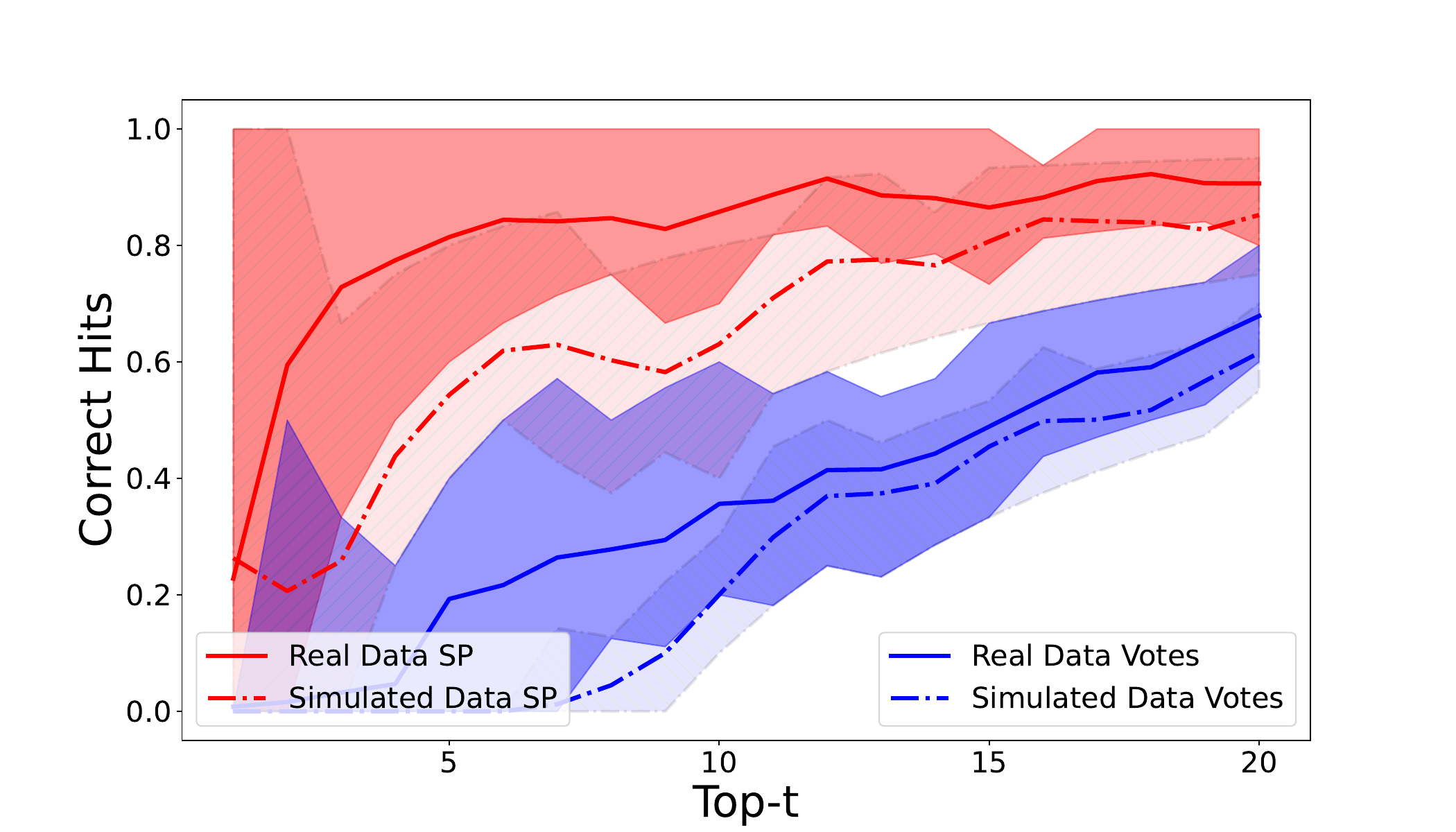} % Adjusted to fill the subfigure
        \caption{Correct Hits vs Top-$t$}
        \label{fig:real_simulated_fraction_topt}
    \end{subfigure}
    \caption{Comparison of pairwise and Top-$t$ hit rate for Copeland-Aggregated Partial-SP and Copeland rule for \texttt{Rank-Rank} on Real and Simulated Data. Similar trends are noticed in real and simulated data. }
    \label{fig:comparison_real_simulated}
\end{figure}
\clearpage
\section{Missing Proofs}
\label{Appendix-D}

\subsection{Proof of \texorpdfstring{Theorem~\eqref{thm:recovery-partial-ranking}}{} }
\begin{proof}
    We first establish a lower bound on $g(\sigma \mid \sigma')$ for any $\sigma, \sigma'$.
    \begin{align*}
        g(\sigma \mid \sigma') = \sum_{\tilde{\sigma}} \Prg(\tilde{\sigma} \mid \sigma') \Prs(\sigma \mid \tilde{\sigma}) \ge \min_{\tilde{\sigma}} \Prs(\sigma \mid \tilde{\sigma}) \sum_{\tilde{\sigma}} \Prg(\tilde{\sigma} \mid \sigma') = \min_{\tilde{\sigma}} \Prs(\sigma \mid \tilde{\sigma}) 
    \end{align*}
    Under the assumption of uniform prior we have,
\begin{align*}
    \Prs(\sigma \mid \tilde{\sigma}) &= \sum_{\pi: \pi \triangleright \tilde{\sigma}} \frac{\Pr(\pi)}{\Pr(\tilde{\sigma})} \Prs(\sigma \mid \pi) \\
    &= \frac{1}{(m-k)!} \sum_{\pi: \pi \triangleright \tilde{\sigma}} \Prs(\sigma^\star \mid \pi)\\
    &= \frac{1}{(m-k)!} \sum_{\pi: \pi \triangleright \tilde{\sigma}} \sum_{\pi': \pi' \triangleright {\sigma }} \Prs(\pi' \mid \pi)\\
    &= \frac{1}{(m-k)!} \sum_{\pi: \pi \triangleright \tilde{\sigma}} \sum_{\pi': \pi' \triangleright {\sigma }} p \cdot \frac{\phi_E^{d(\pi',\pi)}}{Z(\phi_E)} + (1-p) \cdot \frac{\phi_{NE}^{d(\pi',\pi)}}{Z(\phi_{NE})} \\
    &= \frac{1}{(m-k)!} \sum_{\pi: \pi \triangleright \tilde{\sigma}} \left( p \cdot \frac{Z(\phi_E, m-k)}{Z(\phi_E)} \cdot \phi_E^{d(\sigma, \tilde{\sigma})}+ (1-p) \cdot \frac{Z(\phi_{NE}, m-k)}{Z(\phi_{NE})}\cdot \phi_{NE}^{d(\sigma, \tilde{\sigma})}\right)\\
    &=  \underbrace{p \cdot \frac{Z(\phi_E, m-k)}{Z(\phi_E)} \cdot \phi_E^{d(\sigma, \tilde{\sigma})}+ (1-p) \cdot \frac{Z(\phi_{NE}, m-k)}{Z(\phi_{NE})}\cdot \phi_{NE}^{d(\sigma, \tilde{\sigma})} }_{:= \mu}\\
    %&\ge \underbrace{p \cdot \frac{Z(\phi_E, m-k)}{Z(\phi_{NE})} \cdot \phi_E^{k(k-1)/2} }_{:= \mu }
\end{align*}
Here $Z(\phi, m-k) = \sum_{\tau: [m-k] \rightarrow [m-k]} \phi^{d(\tau, \tau^\star)}$. Therefore, $g(\sigma \mid \sigma') \ge \kappa$ for any $\sigma, \sigma'$. 

We will first prove a multiplicative concentration inequality on the estimates $\widehat{g}(\cdot \mid \sigma)$ for any $\sigma$. Now fix any $\sigma$. By using lemma~\eqref{lem:infinity-norm-bound} we obtain that with probability at least $1-\delta_1$ $\max_{\sigma'} \abs{g(\sigma' \mid \sigma) - \widehat{g}(\sigma' \mid \sigma)} \le \frac{\log(1/\delta_1)}{n^2}$. This implies that $\widehat{g}(\sigma' \mid \sigma) \ge g(\sigma' \mid \sigma) - \frac{\log(1/\delta_1)}{n^2}$. Since $g(\sigma' \mid \sigma) \ge \mu$, in order to have $\widehat{g}(\sigma' \mid \sigma) \ge (1-\varepsilon) g(\sigma' \mid \sigma)$ it is sufficient to have $n \ge \sqrt{\frac{\log(1/\delta_1)}{\varepsilon\cdot \mu}}$. Moreover, because of the fact that $\mu < 1$, we also have $\widehat{g}(\sigma' \mid \sigma) \le (1+\varepsilon) g(\sigma' \mid \sigma)$. Finally, we can use union bound over all $k!$ permutations $\sigma$ and substituting $\delta_1 = \delta / (2 \cdot k!)$ we obtain that as long as the number of samples from each permutation $\sigma$ is at least $\sqrt{\frac{k \log(2k/\delta)}{\varepsilon\cdot \mu}}$, we have
$$
\Pr\left( \forall \sigma, \sigma',\ (1-\varepsilon) g(\sigma' \mid \sigma) \le \widehat{g}(\sigma' \mid \sigma) \le (1+\varepsilon) g(\sigma' \mid \sigma) \right) \ge 1 - \frac{\delta}{2}
$$

We now apply a similar concentration inequality for the frequency terms $f(\cdot)$. Since $\Pr_s(\sigma \mid \sigma^\star) \ge \mu$ for any $\sigma$ we have $f(\sigma) \ge \kappa$. By an argument very similar to the previous paragraph we have that as long as $n \ge \sqrt{\frac{\log(2/\delta)}{\varepsilon \cdot \mu}}$, we are guaranteed that $(1-\varepsilon) f(\sigma) \le \widehat{f}(\sigma) \le (1+\varepsilon) f(\sigma)$ with probability at least $1-\delta / 2$.

Now we provide a lower bound on the estimate $\widehat{V}(\sigma^\star)$. 
\begin{align*}
    \widehat{V}(\sigma^\star) = \widehat{f}(\sigma^\star) \sum_{\sigma'} \frac{\widehat{g}(\sigma' \mid \sigma^\star)}{\widehat{g}(\sigma^\star \mid \sigma')} \ge \frac{(1-\varepsilon)^2}{(1+\varepsilon)} {f}(\sigma^\star) \sum_{\sigma'} \frac{{g}(\sigma' \mid \sigma^\star)}{{g}(\sigma^\star \mid \sigma')} = \frac{(1-\varepsilon)^2}{(1+\varepsilon)} \overline{V}(\sigma^\star)
\end{align*}
%The last inequality follows as long as $\varepsilon \le 1/3$. 
We now provide an upper bound on $\widehat{V}(\tau)$ for any $\tau \neq \sigma^\star$.
\begin{align*}
    \widehat{V}(\tau) = \widehat{f}(\tau) \sum_{\sigma'} \frac{\widehat{g}(\sigma' \mid \tau)}{\widehat{g}(\tau \mid \sigma')} \le \frac{(1+\varepsilon)^2}{(1-\varepsilon)} {f}(\tau) \sum_{\sigma'} \frac{{g}(\sigma' \mid \tau)}{{g}(\tau \mid \sigma')} = \frac{(1+\varepsilon)^2}{(1-\varepsilon)}  \overline{V}(\tau)
\end{align*}
We now use lemma~\eqref{lem:separation-bound} i.e. $\overline{V}(\sigma^\star) \ge 2 \overline{V}(\tau)$.
\begin{align*}
     \widehat{V}(\sigma^\star) \ge \frac{(1-\varepsilon)^2}{(1+\varepsilon)} \overline{V}(\sigma^\star) \ge \frac{2(1-\varepsilon)^2}{(1+\varepsilon)} \overline{V}(\tau) \ge \frac{2(1-\varepsilon)^3}{(1+\varepsilon)^3} \widehat{V}(\tau) > \widehat{V}(\tau)
\end{align*}
as long as $\varepsilon \le \frac{\sqrt[3]{2} - 1}{\sqrt[3]{2} + 1} \approx 0.115$. We substitute $\varepsilon = 0.1$. Finally, observe that we pick the outcome with highest empirical prediction normalized vote $\widehat{V}(\sigma)$ and with probability at least $1-\delta$, the empirical prediction normalized vote of $\sigma^\star$ will be the highest, and will be picked as the outcome.
\end{proof}

\begin{lemma}[Theorem 9 of \cite{canonne2020short}]\label{lem:infinity-norm-bound}
    Let $X_1,\ldots,X_n$ be $n$ be i.i.d. drawn from a discrete distribution $p = (p_1,\ldots,p_k)$ and let $\widehat{p}_j = \frac{1}{n} \sum_{i=1}^n \mathbf{1}\set{X_i = j}$. Then we have
    $$
    \Pr\left( \max_{j \in [k]} \abs{\widehat{p}_j - p_j} \ge \frac{\log(1/\delta)}{n^2}\right) \le \delta
    $$
\end{lemma}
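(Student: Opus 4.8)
The plan is to reduce the uniform (over coordinates) deviation bound to a family of scalar deviation bounds, one per coordinate, and then pay a union bound over the $k$ coordinates. For a fixed $j \in [k]$, observe that $n\widehat p_j = \sum_{i=1}^n \mathbf{1}\set{X_i = j}$ is a sum of $n$ independent $\mathrm{Bernoulli}(p_j)$ variables, so $\widehat p_j$ is the empirical mean of iid $[0,1]$-valued random variables with expectation $p_j$. This is precisely the setting of the classical scalar concentration inequalities, and the entire argument comes down to selecting the right tail bound and the right threshold $t = t(n,\delta,k)$.

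First I would invoke Hoeffding's inequality coordinatewise: for every $t>0$,
\[
\Pr\left( \abs{\widehat p_j - p_j} \ge t \right) \le 2\exp(-2nt^2).
\]
A union bound over $j \in [k]$ then yields $\Pr\left( \max_{j} \abs{\widehat p_j - p_j} \ge t \right) \le 2k\exp(-2nt^2)$, and setting the right-hand side equal to $\delta$ and solving for $t$ gives the clean threshold $t = \sqrt{\log(2k/\delta)/(2n)}$. If a sharper, variance-aware bound is wanted---useful here because the estimated probabilities can be as small as $\mu$, cf. the lower bound $g(\sigma'\mid\sigma)\ge\mu$ used in \Cref{thm:recovery-partial-ranking}---I would instead apply Bernstein's inequality per coordinate, replacing the $n^{-1/2}$ term by the familiar $\sqrt{p_j\log(k/\delta)/n}+\log(k/\delta)/n$ form before union-bounding. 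This refinement changes only the constants and the small-probability regime, not the logic.

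The main obstacle, and the step I would scrutinize most, is reconciling the literal threshold $\tfrac{\log(1/\delta)}{n^2}$ in the statement with what these tools deliver. The union-bound argument produces a deviation of order $\sqrt{\log(2k/\delta)/n}$, i.e.\ an $n^{-1/2}$ rate, whereas the displayed threshold decays like $n^{-2}$; a one-line sanity check with $k=2$ and $p=(\tfrac12,\tfrac12)$, where $\abs{\widehat p_1 - p_1}=\Theta(n^{-1/2})$ with constant probability, shows that an $n^{-2}$ uniform bound cannot hold at constant $\delta$. I therefore expect the correct reading of the cited inequality to be $t=\sqrt{\log(2k/\delta)/(2n)}$ (or its Bernstein refinement), with only the constant and the placement of $k$ inside the logarithm to be pinned down against Theorem~9 of \cite{canonne2020short}. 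This matters only for the polynomial dependence on $n$ in the downstream sample complexity---the $\sqrt{1/\mu}$ scaling would become $1/\mu^2$---and not for the structure of the proof of \Cref{thm:recovery-partial-ranking}, which uses the lemma solely to convert an additive deviation into a multiplicative $(1\pm\varepsilon)$ guarantee via $g(\sigma'\mid\sigma)\ge\mu$.
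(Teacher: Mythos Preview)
The paper does not give its own proof of this lemma; it is simply quoted from \cite{canonne2020short}. So there is no ``paper's proof'' to compare against, and your coordinatewise Hoeffding (or Bernstein) plus union bound is exactly the standard argument for an $L_\infty$ deviation bound on the empirical pmf.

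Where your proposal adds value is in flagging that the threshold $\log(1/\delta)/n^2$ as printed cannot be correct: your CLT sanity check for $k=2$, $p=(1/2,1/2)$ already rules out an $n^{-2}$ rate at constant $\delta$, and both your argument and the cited source deliver a deviation of order $\sqrt{\log(ck/\delta)/n}$. Your downstream remark is also accurate: once the lemma is repaired, the step in the proof of Theorem~\ref{thm:recovery-partial-ranking} that converts the additive bound into a multiplicative $(1\pm\varepsilon)$ guarantee via $g(\sigma'\mid\sigma)\ge\mu$ requires $n \gtrsim \log(\cdot)/(\varepsilon^2\mu^2)$ rather than $n \gtrsim \sqrt{\log(\cdot)/(\varepsilon\mu)}$, so the stated sample complexity picks up an extra $1/\mu^{3/2}$ factor (and a $1/\varepsilon$), but the logical structure of that proof is unaffected.
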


\subsection{Separation Lemma}

\begin{lemma}\label{lem:separation-bound}
Assume uniform prior and assumption~\ref{asn:identifiability} holds.
Then for any ground truth $\sigma^\star$ over subset $T$ of size $k$ and any $\tau$ with $d(\tau, \sigma^\star) \ge 1$ we have, $\overline{V}(\sigma^\star) \ge 2 \overline{V}(\tau)$. 
\end{lemma}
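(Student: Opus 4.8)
The plan is to pass to the infinite‑voter limit, where the empirical quantities converge to the model quantities $f(\sigma) = \Prs(\sigma \mid \pi^\star)$ and $g(\sigma \mid \sigma') = \Pro(\sigma \mid \sigma')$, and then use the uniform‑prior hypothesis to show that the prediction‑normalized score is essentially proportional to the observed frequency, so that the lemma collapses to a purely first‑order comparison between $f(\sigma^\star)$ and $f(\tau)$.

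First I would record what the uniform prior does to the posteriors. A uniform prior over $\mathcal{L}(A)$ induces a uniform prior over partial rankings of $T$, so from \Cref{defn:partial-posterior} the posterior over partial ground truths is $\Prg(\tilde{\sigma}\mid\sigma') = \Prs(\sigma'\mid\tilde{\sigma})/h(\sigma')$, where $h(\sigma') := \sum_{\tilde{\sigma}} \Prs(\sigma'\mid\tilde{\sigma})$. Substituting into \Cref{defn:pro-partial} gives $g(\sigma\mid\sigma') = S(\sigma,\sigma')/h(\sigma')$ with $S(\sigma,\sigma') = \sum_{\tilde\sigma}\Prs(\sigma\mid\tilde\sigma)\Prs(\sigma'\mid\tilde\sigma)$ symmetric, hence $g(\sigma'\mid\sigma)/g(\sigma\mid\sigma') = h(\sigma')/h(\sigma)$ for all $\sigma,\sigma'$. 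The key point is that $h$ is constant: expanding $\Prs(\sigma\mid\tilde\sigma)$ back through full rankings exactly as in the proof of \Cref{thm:recovery-partial-ranking} and summing over all $k!$ partial centers reorganizes $h(\sigma)$ into $\tfrac{1}{N}\sum_{\pi' \triangleright \sigma}\sum_{\pi} \Prs(\pi'\mid\pi)$ (with $N$ the number of extensions), and $\sum_{\pi}\Prs(\pi'\mid\pi) = 1$ for every $\pi'$ by the symmetry of Kendall–Tau (each Mallows component sums to $1$). So $h\equiv 1$, every ratio $g(\sigma'\mid\sigma)/g(\sigma\mid\sigma')$ equals $1$, and $\overline{V}(\sigma) = f(\sigma)\sum_{\sigma'} 1 = k!\,f(\sigma)$.

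Given $\overline{V}(\sigma) = k!\,f(\sigma)$, the lemma is equivalent to $f(\sigma^\star) \ge 2 f(\tau)$ for every $\tau$ with $d(\tau,\sigma^\star)\ge 1$. Using the mixture form derived in the proof of \Cref{thm:recovery-partial-ranking}, $f(\sigma) = p\,c_E\,\phi_E^{d(\sigma,\sigma^\star)} + (1-p)\,c_{NE}\,\phi_{NE}^{d(\sigma,\sigma^\star)}$ with $c_E = Z(\phi_E,m-k)/Z(\phi_E)$ and $c_{NE} = Z(\phi_{NE},m-k)/Z(\phi_{NE})$. Since $\phi_E,\phi_{NE}<1$, $f$ is decreasing in $d(\sigma,\sigma^\star)$, so the worst case is a distance‑$1$ ranking, where $f(\tau) \le p\,c_E\,\phi_E + (1-p)\,c_{NE}\,\phi_{NE}$. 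Hence it suffices to prove $p\,c_E(1-2\phi_E) \ge (1-p)\,c_{NE}(2\phi_{NE}-1)$, i.e.\ that the experts' surplus mass at the true restriction dominates the non‑experts' spill‑over onto near‑by rankings. Lower‑bounding the ratio $c_E/c_{NE} = \tfrac{Z(\phi_E,m-k)\,Z(\phi_{NE})}{Z(\phi_{NE},m-k)\,Z(\phi_E)}$ by elementary estimates on the normalizing constants (using $\phi_E<\phi_{NE}$ and peeling off the factor $\phi_E^{k(k-1)/2}$ and $Z(\phi_{NE},k)$) converts this expert‑dominance inequality into exactly $\big(\tfrac{p}{1-p}\big)^2 \ge 2\big(\tfrac{Z(\phi_{NE})}{Z(\phi_E)}\big)^2 Z(\phi_{NE},k)\,\phi_E^{k(k-1)/2}$, which is \Cref{asn:identifiability}; the square and the factor $2$ on the right are precisely what absorb the bound $(2\phi_{NE}-1)<1$ together with the required factor‑$2$ separation.

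The conceptual heart — and the step I expect to be the main obstacle — is the reduction $\overline{V}(\sigma) = k!\,f(\sigma)$. A priori the prediction ratios $g(\sigma'\mid\sigma)/g(\sigma\mid\sigma')$ could vary wildly with $\sigma$, and the crude two‑sided bounds $\mu \le g \le g_{\max}$ are far too lossy to yield a factor of $2$ (they would demand $\mu \ge 2 f(\tau)$, which is impossible because $\mu$ carries the exponent $\phi^{k(k-1)/2}$ while $f(\tau)$ only $\phi^{1}$). It is the exact cancellation forced by the uniform prior — encoded in $h\equiv 1$ — that makes the second‑order information "neutral" and reduces everything to the frequencies; establishing that identity cleanly (and checking the bookkeeping with the $Z(\phi,\cdot)$'s so that expert‑dominance matches the stated assumption) is where the work lies.
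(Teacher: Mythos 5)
Your proposal takes a genuinely different route from the paper, and it breaks at its final step. The paper never uses the cancellation you build on; it proves the lemma with exactly the two-sided bounds you dismiss as too lossy: $\min_{\tilde{\sigma}} \Prs(\sigma \mid \tilde{\sigma}) \le g(\sigma \mid \sigma') \le \sum_{\tilde{\sigma}} \Prs(\sigma \mid \tilde{\sigma})$, hence $f(\sigma)/\sum_{\tilde{\sigma}} \Prs(\sigma \mid \tilde{\sigma}) \le \overline{V}(\sigma) \le f(\sigma)/\min_{\tilde{\sigma}} \Prs(\sigma \mid \tilde{\sigma})$. It then evaluates these quantities under the concentric Mallows mixture, uses $p < 1-p$ and $\phi_E < \phi_{NE}$, and obtains $\overline{V}(\sigma^\star) \ge \frac{p}{1-p}\frac{Z(\phi_E)}{Z(\phi_{NE})}\frac{1}{Z(\phi_{NE},k)}$ and $\overline{V}(\tau) \le \frac{1-p}{p}\frac{Z(\phi_{NE})}{Z(\phi_E)}\phi_E^{k(k-1)/2}$; \Cref{asn:identifiability} is precisely the inequality that makes the first exceed twice the second. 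In other words, the squared factor $(p/(1-p))^2$ and the $\phi_E^{k(k-1)/2}$ term in the assumption are the price of exactly that looseness --- the assumption is tailored to the paper's bounding strategy, not to a comparison of frequencies.

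The concrete gap in your argument is the claim that the expert-dominance inequality $p\,c_E(1-2\phi_E) \ge (1-p)\,c_{NE}(2\phi_{NE}-1)$ ``converts exactly'' into \Cref{asn:identifiability}: the two inequalities scale oppositely in $\phi_E$, so no bookkeeping with the normalization constants can match them. The right-hand side of \Cref{asn:identifiability} carries $\phi_E^{k(k-1)/2}$ and vanishes as $\phi_E \to 0$, so the assumption is satisfied by arbitrarily small $p/(1-p)$ once experts are accurate enough; your target inequality instead requires $p/(1-p) \ge (c_{NE}/c_E)\,(2\phi_{NE}-1)/(1-2\phi_E)$, whose right-hand side tends to a strictly positive limit (of order $Z(\phi_{NE},m-k)/Z(\phi_{NE})$, since $c_E \to 1$) as $\phi_E \to 0$. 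Equivalently, when $\phi_{NE}$ is close to $1$ the non-experts spread their votes nearly uniformly over the $k!$ partial rankings, so $f(\sigma^\star) \ge 2 f(\tau)$ forces $p/(1-p)$ to be at least on the order of $1/k!$, a requirement \Cref{asn:identifiability} simply does not impose when $\phi_E$ is tiny. So once you reduce the lemma to $f(\sigma^\star) \ge 2 f(\tau)$ via the identity $\overline{V}(\sigma) = k!\,f(\sigma)$, the stated assumption is not strong enough and the proof cannot be finished along this route. (Your cancellation computation itself, which rests on reading $\Prs(\sigma \mid \tilde{\sigma})$ as the prior-averaged kernel together with the doubly stochastic symmetry of Mallows, is a sharp observation --- but note that it is in tension with the lemma's factor-$2$ separation in exactly this regime, which should have been a warning sign; to derive the stated conclusion from the stated assumption you must keep the prediction ratios and bound them one-sidedly as the paper does, rather than cancel them exactly.)
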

\begin{proof}
Using the definition of $g(\cdot \mid \cdot)$ we can establish the following lower and upper bounds.
$$
g(\sigma \mid \sigma') = \sum_{\tilde{\sigma}} \Prg(\tilde{\sigma} \mid \sigma') \Prs(\sigma \mid \tilde{\sigma}) \ge \min_{\tilde{\sigma}} \Prs(\sigma \mid \tilde{\sigma}) \sum_{\tilde{\sigma}} \Prg(\tilde{\sigma}\mid \sigma') = \min_{\tilde{\sigma}} \Prs(\sigma \mid \tilde{\sigma})
$$
$$
g(\sigma \mid \sigma') = \sum_{\tilde{\sigma}} \Prg(\tilde{\sigma} \mid \sigma') \Prs(\sigma \mid \tilde{\sigma}) \le \sum_{\tilde{\sigma}} \Prs(\sigma \mid \tilde{\sigma}) \sum_{\tilde{\sigma}} \Prg(\tilde{\sigma}\mid \sigma') = \sum_{\tilde{\sigma}} \Prs(\sigma \mid \tilde{\sigma})
$$
Now we can establish the following lower and upper bounds on the prediction-normalized vote.
\begin{align*}
    \overline{V}(\sigma) = f(\sigma) \sum_{\sigma'} \frac{g(\sigma' \mid \sigma)}{g(\sigma \mid \sigma')} \le \frac{f(\sigma)}{\min_{\tilde{\sigma}} \Prs(\sigma \mid \tilde{\sigma})} \sum_{\sigma'} g(\sigma' \mid \sigma) = \frac{f(\sigma)}{\min_{\tilde{\sigma}} \Prs(\sigma \mid \tilde{\sigma})} 
\end{align*}
\begin{align*}
    \overline{V}(\sigma) = f(\sigma) \sum_{\sigma'} \frac{g(\sigma' \mid \sigma)}{g(\sigma \mid \sigma')} \ge \frac{f(\sigma)}{\sum_{\tilde{\sigma}} \Prs(\sigma \mid \tilde{\sigma})} \sum_{\sigma'} g(\sigma' \mid \sigma) = \frac{f(\sigma)}{\sum_{\tilde{\sigma}} \Prs(\sigma \mid \tilde{\sigma})} 
\end{align*}
Now consider a partial ranking $\tau$ such that $d(\tau, \sigma^\star) = 1$.  Then we have,
\begin{align*}
\overline{V}(\sigma^\star) \ge \frac{f(\sigma^\star)}{\sum_{\tilde{\sigma} }\Prs(\sigma^\star \mid \tilde{\sigma})} = \frac{\Prs(\sigma^\star \mid \sigma^\star)}{\sum_{\tilde{\sigma} }\Prs(\sigma^\star \mid \tilde{\sigma})}
\end{align*}
and
\begin{align*}
    \overline{V}(\tau) \le \frac{f(\tau)}{\min_{\tilde{\sigma}} \Prs(\tau \mid \tilde{\sigma})} = \frac{\Prs(\tau \mid \sigma^\star )}{\min_{\tilde{\sigma}} \Prs(\tau \mid \tilde{\sigma})} 
\end{align*}
Under the assumption of uniform prior we have,
\begin{align*}
    \Prs(\sigma^\star \mid \tilde{\sigma}) &= \sum_{\pi: \pi \triangleright \tilde{\sigma}} \frac{\Pr(\pi)}{\Pr(\tilde{\sigma})} \Prs(\sigma^\star \mid \pi) \\
    &= \frac{1}{(m-k)!} \sum_{\pi: \pi \triangleright \tilde{\sigma}} \Prs(\sigma^\star \mid \pi)\\
    &= \frac{1}{(m-k)!} \sum_{\pi: \pi \triangleright \tilde{\sigma}} \sum_{\pi': \pi' \triangleright {\sigma^\star }} \Prs(\pi' \mid \pi)\\
    &= \frac{1}{(m-k)!} \sum_{\pi: \pi \triangleright \tilde{\sigma}} \sum_{\pi': \pi' \triangleright {\sigma^\star }} p \cdot \frac{\phi_E^{d(\pi',\pi)}}{Z(\phi_E)} + (1-p) \cdot \frac{\phi_{NE}^{d(\pi',\pi)}}{Z(\phi_{NE})} \\
    &= \frac{c(T)}{(m-k)!} \left( p \cdot \frac{\phi_E^{d(\tilde{\sigma}, \sigma^\star)}}{Z(\phi_E)} + (1-p) \cdot \frac{\phi_{NE}^{d(\tilde{\sigma}, \sigma^\star)}}{Z(\phi_{NE})}\right)
\end{align*}
Here $c(T)$ is a constant depending only on the subset $T$. Using the above identity we obtain the following lower bound on $\overline{V}(\sigma^\star)$.
\begin{align*}
    \overline{V}(\sigma^\star) \ge \frac{p \cdot \frac{1}{Z(\phi_E)} + (1-p) \cdot \frac{1}{Z(\phi_{NE}) }}{\sum_{\tilde{\sigma}} p \cdot \frac{\phi_E^{d(\tilde{\sigma}, \sigma^\star)}}{Z(\phi_E)} + (1-p) \cdot \frac{\phi_{NE}^{d(\tilde{\sigma}, \sigma^\star)}}{Z(\phi_{NE})}}
\end{align*}
We can also obtain the following upper bound on $\overline{V}(\tau)$.
\begin{align*}
    \overline{V}(\tau) \le \frac{p \cdot \frac{\phi_E}{Z(\phi_E)} + (1-p) \cdot \frac{\phi_{NE}}{Z(\phi_{NE}) }}{\min_{\tilde{\sigma}} p \cdot \frac{\phi_E^{d(\tilde{\sigma}, \tau)}}{Z(\phi_E)} + (1-p) \cdot \frac{\phi_{NE}^{d(\tilde{\sigma}, \tau)}}{Z(\phi_{NE})}}
\end{align*}
We now use the relationship $p < (1-p)$ and $\phi_E < \phi_{NE}$ to improve the bounds. We will also write $Z(\phi,k) = \sum_{\tilde{\sigma}} \phi^{d(\tilde{\sigma}, \tau})$.
$$
\overline{V}(\sigma^\star) \ge \frac{\frac{2p}{Z(\phi_{NE})}}{\frac{2(1-p) Z(\phi_{NE}, k)}{Z(\phi_{E})}} = \frac{p}{1-p} \frac{Z(\phi_E)}{Z(\phi_{NE})} \frac{1}{Z(\phi_{NE}, k)}
$$
$$
\overline{V}(\tau) \le \frac{2\cdot \frac{(1-p) \phi_{NE}}{Z(\phi_E)} }{2p \cdot \frac{\phi_E^{k(k-1)/2}}{Z(\phi_{NE})}} = \frac{1-p}{p} \frac{Z(\phi_{NE})}{Z(\phi_E)} \phi_E^{k(k-1)/2}
$$
Therefore, in order to have $\overline{V}(\sigma^\star) \ge 2 \overline{V}(\tau)$ we need the following inequality to hold.
$$
\frac{p}{1-p} \frac{Z(\phi_E)}{Z(\phi_{NE})} \frac{1}{Z(\phi_{NE}, k)} \ge 2 \cdot \frac{1-p}{p} \frac{Z(\phi_{NE})}{Z(\phi_E)} \phi_E^{k(k-1)/2}
$$
\end{proof}

\section{Screenshots from our MTurk Survey}
\label{Appendix-C}

Here, we provide screenshots of different phases of our MTurk survey.

% \HH{add some space between the two images.... Also make sure they are consistent and the same size. They should be same width and length.}

\begin{figure}[!htbp]
    \centering
    \begin{subfigure}[b]{0.5\textwidth}  
        \centering
        \includegraphics[width=\textwidth]{./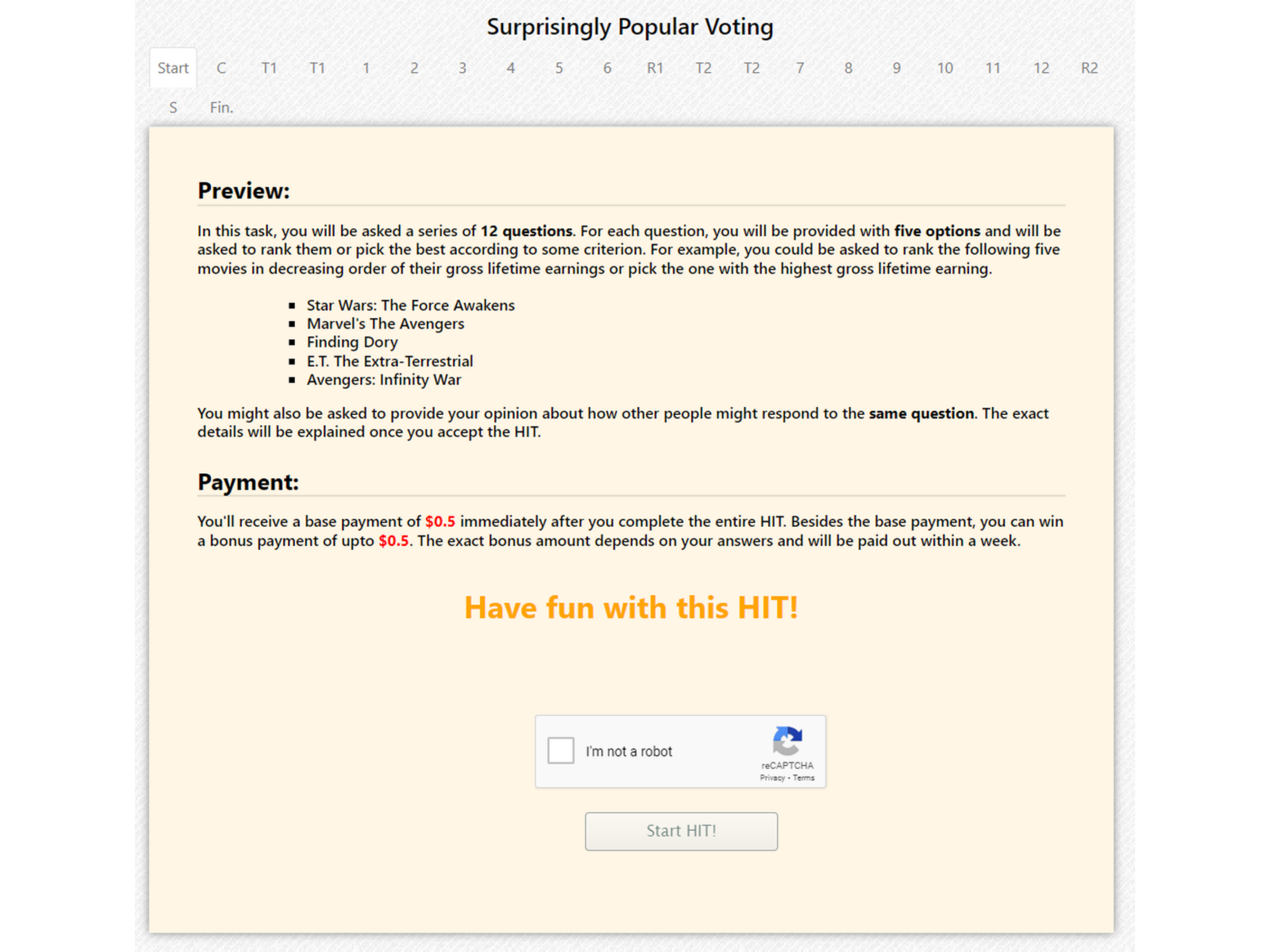}
    \end{subfigure}%
    \hfill
    \begin{subfigure}[b]{0.5\textwidth}
        \centering
        \includegraphics[width=\textwidth]{./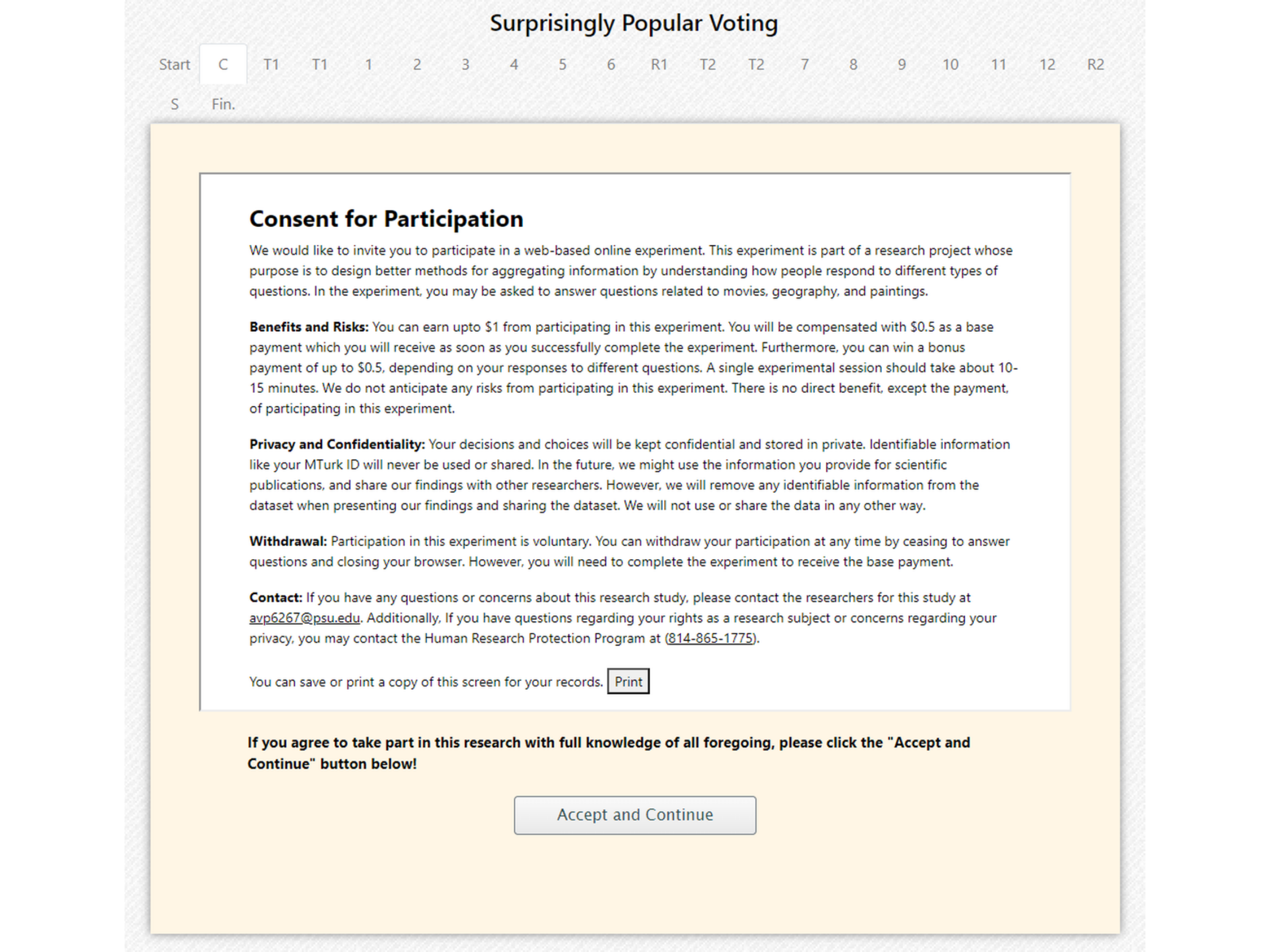}
    \end{subfigure}
    \caption{Preview and Consent Form}
    \label{fig:preview_and_consent}
\end{figure}

\begin{figure}[!htbp]
    \centering
    \begin{subfigure}[b]{0.5\textwidth}  %
    \centering
        \includegraphics[width=\textwidth]{./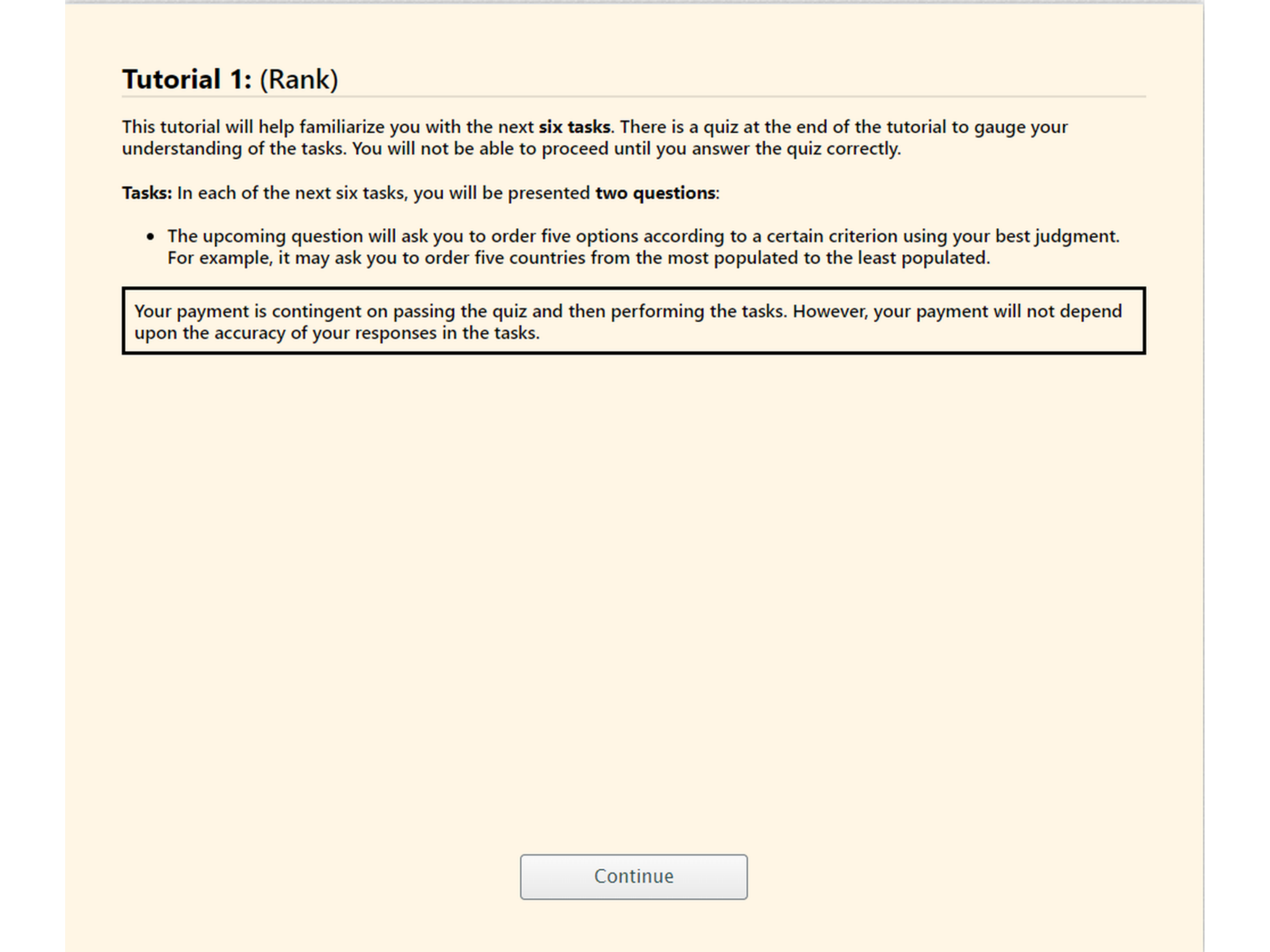}
    \end{subfigure}%
    \hfill  % This will insert a space between the figures
    \begin{subfigure}[b]{0.5\textwidth}
        \centering
        \includegraphics[width=\textwidth]{./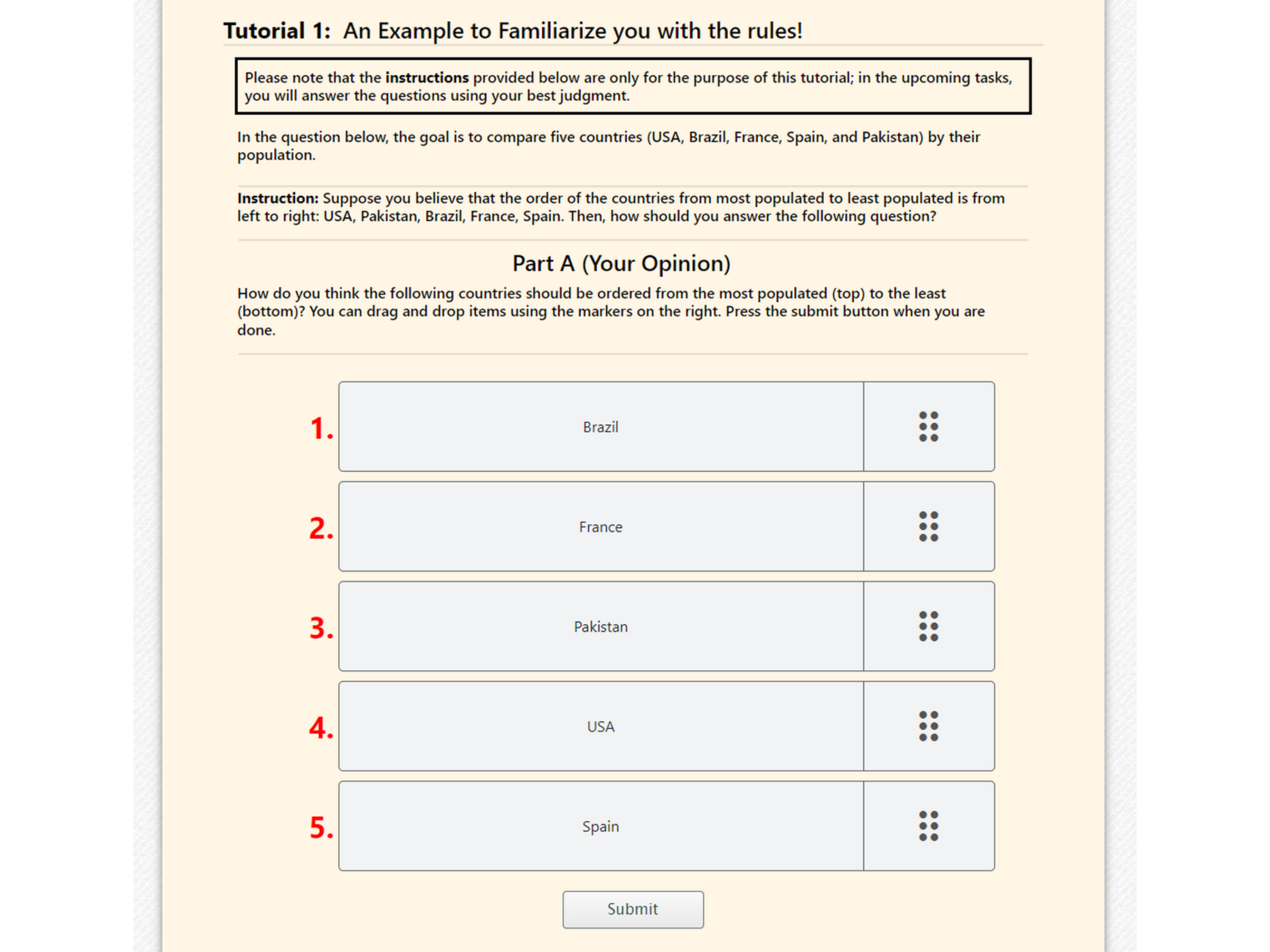}
    \end{subfigure}
    \caption{Tutorial for Rank-None Elicitation Format}
    \label{fig:tutorial1}
\end{figure}

\begin{figure}[!htbp]
    \centering
    \begin{subfigure}[b]{0.49\textwidth}  %
    \centering
        \includegraphics[width=\textwidth]{./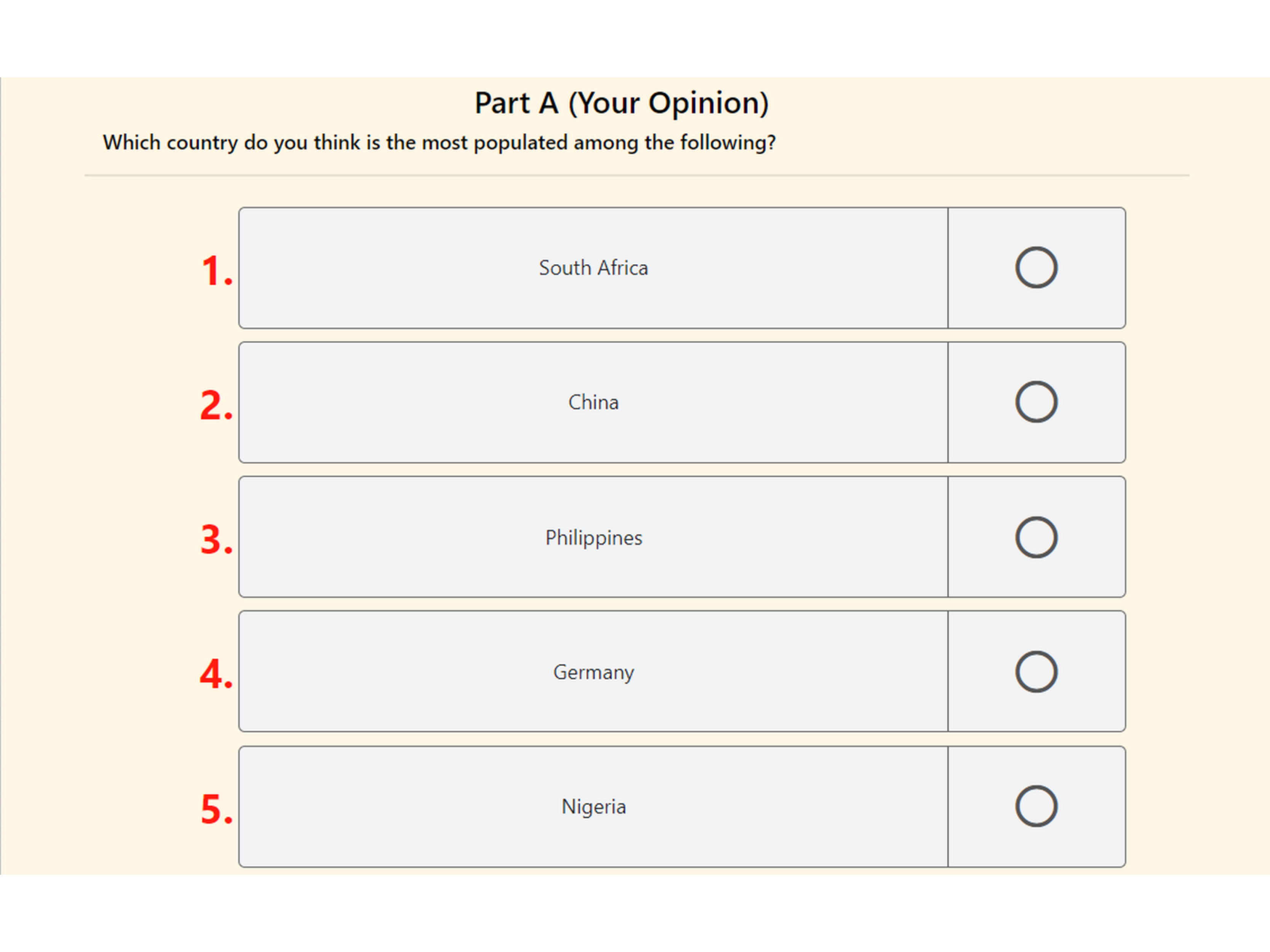}
    \end{subfigure}%
    \hfill  % This will insert a space between the figures
    \begin{subfigure}[b]{0.49\textwidth}
        \centering
        \includegraphics[width=\textwidth]{./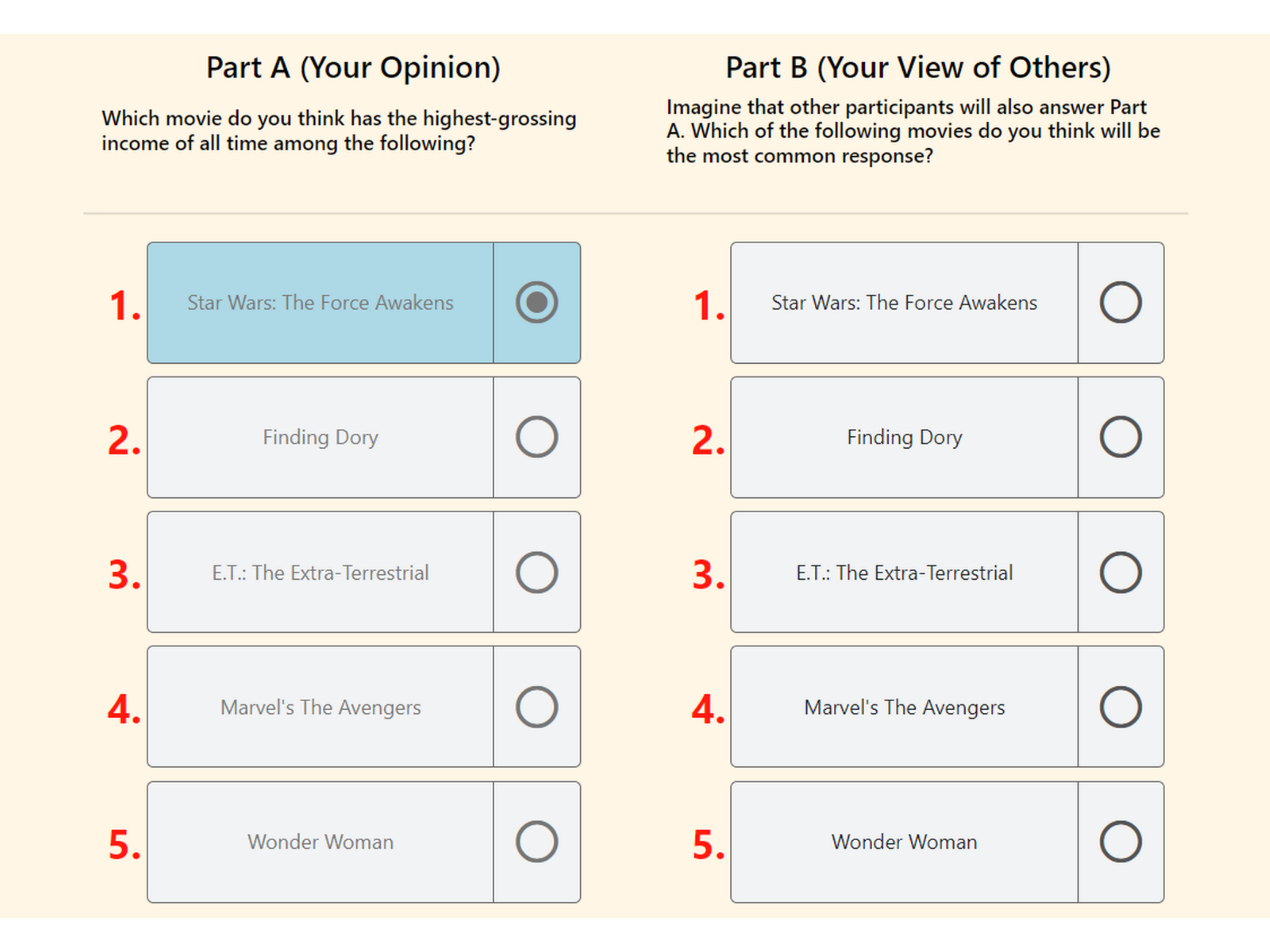}
    \end{subfigure}
    \caption{Questions for Top-None and Top-Top Elicitation Format}
    \label{fig:question1}
\end{figure}

\begin{figure}[!h]
    \centering
    \begin{subfigure}[b]{0.49\textwidth}  %
    \centering
        \includegraphics[width=\textwidth]{./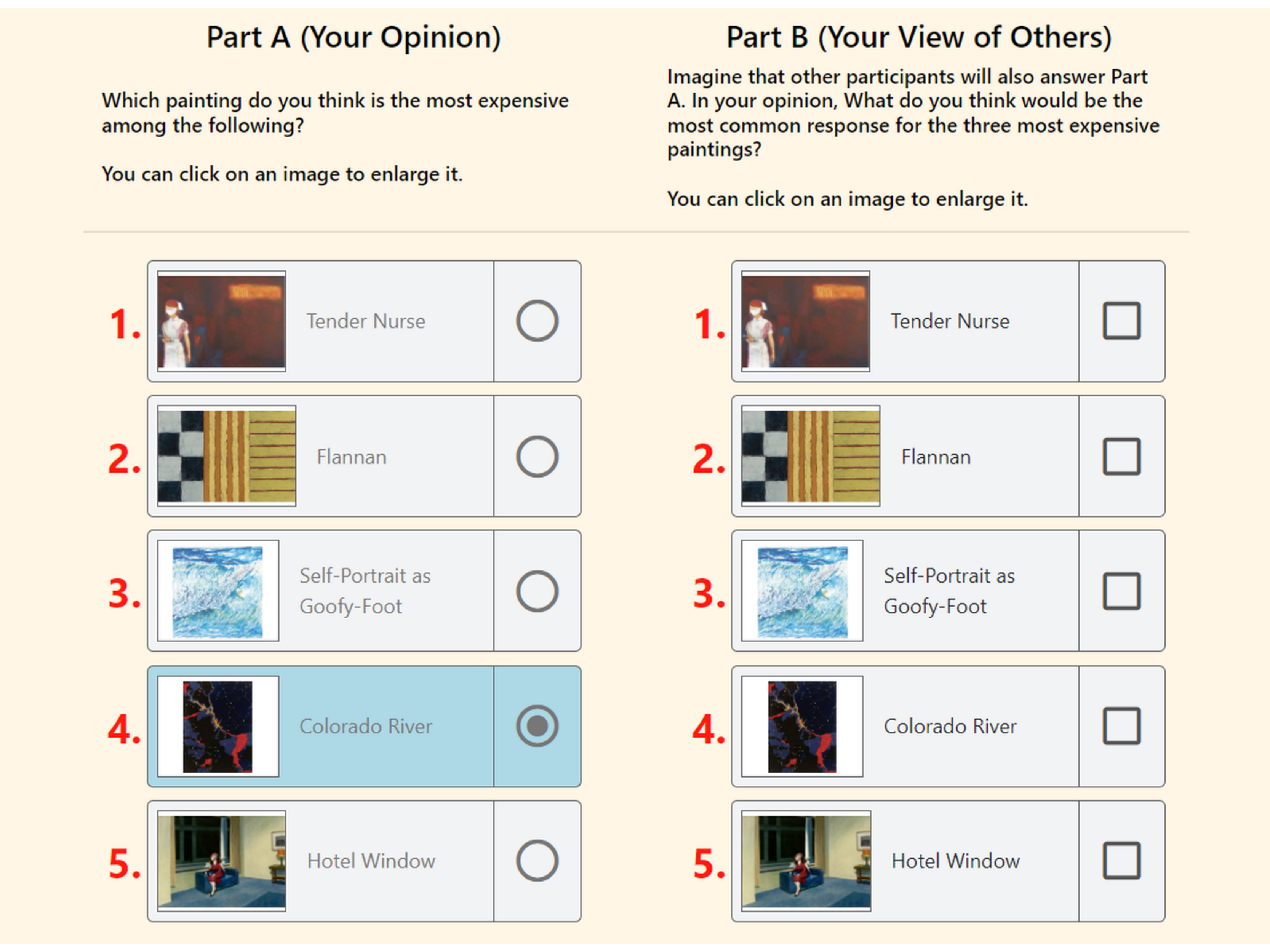}
    \end{subfigure}%
    \hfill  % This will insert a space between the figures
    \begin{subfigure}[b]{0.49\textwidth}
        \centering
        \includegraphics[width=\textwidth]{./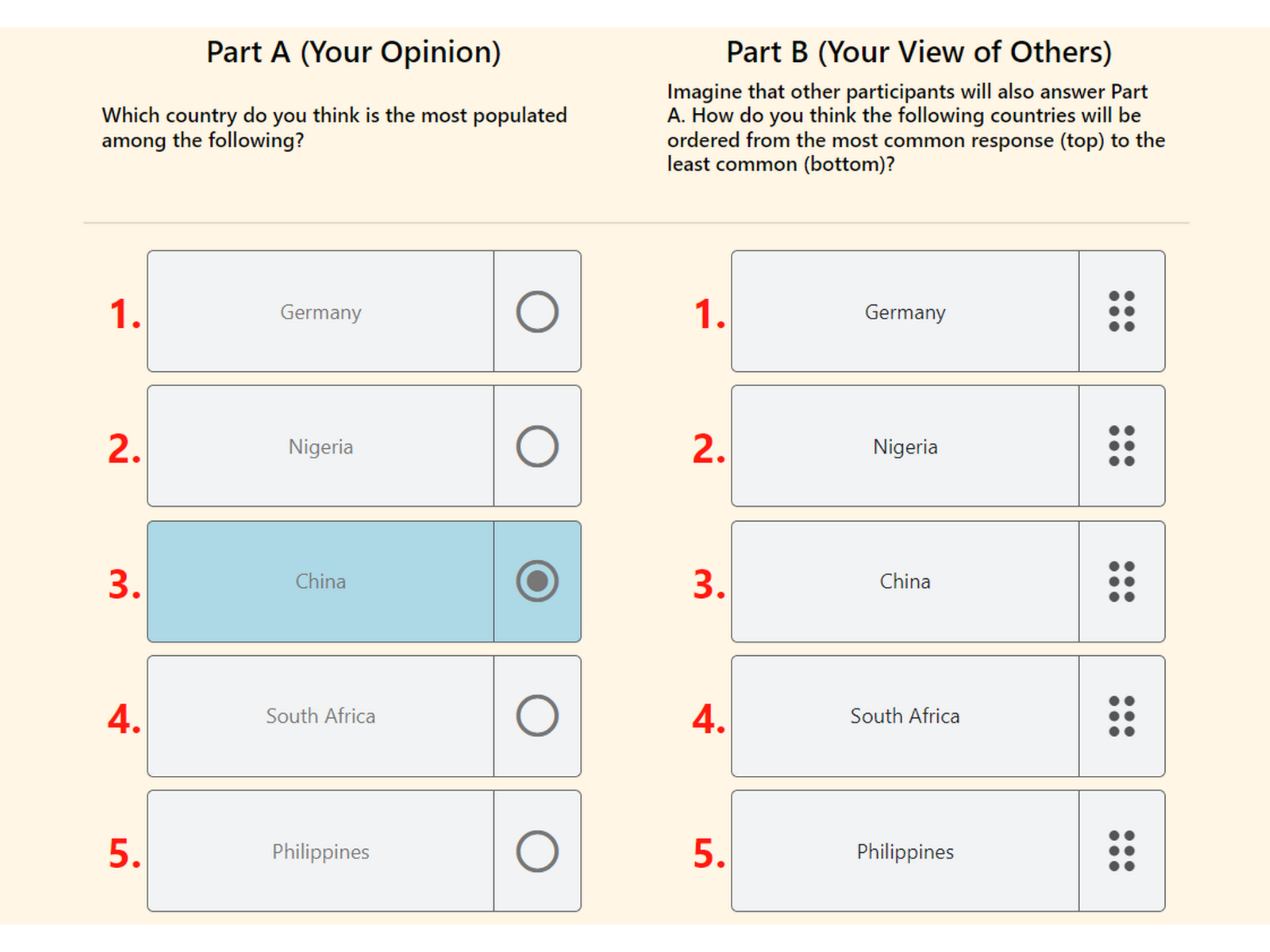}
    \end{subfigure}
    \caption{Questions for Top - Approval(3) and Top-Rank Elicitation Format}
    \label{fig:question2}
\end{figure}

\begin{figure}[!htbp]
    \centering
    \begin{subfigure}[b]{0.49\textwidth}  %
    \centering
        \includegraphics[width=\textwidth]{./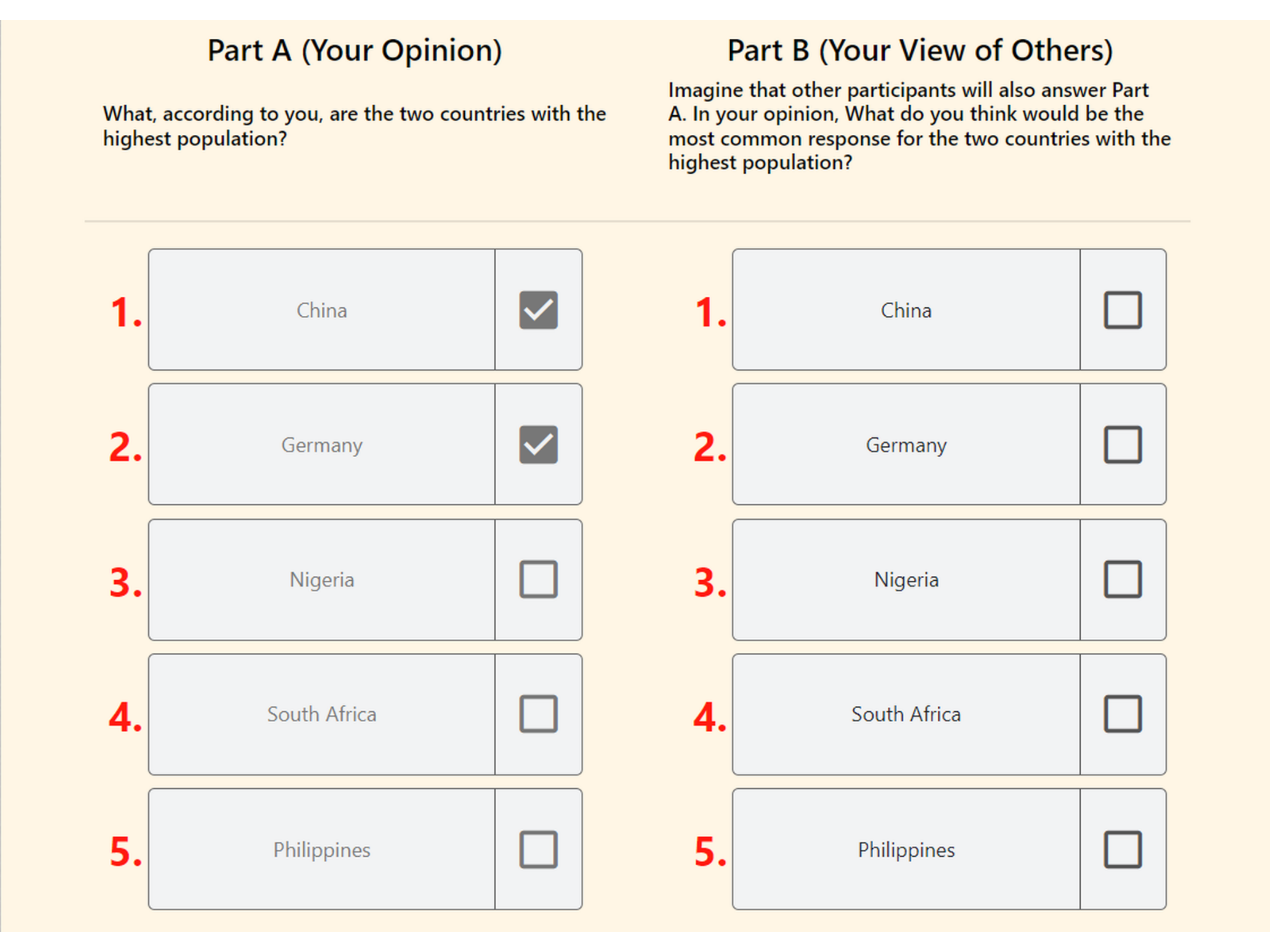}
    \end{subfigure}%
    \hfill  % This will insert a space between the figures
    \begin{subfigure}[b]{0.49\textwidth}
        \centering
        \includegraphics[width=\textwidth]{./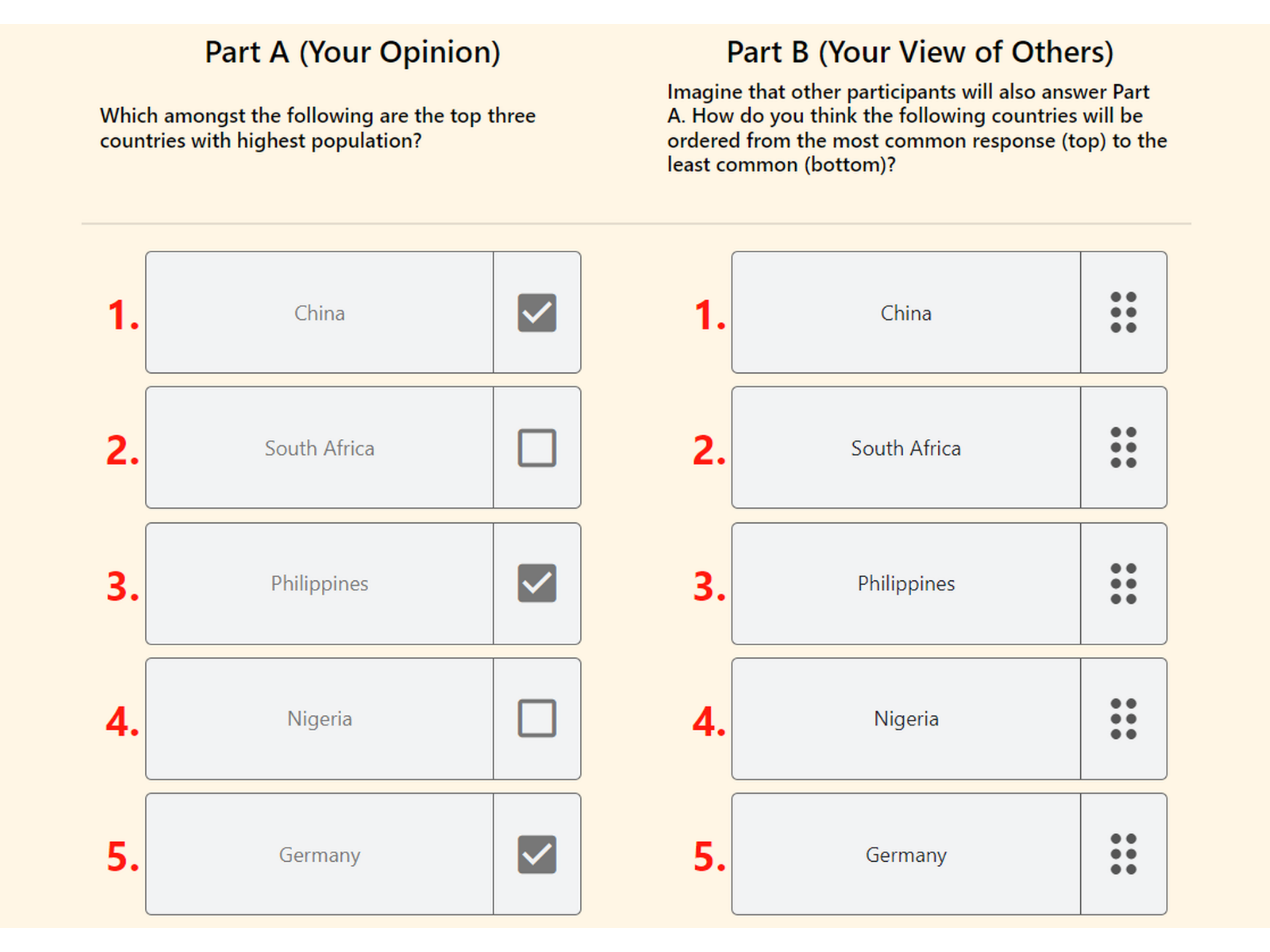}
    \end{subfigure}
    \caption{Questions for Approval(2) - Approval(2) and Approval(3) - Rank Elicitation Format}
    \label{fig:question3}
\end{figure}

\begin{figure}[!htbp]
    \centering
    \begin{subfigure}[b]{0.5\textwidth}  %        
        \centering
        \includegraphics[width=\textwidth]{./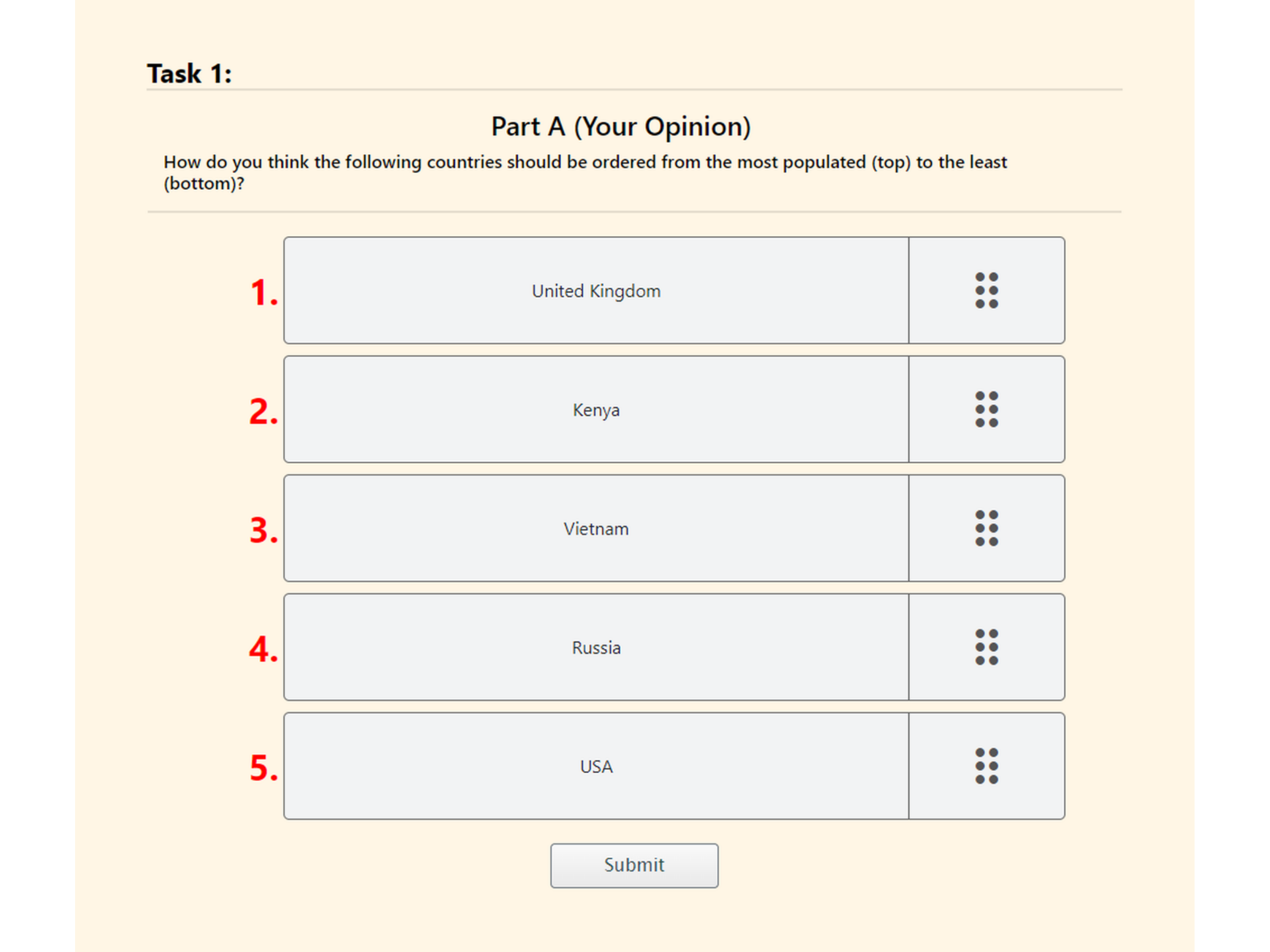}
    \end{subfigure}%
    \hfill  % This will insert a space between the figures
    \begin{subfigure}[b]{0.5\textwidth}
        \centering
        \includegraphics[width=\textwidth]{./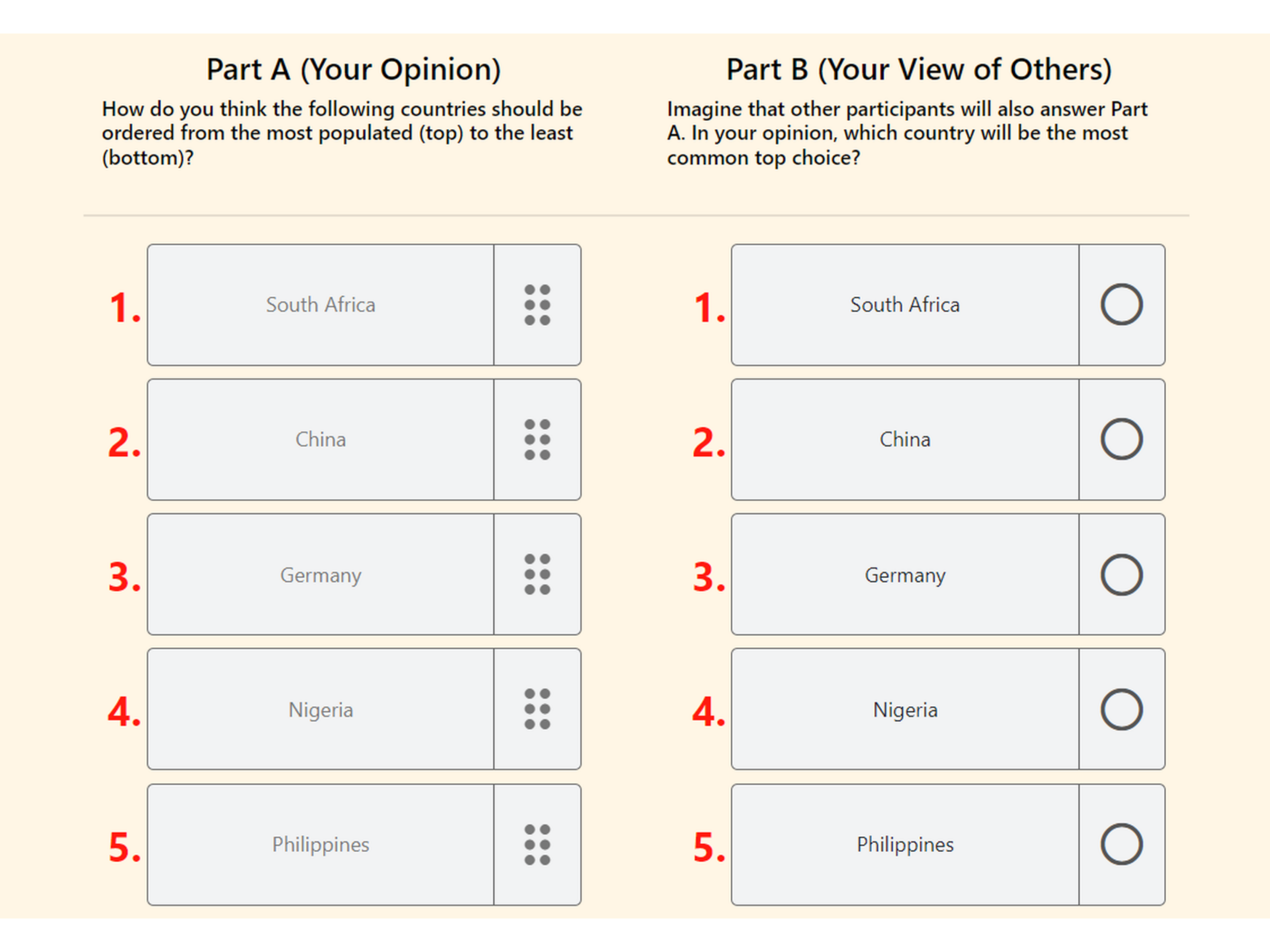}
    \end{subfigure}
    \caption{Questions for Rank-None and Rank-Top Elicitation Format}
    \label{fig:question4}
\end{figure}

\begin{figure}[!htbp]
    \centering
    \begin{subfigure}[b]{0.49\textwidth}
        \centering
        \includegraphics[width=\textwidth]{./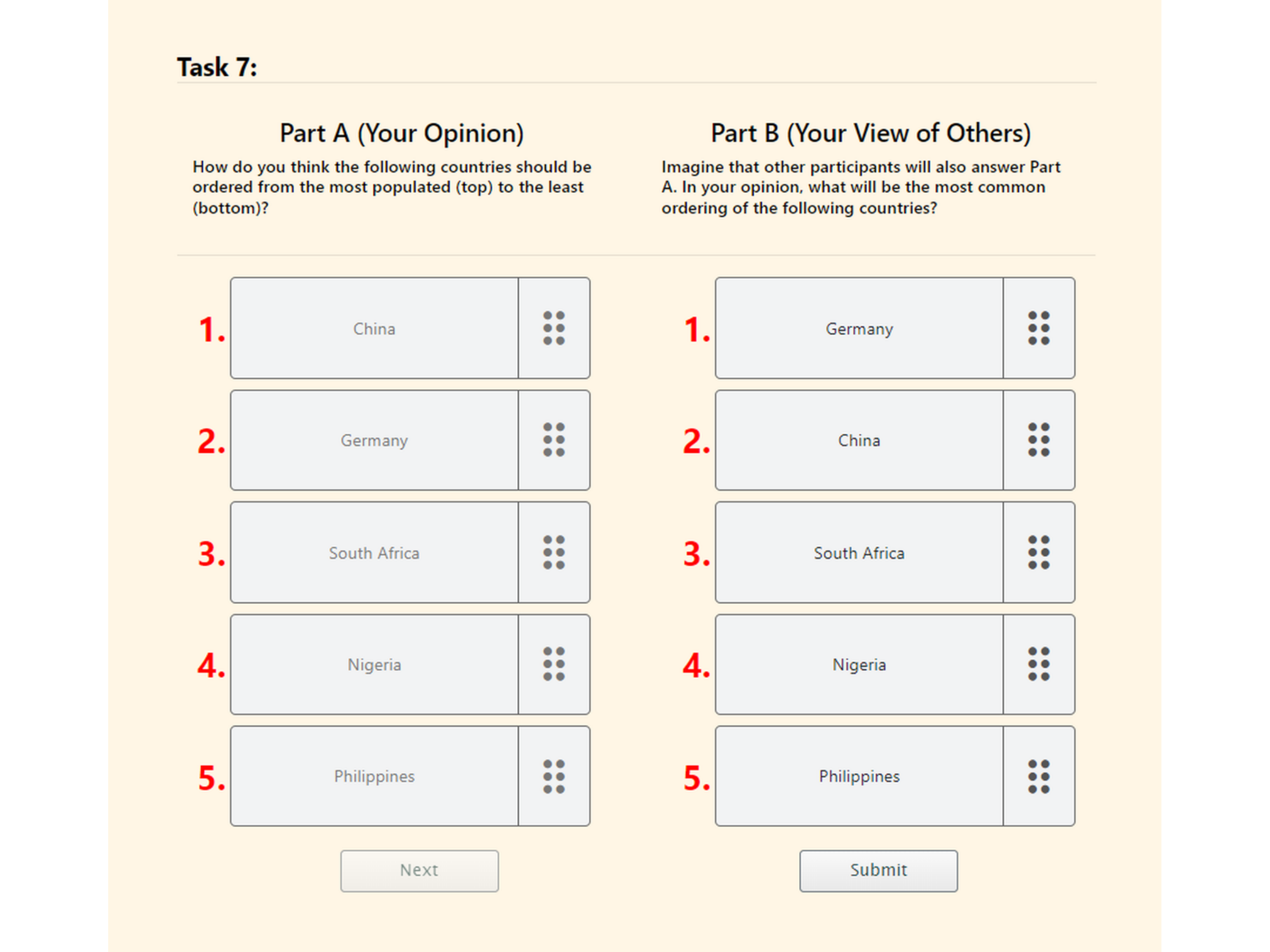}
    \end{subfigure}
    \hfill  % This will insert a space between the figures
    \begin{subfigure}[b]{0.49\textwidth}
        \centering
        \includegraphics[width=\textwidth]{./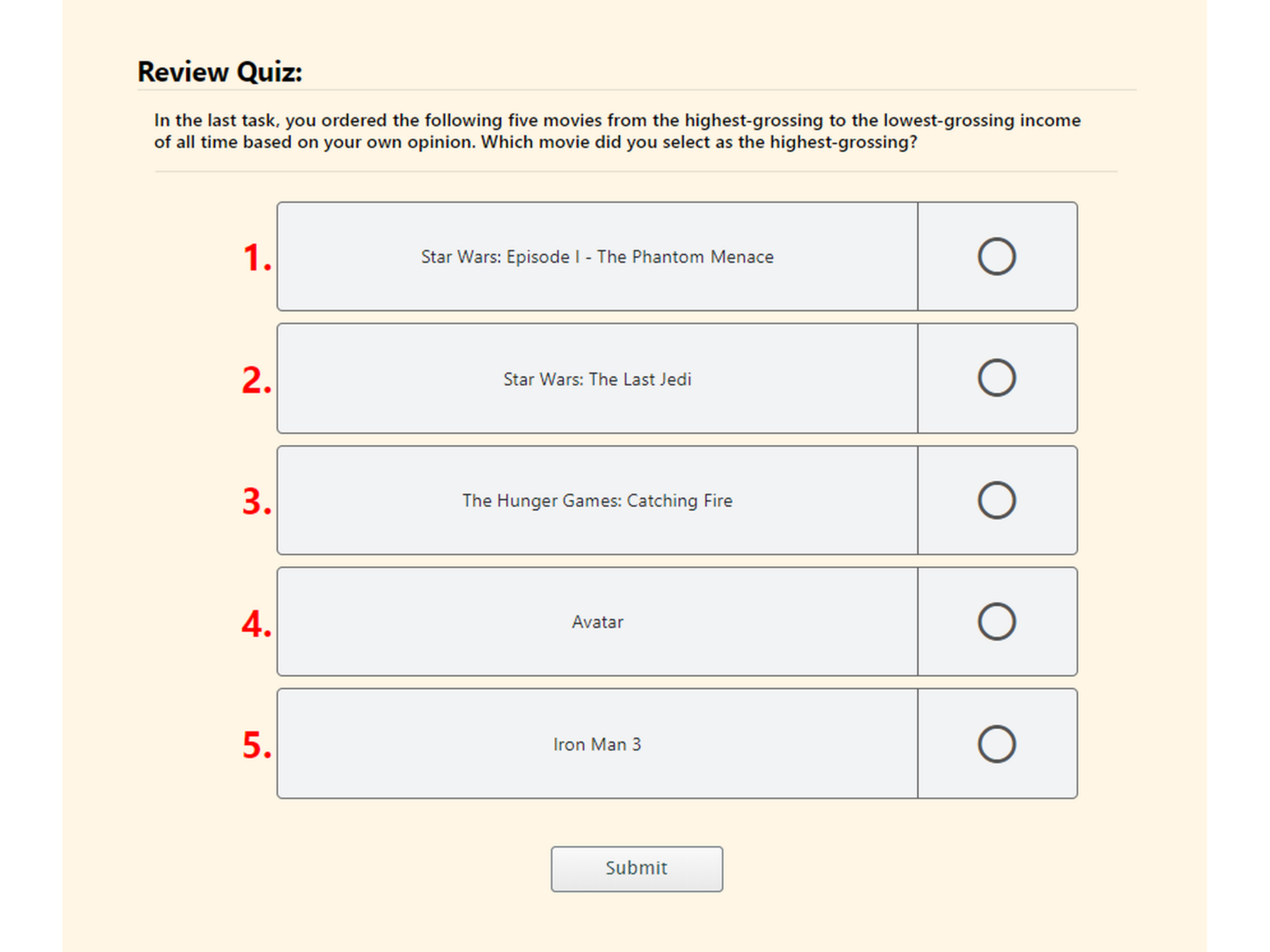}
    \end{subfigure}
    \caption{Question for Rank-Rank Elicitation Format and Survey Questions}
    \label{fig:rankrank_surveyquestions}
\end{figure}

\begin{figure}[!htbp]
    \begin{subfigure}[b]{0.5\textwidth}  %
        \centering
        \includegraphics[width=\textwidth]{./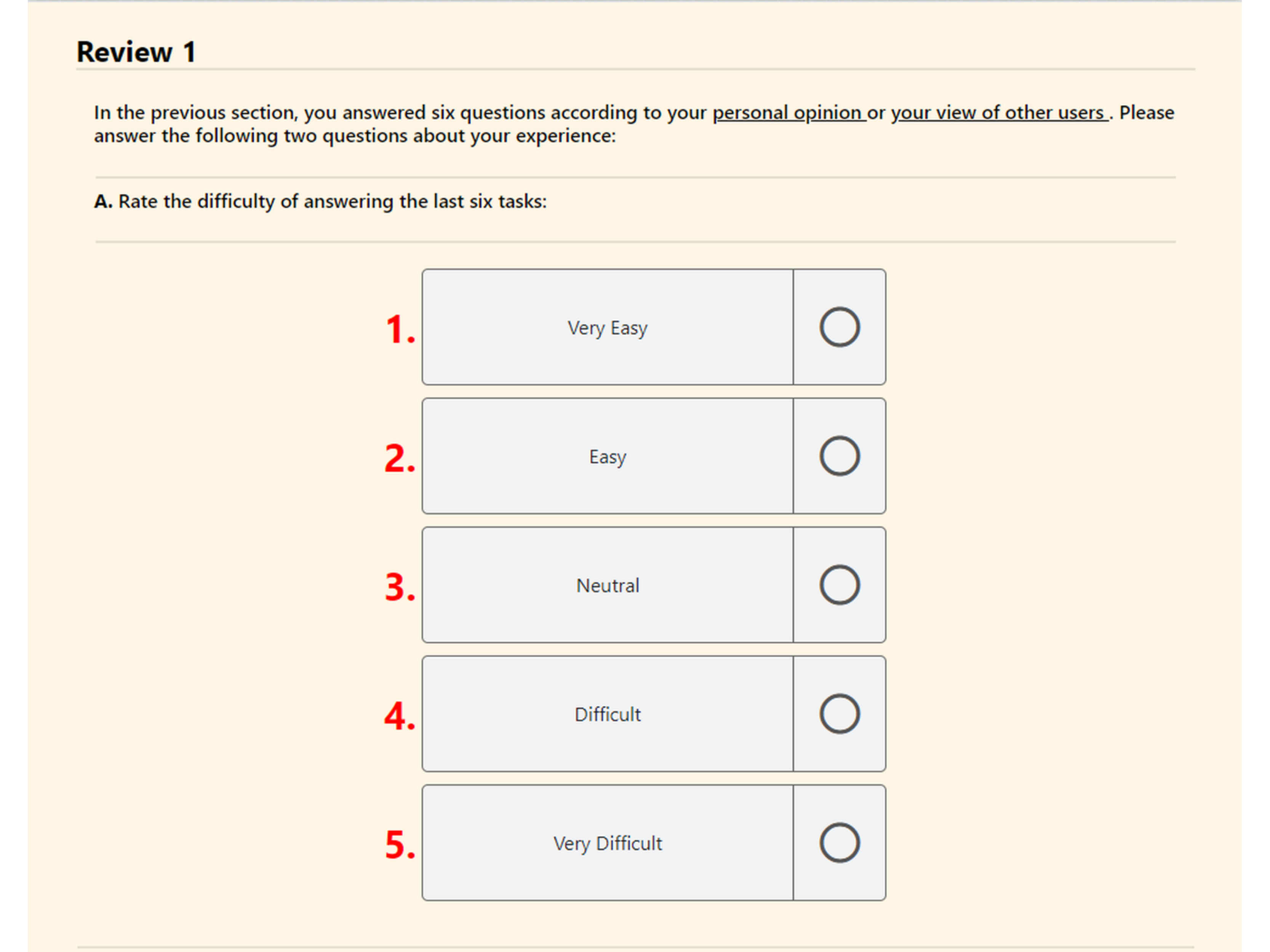}
    \end{subfigure}%
    \hfill  % This will insert a space between the figures
    \begin{subfigure}[b]{0.5\textwidth}
        \centering
        \includegraphics[width=\textwidth]{./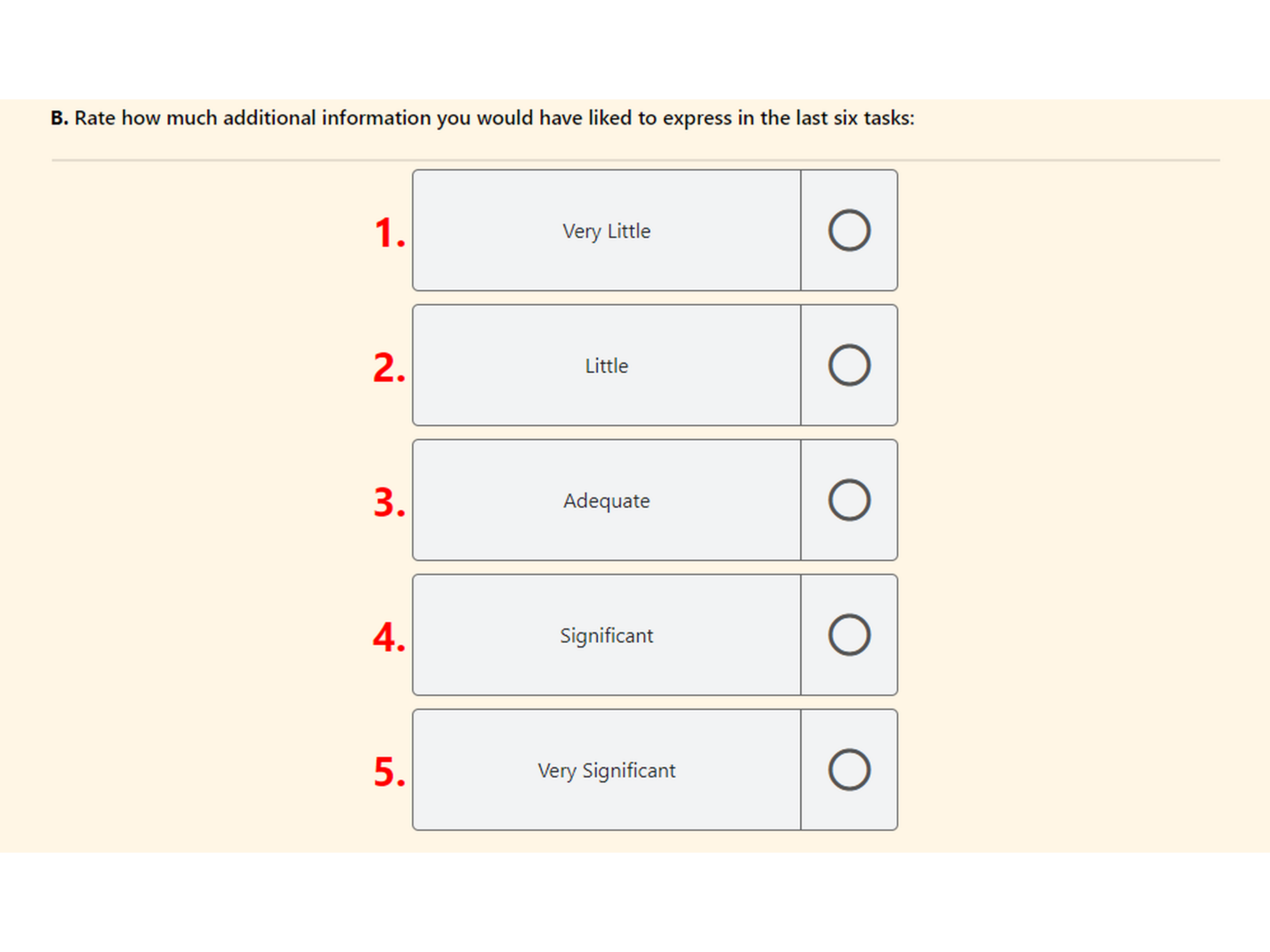}
    \end{subfigure}
    \caption{Difficulty and Expressiveness Questions}
    \label{fig:difficulty_expressivenessquestions}
\end{figure}

%\clearpage

%\part{Appendix}

\end{document}